\crefname{lemma}{Lemma}{Lemmas}
\crefname{theorem}{Theorem}{Theorems}
\newcommand{\R}{\mathbb{R}}
\newcommand*\diff{\mathop{}\!\mathrm{d}}
\DeclarePairedDelimiterX{\scalar}[2]{\langle}{\rangle}{#1, #2}
\newcommand{\A}{\mathcal{A}} 
\newcommand{\abs}[1]{\left\lvert #1 \right\rvert}
\newcommand{\NP}{\ensuremath{\mathsf{NP}}}
\newcommand{\new}[1]{#1}
\DeclareMathOperator{\VI}{VI}
\renewcommand{\l}{\ell}
\newcommand{\Time}{[0, \infty)}
\newcommand{\Flow}{\R_{\geq 0}}
\newcommand{\capa}{\nu}
\renewcommand{\epsilon}{\varepsilon}
\newcommand{\Sources}{S}
\newcommand{\source}{s}
\newcommand{\sink}{t}
\newcommand{\Paths}{\mathcal{P}}
\newcommand{\m}{m} 
\newcommand{\J}{J} 
\newcommand{\K}{K} 
\newcommand{\X}{X} 
\newcommand{\exG}{\bar G}
\newcommand{\exE}{\bar E}
\newcommand{\exV}{\bar V}
\newcommand{\exf}{\bar f}
\newcommand{\inrate}{r} 
\newcommand{\T}{T} 
\newcommand{\defemph}[1]{\emph{#1}}
\renewcommand\labelenumi{(\roman{enumi})}
\renewcommand\theenumi\labelenumi
\def\EMAIL#1{\href{mailto:#1}{#1}}
\def\URL#1{\href{#1}{#1}}         
\begin{document}


\RUNAUTHOR{Leon Sering}

\RUNTITLE{Multi-Commodity Nash Flows Over Time}

\TITLE{Multi-Commodity Nash Flows Over Time}

\ARTICLEAUTHORS{%
\AUTHOR{Leon Sering \footnote{Funded by the Deutsche Forschungsgemeinschaft (DFG, German Research Foundation) under Germany's Excellence Strategy – The Berlin Mathematics Research Center MATH+ (EXC-2046/1, project ID: 390685689).}}
\AFF{Institute of Mathematics, Technische Universität Berlin, Germany, \EMAIL{sering@math.tu-berlin.de}, \URL{https://www.coga.tu-berlin.de/people/sering}}
} 

\ABSTRACT{%
Motivated by the dynamic traffic assignment problem, we consider flows over time model with deterministic queuing.
Dynamic equilibria, called Nash flows over time, have been studied intensively since their introduction by Koch and
Skutella in 2009. Unfortunately, the original model can only handle a single commodity, i.e., all flow goes from one
origin to one destination, therefore, this theory is not applicable to every-day traffic scenarios. \new{For this reason we
consider Nash flows over time in a setting with multiple commodities, each with a individual origin and destination. We
prove their existence with the help of multi-commodity thin flows with resetting. This proof gives some interesting structural
insights into the strategy profiles underlying the equilibria, as they emphasis the inter-commodity dependencies.
Finally, we show that the two special cases of a common origin or a common destination can be reduced to the
single-commodity case.} }%


\KEYWORDS{Nash flows over time; existence of dynamic equilibria; multi-commodity; variational inequalities; network games; dynamic traffic assignment}
\MSCCLASS{Primary: 05C21, 91A13; secondary: 49J40, 90B20} 
\ORMSCLASS{Primary: Networks/graphs: Multicommodity, Games/group decisions:Non-atomic; secondary: Transportation: Road}

\maketitle

%


%
%
%

\section{Introduction} Understanding traffic is a challenging task as traffic dynamics are hard to predict and changes
in the network can have surprising  unintentional effects. For example the congestion in a network can increase when a
new road is added or a bottle-neck road is enlarged even though the demand and total traffic volume has not changed.
This famous effect is known as \emph{Braess Paradox}
\cite{braess1968paradoxon,braess2005paradox,roughgarden2001designing} and it was observed in New York (1990)
\cite{kolata1990if} and Seoul (2005) \cite{vidal2016seoul}. (In fact, the reverse effect was observed in these
instances: A closure of a road improved the traffic situation.) This emphases how important it is to understand the
complicated interplay of traffic users and to predict it as best as possible before applying costly and sometimes
irreversible changes to the infrastructure.

In order to understand the traffic dynamics mathematically, we need to formulate a strong traffic model, in which, on
the one hand, it is possible to provide provable statements and predictions and which, on the other hand, represents
real-world traffic in the best possible way. Of course, different aspects of traffic have been considered from the
mathematical perspective for a long time and there are several different approaches for modeling traffic
\cite{correa2004computational,dafermos1969traffic,may1990traffic,newell1955mathematical,wardrop1952road}. But most of
them are either too simplified, leading to large discrepancy to real-world traffic, or they are too complicated to prove
any results for larger networks. Only in recent years there was a significant scientific improvement by combining the
main concepts of previous static models with a continuous time approach \cite{koch2011nash}.

In almost every case the traffic is modeled as a network flow on a graph and the traffic assignment problem is to find a
traffic flow, for which every part that corresponds to a traffic user is traveling on a reasonable path from its origin
to its destination. In simpler models time is not taken into account but instead it is assumed that the flow represents
the average traffic on a constant demand. On the other side of the spectrum, we have models that aim to simulate the
dynamics on each road segment in great detail including acceleration, deceleration, reaction times and distances between
vehicles. In this article we consider a flow over time model that sits in the middle. It has a continuous time component
in order to represent the dynamic flow evolution, as some of the most interesting phenomena can only be observed in such
dynamic models. But at the same time we stay on a macroscopic level and assume a straightforward queuing model on each
link, which does not take detailed vehicle dynamics into account.

\begin{figure}[b]
\centering \includegraphics{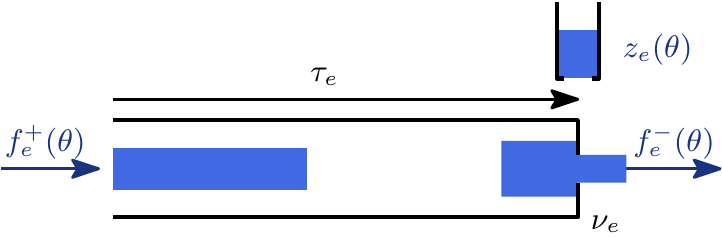} 
\caption{The deterministic queuing model. Every arc $e$ is equipped with a transit time $\tau_e$ and a capacity $\capa_e$. The total in- and outflow flow rate of flow entering and leaving arc $e$ are denoted by $f_e^+(\theta)$ and $f_e^-(\theta)$ for every point
in time $\theta \in \Time$. Additionally, we consider the queue size $z_e(\theta)$ at time $\theta$, as only a rate of
$\capa_e$ can leave the arc at any point in time and excessive flow has to wait in the queue.} \label{fig:road:base}
\end{figure}%

In the \emph{deterministic queuing model} the traffic is described by a continuous flow over time in a directed graph. Each
street segment corresponds to a directed arc between two nodes; see \Cref{fig:road:base}. Every arc $e$ is hereby
equipped with a free flow transit time $\tau_e$ and a capacity $\capa_e$. Every infinitesimal small flow particle first have to traverse the
arc, which takes $\tau_e$ time. If the flow rate that wants to leave the arc exceeds the rate capacity $\capa_e$ a point
queue is built up right before the head of the arc, where the excessive particles line up. Within the queue the particle follow
the first-in-first-out (FIFO) principle, which means that no particle can overtake any other particle and as long as
there is a positive queue the outflow rate operates at capacity rate.

\new{A dynamic equilibrium in the deterministic queuing model is a flow over time where almost all particles travel
along a fastest route to its destination. Since these so-called \emph{Nash flows over time} were introduced by Koch and
Skutella \cite{koch2011nash}, a quite active research line has been emerged (see related work section). But there has
always been one tremendous drawback of this model: Almost all of the structural insights and constructive results only
hold for a single commodity, which means that all flow particles, corresponding to the traffic users, share the same
origin and the same destination. There was some initial approaches to transfer the results to a multi-terminal setting
\cite{sering2018multiterminal}, where flow particles only share either the origin or the destination. But still, all the traffic
must basically go in the same direction. Clearly, this is highly unrealistic when considering daily real-world traffic
and there are only very few setting where one commodity can be justified (some application, for example, are big events
outside of town where everyone is going to, or evacuation scenarios where every safe spot can be connected to a
artificial super sink). This article is dedicated to extend the structure of Nash flows over time to a multi-commodity
setting, where each commodity has its individual origin/source and individual destination/sink.}

\new{So far this setting was only considered by Cominetti et al. in \cite{cominetti2015dynamic}, where they show the
existence of dynamic equilibria in a multi-commodity setting. But unfortunately a lot of details are missing in this
paper (e.g., it not clearly stated how the outflow rate depend on the inflow rates for each commodity) and since the
proof is done via a path-based formulation it does not give much insights into the structure of dynamic equilibria.
Nonetheless, this paper can be seen as a continuation of their work and the important part of this paper is based on the
techniques used within their existence proof.}
 
\begin{figure}[b]
\centering \includegraphics[page=1]{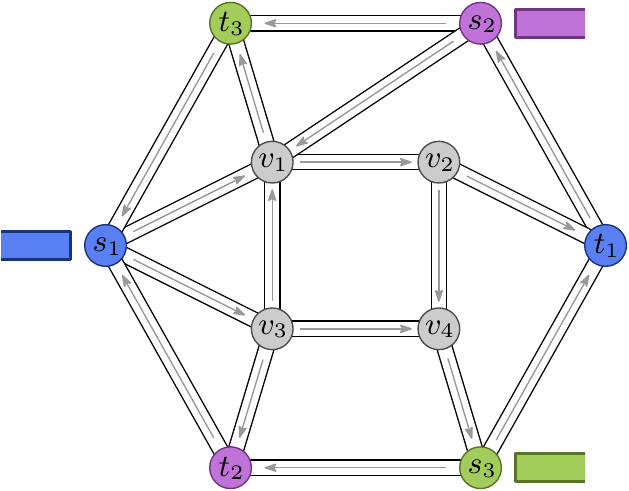} \qquad \quad
\includegraphics[page=2]{setting_multi_commodity} \caption{\emph{On the left:} A multi-commodity network with three
commodities. \emph{On the right:} The particles of each commodity can choose from at least two paths. Each path overlaps
with possible paths of other commodities. Hence, the waiting times a particle experiences on these links do not only
depend on the flow in front of the same commodity but also on the route choices of the flows of other commodities, which
may even enter the network at a later time than the particle itself. For example, the waiting time that particle $\phi =
0$ of commodity $1$ experiences on arc $\source_3\sink_1$, depends on the flow of commodity~$3$ entering at a later
point in time. Their decisions, however, depend on the congestion on arc $\source_1v_1$, which in turn might be
congested by flow of commodity $1$ that has entered later than particle $\phi$.}
\label{fig:setting:multi_commoidty}
\end{figure}
 
\paragraph{Multiple origin-destination-pairs.} In order to give a bit more details on the challenges of this problem, let us
consider a flow over time consisting of multiple commodities $\J$, each of them with its own
origin-destination-pair~$(\source_j, \sink_j)$ and its own network inflow rate $\inrate_j \geq 0$; see
\Cref{fig:setting:multi_commoidty}. At every origin~$\source_j$, a flow enters the network with rate~$\inrate_j$ during
some interval $I_j$ and every infinitesimally small particle of this flow has the goal to reach destination~$\sink_j$ as
early as possible, while considering the queuing delays on the paths. Every commodity is modeled by time-dependent in-
and outflow rates for every arc that must satisfy flow conservation at every node individually. Then, a dynamic
equilibrium consists of a multi-commodity flow over time, where each particle chooses a convex combination of fastest
routes from $\source_j$ to~$\sink_j$ as strategy. Unfortunately, the techniques used for single-commodity settings are
not sufficient for analyzing or algorithmically constructing such multi-commodity Nash flows over time. The fact that
each commodity has different earliest arrival times is the main difficulty as this causes cyclic interdependencies. Each
particle entering the network has to take into account not only all flow that previously entered the network, but also
flow entering the network in the future. This challenging situation is further specified in the example in
\Cref{fig:setting:multi_commoidty}.

\subsection{Related work.}
\paragraph{Dynamic traffic assignment.} Network loading models in time-dependent networks have been studied intensively
within the traffic science community and the several different approaches can be classified into three categories
depending on their level of detail. On the \defemph{macroscopic} level, traffic is modeled by a flow representing a
collection of vehicles. Some of the pioneer work in this regard is due to Vickrey with his single link-load model
\cite{vickrey1969congestion} and to Mechant and Nemhauser with an exit-function-based flow model
\cite{merchant1978model}. Recently, more advanced approaches like the Colombo phase transition model
\cite{blandin2011general,colombo2002} or the macroscopic node model \cite{flotterod2011operational,smits2015family} have
become popular. In contrast to this, \defemph{microscopic} traffic flow models consider each vehicle individually and
track not only the position, speed and acceleration of each car, but they also simulate maneuvers, like lane changes and
overtaking. Some of them even consider different driver behavior, such as gap-acceptance, reaction times and more. For
further details on this kind of models we refer to the comprehensive surveys of Algers et al.\ \cite{algers1997review}
and Olstam and Tapani \cite{olstam2004comparison}. Finally, \defemph{mesoscopic} models are in between these two. Some
aspects, such as the traffic dynamics, are considered at a low level of detail (macroscopic) whereas other aspects, such
as the agent behavior and the route choice for each traffic user, are considered individually (microscopic). This gives a
trade-off between accuracy and computational complexity. Examples of such models are DynaSMART
\cite{jayakrishnan1994evaluation} and the multi-agent transport simulation (MATSim) \cite{horni2016multi}. For a more
detailed overview of different dynamic traffic assignment models we refer to the book of Ran and Boyce \cite{ranboyce96}
and the survey article of Wang et al.\ \cite{wang2018dynamic}. 

\paragraph{Flows over time.} Classical network flows, or \defemph{static flows} as we call them in this article, have
been studied from a mathematical perspective since the middle of the last century. Ford and Fulkerson did a lot of
pioneer work on these structures and they also were the first to introduce \defemph{dynamic flows}, also called
\defemph{flows over time}, back in 1958 \cite{ford1958constructing,ford1962flows}. In a flow over time, every flow
particle travels over time through a network, and therefore, it is an excellent basis for a traffic model. Considering a
network with a single source and a single sink as well as a capacity and a transit time for each arc, Ford and Fulkerson
showed how to efficiently construct a \defemph{maximum flow over time} for some given time horizon. Hereby, a maximum
flow over time is a flow over time that sends as much flow volume as possible from the source to the sink within the
time horizon. Their algorithm is based on a static min-cost flow computation in the given network, where arc transit times
are interpreted as costs. The resulting static flow corresponds to the flow rates of a maximum flow over time
that needs to be sent into the network and along the paths as long as possible.

Closely related to the maximum flow over time problem is the \defemph{quickest flow problem}. Here, a specific amount of flow
volume is given and the task is to send all of it as quickly as possible to the sink. This can be solved efficiently by
using Ford's and Fulkerson's algorithm in combination with a binary search framework \cite{fleischer1998efficient}, which
can be improved to a strongly-polynomial running-time by using parametric search \cite{burkard1993quickest}.

Surprisingly, it is possible, at least for the single-source single-sink setting, to compute a flow over time that is
maximal for all time horizons simultaneously, and that is, therefore, also a quickest flow for all given flow volumes at
once. Such special flows are called \defemph{earliest arrival flows} and their existence was already shown by Gale in
1959 \cite{gale1959transient}. Just as for all other flow over time concepts, earliest arrival flows were first
considered in a discrete time model and, only in 1998, they were extended to a continuous time model by Fleischer and Tardos
\cite{fleischer1998efficient}. Minieka showed in 1973 that they can be computed by using the successive shortest path
algorithm, where it is also allowed to send negative flow backwards in time in the opposite direction of an arc
\cite{minieka1973maximal}.  Unfortunately, this might take an exponential number of iterations in general, which was
shown by Zadeh \cite{zadeh1973bad}, and Disser and Skutella showed that it is $\NP$-hard to compute them
\cite{disser2019simplex}. When considering networks with several sinks, it is possible that there does not exist an
earliest arrival flow anymore~\cite{fleischer2001faster}. In fact, it is $\NP$-hard to decide whether such a flow
exists or not in these networks \cite{schloeter2019earliest}.

\paragraph{Nash flows over time.} All flow over time problems discussed so far are based on the assumption that the
total flow is controlled by a central authority, which decides on the route and departure time of each single particle. In
real-world traffic situations, however, each traffic user acts independently and selfishly, and therefore, we have a
lack of coordination. To capture this behavior, we assume that each flow particle is an individual agent that wants to
reach its destination as early as possible, and hence, we consider flows over time from a game-theoretical perspective.
In this article we study dynamic equilibria, which are states where no particle can reach the destination earlier by
unilaterally changing its route. Hereby, the arc dynamics are described by the \defemph{deterministic queuing model},
which was first mentioned by Vickrey in 1969~\cite{vickrey1969congestion} and studied by Hendrickson and Kocur in
1981~\cite{hendrickson1981schedule}.

In 2009, Koch and Skutella characterized the structure of dynamic equilibria in a single-source single-sink network from
a strictly mathematical point of view~\cite{koch2011nash}; see also Koch's PhD thesis~\cite{koch2012phd}. As the most
essential structural insight, they prove that these equilibrium flows, called \defemph{Nash flows over time}, consist of
a number of phases, in which all flow entering the network chooses the same routes from the source to the sink. Each
phase is, thereby, characterized by the strategy of the particles in form of static flows featuring specific properties,
which they called \defemph{thin flows with resetting}. Based on this key observation, Cominetti, Correa and Larr\'e
showed existence and uniqueness of these thin flows with resetting, and thus, proved the existence of Nash flows over
time~\cite{cominetti2011existence}. They extended this existence result in 2015 to networks with general inflow rate
functions and also to a multi-commodity setting \cite{cominetti2015dynamic}. 
\new{This result is of particular importance for
this paper as we use similar techniques including infinite-dimensional variational inequalities and the
existence theorem of Br\'ezis \cite{brezis1968}. In collaboration with Skutella we showed
how to transfer the constructive approach from single-commodity Nash flows over time to a multi-commodity setting, where
each commodity either share the same origin or the same destination \cite{sering2018multiterminal}. How these approaches
are connected with the results of this paper are shown in \Cref{sec:common_destination_or_origin}.}

Moreover, Macko, Larson and Steskal showed the existence of the Braess Paradox in the single-commodity model~\cite{macko2013braess}, and
Cominetti, Correa and Olver examined the long-term behavior of queues and were able to bound their lengths whenever the
network capacity is sufficiently large~\cite{cominetti2017long}. \new{In a collaboration with Vargas Koch we
extended the model to represent spillback effects by assigning a storage capacity to each arc and
describing the flow dynamics for full arcs \cite{sering2019nash}}

The price of anarchy in this model measures the increase of the arrival times of the particles in a Nash flow over time compared to an earliest
arrival flow. Bhaskar, Fleischer and Anshelevich showed that this ratio can be bounded by $\frac{e}{e - 1}$ under some
very specific conditions on the network~\cite{bhaskar2015stackelberg}. Recently, Correa, Cristi and Oosterwijk
reduced these preconditions significantly \cite{correa2019price}. But the conjecture, that the price of anarchy is
$\frac{e}{e - 1}$ in general, remains open.

\subsection{Our contribution.} The biggest drawback of the Nash flow over time model was that we can only consider a
single-commodity, which corresponds to a dynamic traffic assignment, where all road users either start at the same
origin or everyone want to get to the same destination. To attack this flaw, we consider \emph{multi-commodity flows
over time}, where each commodity has its own origin-destination pair. Even though the existence of these dynamic
equilibria was known before, we show that they can still be characterized by some extended thin flow formulation, which
provides a much more structural, and in particular, edge-based proof for their existence. The key idea is to take  all
flow from the past and the future into consideration at once and to incorporate the flow of other commodities, called
\emph{foreign flow}, into the thin flow definition. It is then possible to prove that multi-commodity Nash flows over
time correspond one-to-one to multi-commodity thin flows. Finally, the existence of these thin flows can be shown by a
reformulation to a infinite-dimensional variational inequality and the existence theorem of Br\'ezis \cite[Theorem 24]{brezis1968}. As an additional
result we show that the multi-terminal Nash flows over time introduced in \cite{sering2018multiterminal} are
multi-commodity Nash flows over time, where all commodities either share the same origin but have different destinations
or have potentially different origins but share the same destination.

%

\subsection{Outline.}
\new{After specifying a proper
multi-commodity flow over time model in \Cref{sec:flow_dynamics:multi_commodity}, we define dynamic equilibria in this setting in \Cref{sec:nash_flows:multi-commodity}. Afterwards, in \Cref{sec:thin_flows:multi_commodity}, we consider
multi-commodity thin flows and prove that the derivatives of multi-commodity Nash flows over time satisfy these flow
conditions. As the main result, we prove the existence of multi-commodity Nash flows over time in \Cref{sec:existence_of_multi_commodity_nash_flows:multi} using the multi-commodity thin flows.
In \Cref{sec:common_destination_or_origin} we recall the multi-terminal Nash flows over time from \cite{sering2018multiterminal} and show that they are indeed satisfy our definition for multi-commodity Nash flows over time. Finally, in \Cref{sec:conclusion}, we give a brief conclusion and some remaining open problem for further research.}


\section{Notation} \label{sec:notation:prelim} In order to make this article as comprehensible as possible we try to
keep the notation intuitive and consistent. Throughout this paper $f$ is used for a flow over time consisting of
functions representing flow rates that are (locally) integrable but in general not differentiable. For the integral
functions, which are almost everywhere differentiable, we use the corresponding capital letter $F$. Graphs are denoted
by $G = (V, E)$ with source $\source$ and sink $\sink$ and the letters $u, v, w$ are reserved for nodes, whereas~$e$ is
used for arcs. The non-negative real numbers are normally denoted by $[0, \infty)$. This notation is used whenever we
denote flow rates or points in time (denoted by $\theta$ or $\vartheta$) and only if we consider the flow of a
commodity, consisting of infinitesimal small particles denoted by $\phi$ or $\varphi$, we write~$\Flow$. The set of
commodities is given by $\J$ and the individual commodities are denoted by either $j$ or $i$. Finally, $x$, $y$ and $\l$
are reserved for the underlying static flow, the underlying foreign flow and the earliest arrival times. Their
derivatives are denoted by $x'$, $y'$ and $\l'$.

\section{Flow dynamics.} \label{sec:flow_dynamics:multi_commodity} For our purposes a network consists of a directed graph $G = (V,
E)$, where each arc~$e$ is equipped with a transit time~$\tau_e \geq 0$ and a capacity~$\capa_e > 0$. In addition,
we have given a finite set of commodities $\J$, each of them equipped with an origin-destination-pair $(s_j, t_j) \in V^2$
and with a network inflow rate $\inrate_j > 0$ as well as a finite time interval $I_j = [a_j, b_j) \subset [0, \infty)$. We assume that there exists at least one
$\source_j$-$\sink_j$-path for every $j \in \J$.

\paragraph{Multi-commodity flows over time.} For a multi-commodity flow over time we consider a family of locally
integrable and bounded functions $f = (f_{j, e}^+, f_{j, e}^-)_{j \in \J, e \in E}$ where $f_{j, e}^+(\theta)$ denotes
the \defemph{inflow rate} of commodity $j$ into arc $e$ at time $\theta$ and $f_{j, e}^-(\theta)$ denotes the respective
\defemph{outflow rate}. The \defemph{cumulative in- and outflow} for each commodity $j$ and each arc $e$ is defined by
\[F^+_{j,e}(\theta) \coloneqq \int_{0}^{\theta} f_{j,e}^+(\xi) \diff \xi \quad \text{ and } 
\quad F^-_{j, e}(\theta) \coloneqq \int_{0}^{\theta} f_{j, e}^-(\xi) \diff \xi,\]
and the \defemph{total (cumulative) in- and outflow rates} at each point in time $\theta$ are given by 
\[f_e^+(\theta) \coloneqq \sum_{j \in \J} f_{j,e}^+(\theta), \quad f_e^-(\theta) 
\coloneqq \sum_{j \in \J} f_{j, e}^-(\theta), \quad F_e^+(\theta) \coloneqq \sum_{j \in \J} F_{j,e}^+(\theta) 
\quad \text{ and } \quad F_e^-(\theta) \coloneqq \sum_{j \in \J} F_{j, e}^-(\theta).\]
 
\paragraph{Flow conservation.} We say $f$ is a \defemph{multi-commodity flow over time} if every commodity $j \in \J$
conserves flow on every arc~$e$:
\begin{equation} \label{eq:conservation_on_arc:multi_commodity}
F_{j,e}^-(\theta + \tau_e) \leq F_{j,e}^+(\theta) \qquad \text{ for all } \theta \in \Time,
\end{equation}
and conserves flow at every node $v \in V \setminus \set{\sink_j}$:
 \begin{equation} \label{eq:flow_conservation:multi_commodity}
 \sum_{e\in \delta^+_v} f_{j, e}^+(\theta) - \sum_{e \in \delta^-_v} f_{j, e}^-(\theta) = \begin{cases}
 0 & \text{ if } v \in V\setminus \set{\source_j} \text{ or } \theta \notin I_j \\
 r_j & \text{ if } v = \source_j \text{ and } \theta \in I_j,
 \end{cases} \quad \text{ for almost all } \theta \in \Time.\end{equation}
 
 Particles entering arc $e$ at time $\theta$ need $\tau_e$ time to traverse the arc, and therefore, they reach the head
 of $e$ earliest at time $\theta + \tau_e$. Hence, \eqref{eq:conservation_on_arc:multi_commodity} ensures that the
 amount of flow that leaves $e$ cannot exceed the amount of flow that has entered and traversed the arc. In other words,
 flow is not created within arcs. In addition, \eqref{eq:flow_conservation:multi_commodity} ensures that the network
 does not leak at intermediate nodes and that we have a constant network inflow rate of $\inrate_j$ at $\source_j$
 during $I_j$. Note that flow can be stored within arcs but never on nodes.
  
 \paragraph{Queues.} For every arc~$e$ there is a bottleneck given by its capacity~$\capa_e$ right before the head of
 the arc.  If more flow wants to leave~$e$ than the capacity allows, a queue builds up at the exit of the arc. The
 amount of total flow in the queue at time~$\theta$ is given by
 \[z_e(\theta) \coloneqq F_e^+(\theta - \tau_e) - F_e^-(\theta).\] 
 The queue does not have any physical dimension in the network and is therefore called \defemph{point queue}.
 
\paragraph{Feasibility.} 
A multi-commodity flow over time $f$ is \defemph{feasible} if the total outflow rate follows the same flow dynamic as for a feasible single-commodity flow over time, i.e., if
  \begin{equation}\label{eq:totaloutflow:multi_commodity}
  f_e^-(\theta) = \begin{cases}
  \capa_e & \text{ if } z_e(\theta) > 0, \\
  \min\Set{f_e^+(\theta - \tau_e), \capa_e} & \text{ if } z_e(\theta) = 0,
  \end{cases} \qquad \text{ for almost all } \theta \in \Time.
  \end{equation}
 
Furthermore, the amount of flow of a commodity $j$ that leaves an arc $e$ at time $\theta$ is determined by its fraction
of the total inflow rate at time $\vartheta$ when this particle entered the arc. In other words,
\begin{equation} \label{eq:FIFO:multi_commodity}
f^-_{j, e} (\theta) = \begin{cases}
f^-_e (\theta) \cdot \frac{f^+_{j, e}(\vartheta)}{f^+_e(\vartheta)} & \text{ if }  f^+_e(\vartheta) > 0,\\
0 & \text{ else,}
\end{cases}
\end{equation}
where $\vartheta = \min\set{\xi\leq \theta | T_e(\xi)=\theta}$ is the earliest point in time a particle can enter arc
$e$ in order to leave it at time $\theta$. This equation ensures that arcs preserve the proportion of commodities within
the flow as depicted in \Cref{fig:fifo:multi_commodity}. In particular, queues follow the first-in-first-out (FIFO)
principle, which means that particles cannot overtake others within the queues.

\begin{figure}[t]
\centering \includegraphics{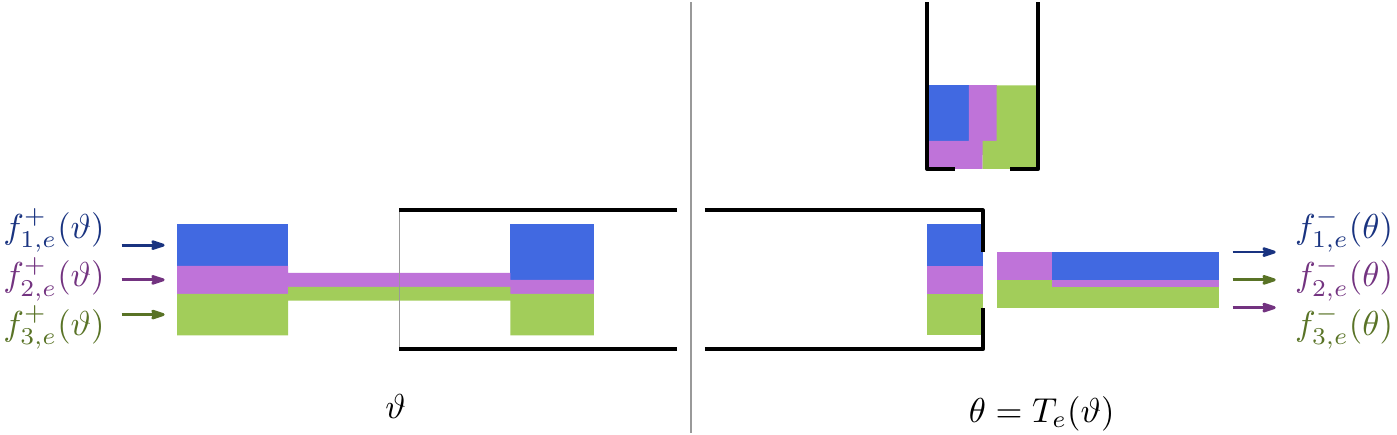} \caption{Inflow at time $\vartheta$ (\emph{left side}) and outflow at
time $\theta = T_e(\vartheta)$ (\emph{right side}) of three commodities. We require the flow to merge perfectly, which
means that the proportions of each commodity are conserved on an arc even if the flow is stretched or compressed.}
\label{fig:fifo:multi_commodity}
\end{figure}

\paragraph{Waiting times.} Given a feasible multi-commodity flow over time $f$, the \defemph{waiting
time}~$q_e \colon \Time \to \Time$ is given by
\[q_e(\theta) \coloneqq \frac{z_e(\theta+\tau_e)}{\capa_e}.\]
Note that $q_e(\theta)$ denotes the queue waiting time of particles that enter the \emph{arc} at time $\theta$, and therefore,
they enter the queue only at time $\theta + \tau_e$. Hence, the actual waiting period of those particles is given by
$[\theta + \tau_e, \theta + \tau_e + q_e(\theta)]$.

\paragraph{Exit times.} The \defemph{exit time} for arc $e$ is the function $\T_e \colon \Time \to \Time$ that maps the
entrance time~$\theta$ to the time a particle leaves the arc
\[\T_e(\theta) \coloneqq \theta + \tau_e + q_e(\theta).\]


Note that the total flow over time $(f_e^+, f_e^-)$ is a feasible single-commodity flow over time with respect to the
Koch-Skutella-model. Therefore, the following properties from the single-commodity model also holds for the total flow of a
multi-commodity flow over time.

\begin{restatable}{lemma}{technicalpropertiesbase} \label{lem:technical_properties:base}
    For a feasible flow over time $f$ it holds for all $e \in E$, $v \in V$ and~$\theta \in \Time$ that:
    \begin{enumerate}\leftskip=5mm
      \item $q_e(\theta) > 0 \;\; \Leftrightarrow \;\; z_e(\theta+\tau_e) > 0$. \label{it:q_equiv_z:base}
      \item $z_e(\theta +\tau_e + \xi) > 0$ \quad for all $\xi \in [0, q_e(\theta))$. \label{it:positive_queue_while_emptying:base}
      \item $F^+_e(\theta)=F^-_e(T_e(\theta))$. \label{it:in_equals_out_at_exit_time:base}
      \item For $\theta_1 < \theta_2$ with $F_e^+(\theta_2) - F_e^+(\theta_1) = 0$ and $z_e(\theta_2 + \tau_e)>0$  
      we have $T_e(\theta_1)=T_e(\theta_2)$. \label{it:equal_exit_times:base} 
      \item The functions $T_e$ are monotonically increasing. \label{it:T_monoton:base}
     \item The functions $F_e^+$, $F_e^-$, $z_e$, $q_e$ and $T_e$ are almost everywhere differentiable. \label{it:q_is_diffbar:base}
     \item For almost all $\theta \in \Time$ we have
     \[q'_e(\theta) = \begin{cases}
     \frac{f_e^+(\theta)}{\capa_e} - 1 & \text{ if } q_e(\theta) > 0,\\
     \max\Set{\frac{f_e^+(\theta)}{\capa_e} - 1, 0} & \text{ else.}
     \end{cases}\] \label{it:derivative_of_q:base}
    \end{enumerate}
\end{restatable}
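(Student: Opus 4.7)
The plan is to observe that every claim concerns only the \emph{total} in- and outflow $(f_e^+, f_e^-)$, which by construction solves the standard single-commodity Vickrey queue (equation~\eqref{eq:totaloutflow:multi_commodity}), so the argument proceeds exactly as in the Koch--Skutella model. The dependencies among the items are: (i) and (v) are immediate from the definitions, (ii) is the one genuinely substantive step, and (iii), (iv), (vi), (vii) follow from (ii) together with standard properties of Lebesgue integrals.

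I would start with the trivial items. Part (i) is simply $q_e(\theta)=z_e(\theta+\tau_e)/\capa_e$ and $\capa_e>0$. For (v), from the definition $T_e(\theta)=\theta+\tau_e+q_e(\theta)$ and the fact that the queue can drain no faster than $\capa_e$, so $z_e$ decreases at rate at most $\capa_e$; hence $q_e$ decreases at rate at most $1$, so $\theta\mapsto \theta+q_e(\theta)$ is non-decreasing. Part (vi) is immediate because $F_e^+$, $F_e^-$ are integrals of locally integrable functions (and hence absolutely continuous), $z_e$ and $q_e$ are affine combinations of such functions at shifted arguments, and $T_e$ differs from the identity by $\tau_e+q_e$.

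The crux is (ii). The plan is to proceed by contradiction: suppose $q_e(\theta)>0$ but there is a smallest $\xi_0\in(0,q_e(\theta))$ with $z_e(\theta+\tau_e+\xi_0)=0$. On the interval $[\theta+\tau_e,\theta+\tau_e+\xi_0]$ the queue is strictly positive by minimality, so \eqref{eq:totaloutflow:multi_commodity} forces $f_e^-\equiv\capa_e$ there. Integrating, $F_e^-(\theta+\tau_e+\xi_0)-F_e^-(\theta+\tau_e)=\capa_e\xi_0$, while $F_e^+((\theta+\xi_0))-F_e^+(\theta)\geq 0$. Using $z_e(\theta+\tau_e)=\capa_e q_e(\theta)$ and writing $z_e(\theta+\tau_e+\xi_0)$ in terms of $F_e^+, F_e^-$, one obtains $0=z_e(\theta+\tau_e+\xi_0)=\capa_e(q_e(\theta)-\xi_0)+[F_e^+(\theta+\xi_0)-F_e^+(\theta)]\geq \capa_e(q_e(\theta)-\xi_0)>0$, contradiction. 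Given (ii), part (iii) follows at once: on $[\theta+\tau_e,T_e(\theta))$ we have $z_e>0$, so the outflow rate is $\capa_e$; integrating yields $F_e^-(T_e(\theta))=F_e^-(\theta+\tau_e)+\capa_e q_e(\theta)=F_e^-(\theta+\tau_e)+z_e(\theta+\tau_e)=F_e^+(\theta)$.

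Part (iv) then combines (iii) and (ii): from (iii), $F_e^-(T_e(\theta_1))=F_e^+(\theta_1)=F_e^+(\theta_2)=F_e^-(T_e(\theta_2))$, so the monotonically increasing $F_e^-$ is constant on $[T_e(\theta_1),T_e(\theta_2)]$; but by (ii) applied at $\theta_2$, the queue is strictly positive on $[\theta_2+\tau_e,T_e(\theta_2))$, forcing $f_e^-=\capa_e>0$ on the intersection of this interval with $[T_e(\theta_1),T_e(\theta_2)]$, which has positive length unless $T_e(\theta_1)=T_e(\theta_2)$. Finally, for (vii), at a differentiability point with $q_e(\theta)>0$, part (ii) gives $q_e(\theta+\varepsilon)=q_e(\theta)+[F_e^+(\theta+\varepsilon)-F_e^+(\theta)]/\capa_e-\varepsilon$ for small $\varepsilon$, yielding $q'_e(\theta)=f_e^+(\theta)/\capa_e-1$; when $q_e(\theta)=0$, either $f_e^+(\theta)\leq\capa_e$, in which case the queue remains empty and $q'_e(\theta)=0$, or $f_e^+(\theta)>\capa_e$, in which case the same computation as above applies in a right neighborhood and gives $q'_e(\theta)=f_e^+(\theta)/\capa_e-1>0$. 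The only real obstacle is keeping the argument for (ii) rigorous around the first zero of $z_e$, which is handled by the minimality of $\xi_0$ together with the absolute continuity from (vi).
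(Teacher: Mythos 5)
Your proposal is correct and matches the paper's proof in all essentials: every item is reduced to the total flow via \eqref{eq:totaloutflow:multi_commodity}, item (vi) comes from Lebesgue differentiation of locally integrable rates, and (iii), (iv), (vii) rest on the facts that $f_e^-=\capa_e$ while the queue is positive and $f_e^-\leq\capa_e$ always. The only (purely stylistic) difference is that you prove (ii) and (iv) by contradiction, locating a first zero of $z_e$ resp.\ contrasting constancy of $F_e^-$ with forced outflow at capacity, whereas the paper obtains both by the same one-line direct estimates.
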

Most of the statements follow immediately from the definitions and some involve some minor calculations. For
\ref{it:q_is_diffbar:base} we use Lebesgue's differentiation theorem. As the proof does not give any interesting further insights we moved it to the
appendix on page \pageref{proof:technical_properties:base}.
  
Additionally, the follow property holds for every commodity
separately.

\begin{restatable}{lemma}{fifomulticommodity}\label{lem:fifo:multi_commodity}
For a feasible multi-commodity flow over time $f$ we have for every arc $e \in E$, every commodity $j \in \J$ and all
$\theta \in \Time$ that
\[F_{j, e}^+(\theta) = F_{j, e}^-(\T_e(\theta)).\]
\end{restatable}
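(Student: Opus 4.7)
The plan is to prove that both $\theta \mapsto F_{j,e}^+(\theta)$ and $\theta \mapsto F_{j,e}^-(\T_e(\theta))$ are absolutely continuous functions agreeing at $\theta = 0$ and having the same derivative almost everywhere; the identity then follows from the fundamental theorem of calculus. The left-hand side is AC as the indefinite integral of the locally bounded $f_{j,e}^+$. For the right-hand side, \Cref{lem:technical_properties:base}\ref{it:derivative_of_q:base} gives $\T_e'(\theta) = 1 + q_e'(\theta) \le 1 + f_e^+(\theta)/\capa_e$ a.e., so $\T_e$ is locally Lipschitz, and composing the AC function $F_{j,e}^-$ with it yields a locally AC function. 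At $\theta = 0$ no flow has yet entered $e$, so $F_{j,e}^+(0) = 0$ and $z_e(\tau_e) = 0$, whence $q_e(0) = 0$, $\T_e(0) = \tau_e$, and $F_{j,e}^-(\tau_e) = 0$; both sides start at $0$.

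For the derivative comparison I would apply the chain rule at a.e.\ $\theta$ to obtain
\[\frac{d}{d\theta} F_{j,e}^-(\T_e(\theta)) = f_{j,e}^-(\T_e(\theta)) \cdot \T_e'(\theta),\]
and split on whether $f_e^+(\theta)$ vanishes. If $f_e^+(\theta) > 0$, then \ref{it:derivative_of_q:base} forces $\T_e'(\theta) > 0$, so (at almost every such $\theta$) no plateau of $\T_e$ ends at $\theta$, and consequently the minimizer $\vartheta$ appearing in \eqref{eq:FIFO:multi_commodity} taken at time $\T_e(\theta)$ equals $\theta$ itself. Differentiating the single-commodity identity in \Cref{lem:technical_properties:base}\ref{it:in_equals_out_at_exit_time:base} gives $f_e^+(\theta) = f_e^-(\T_e(\theta)) \cdot \T_e'(\theta)$, which combined with \eqref{eq:FIFO:multi_commodity} yields $f_{j,e}^-(\T_e(\theta)) \cdot \T_e'(\theta) = f_e^+(\theta) \cdot f_{j,e}^+(\theta)/f_e^+(\theta) = f_{j,e}^+(\theta)$. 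If instead $f_e^+(\theta) = 0$, then $f_{j,e}^+(\theta) = 0$ by non-negativity, and the same differentiated identity forces $f_e^-(\T_e(\theta)) \cdot \T_e'(\theta) = 0$; using $f_{j,e}^- \le f_e^-$, both subcases collapse to $f_{j,e}^-(\T_e(\theta)) \cdot \T_e'(\theta) = 0 = f_{j,e}^+(\theta)$.

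The main technical obstacle is invoking \eqref{eq:FIFO:multi_commodity} with $\vartheta = \theta$ when $\T_e$ may possess plateau intervals: pointwise positivity of $\T_e'$ at a single point does not by itself imply local strict monotonicity. I would dispatch this by the standard observation that on any non-degenerate plateau $[\vartheta',\theta']$ of $\T_e$ the derivative $\T_e'$ vanishes a.e., so the set of $\theta$ with $\T_e'(\theta) > 0$ at which some left-adjacent plateau of $\T_e$ terminates has measure zero and may be discarded in the a.e.\ comparison of derivatives above.
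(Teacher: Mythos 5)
Your proof is correct and follows essentially the same route as the paper's: differentiate the total-flow identity $F_e^+ = F_e^-\circ \T_e$ from \Cref{lem:technical_properties:base}~\ref{it:in_equals_out_at_exit_time:base}, combine the chain rule with \eqref{eq:FIFO:multi_commodity} to obtain $\tfrac{\diff}{\diff\theta}F_{j,e}^-(\T_e(\theta)) = f_{j,e}^+(\theta)$ almost everywhere (treating $f_e^+(\theta)=0$ separately), and integrate. The only difference is that you explicitly justify the absolute continuity needed for the final integration and the identification $\vartheta=\theta$ in \eqref{eq:FIFO:multi_commodity} by discarding the null set of plateau endpoints of $\T_e$ — points the paper's proof passes over silently.
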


\proof{Proof of \Cref{lem:fifo:multi_commodity}.}
By \Cref{lem:technical_properties:base}~\ref{it:in_equals_out_at_exit_time:base} we have that $F_e^+(\xi) =
F_e^-(\T_e(\xi))$. Taking the derivative yields that $f_e^+(\xi) = f_e^-(\T_e(\xi)) \cdot \T'_e(\xi)$ for almost all
$\xi \in [0, \theta]$. Hence, for $f_e^+(\xi) > 0$ we obtain
\begin{equation}\label{eq:derivative_of_cumulative_commodity_flow:multi_commodity}
\frac{\diff}{\diff \xi} F_{j, e}^-(\T_e(\xi)) =  f_{j, e}^-(\T_e(\xi)) \cdot \T'_e(\xi)
\!\stackrel{\text{\eqref{eq:FIFO:multi_commodity}}}{=}\! f^-_e (\T_e(\xi)) \cdot \frac{f^+_{j, e}(\xi)}{f^+_e(\xi)} 
\cdot \T'_e(\xi)= f_{j, e}^+(\xi).\end{equation}
In the case of $f_e^+(\xi) = 0$ both sides equal $0$.
Taking the integral over $[0, \theta]$ of \eqref{eq:derivative_of_cumulative_commodity_flow:multi_commodity} yields
$F_{j, e}^-(\T_e(\theta)) = F_{j, e}^+(\theta)$ since $F_{j,e}^-(T_e(0)) = F_{j,e}^+(0) = 0$.
\hfill\Halmos\endproof

Note that Condition \eqref{eq:conservation_on_arc:multi_commodity} is not used in the proof. Since $F_{j, e}^+(\theta) = F_{j,
e}^-(\T_e(\theta))$ implies flow conservation on arcs, we can again drop this condition for a feasible multi-commodity flow over time.

\section{Multi-commodity Nash flows over time.} \label{sec:nash_flows:multi-commodity} In order to define dynamic equilibria we first have to
transfer the concept of current shortest paths networks and resetting arcs from \cite{koch2011nash} to the multi-commodity case.

\paragraph{Earliest arrival times.} Since every flow commodity has its own origin we need to define earliest arrival
time functions for every commodity separately. For a given flow over time $f$ let $\l_{j,v} \colon \R \to [0, \infty)$
be the earliest time a particle of commodity $j$ can arrive at $v$. More precisely, we define the \defemph{earliest
arrival time} for commodity $j \in \J$ by
\begin{align}  \label{eq:bellman:multi_commodity}
\begin{aligned}
\l_{j,\source_j}(\phi) &\coloneqq \frac{\phi}{r_j} + a_j,&&\\ 
\l_{j,v}(\phi) &\coloneqq \min_{e = u v\in E} \T_e(\l_{j,u}(\phi))  &&\quad \text{ for } v
\in V\backslash \set{\source_j}.
\end{aligned}
\end{align}

\new{The flow of a commodity can be seen as an infinite long area of width $1$, which means that the flow volume of an
interval $[a, b] \subset \R$ equals $b - a$ (more general: the volume of a measurable subset of $\R$ is given by its
Lebesgue-measure). Furthermore, only the particles in $\K_j \coloneqq [0, (b_j - a_j) \cdot r_j)$ enters the network within the time
interval $I_j$. For technical reasons we also define the earliest arrival times for particles $\phi \not\in \K_j$ by setting $q_e(\theta) = 0$ for all $\theta < 0$. This way the earliest arrival time functions are
surjective on $\R$.}

\paragraph{Current shortest paths networks and active arcs.}
In an equilibrium every particle wants to get to its destination
as fast as possible and will therefore use a shortest path.
We say an arc $e = uv$ is \defemph{active} for particle $\phi$ and commodity~$j$
if $\l_{j,v}(\phi) = \T_e(\l_{j,u}(\phi))$ and we denote the set of all active arcs for $\phi$ and $j$ by
\[E'_{j, \phi} \coloneqq \set{ e = uv \in E | \l_{j, v}(\phi) = \l_{j, u}(\phi) + \tau_e + q_e(\l_u(\phi))}.\]
The graph $G_{j, \phi} \coloneqq (V, E'_{j, \phi})$ is called the \defemph{current shortest paths network} of
particle $\phi$ and commodity~$j$. 

\paragraph{Resetting arcs.} It will be important to specify the arcs at which a particle would experience a waiting time
when traveling along a shortest path. Hence, we define
\[E^*_{j, \phi} \coloneqq \Set{e = uv \in E |q_e(\l_{j,u}(\phi)) > 0}\]
to be the \defemph{resetting} arcs for particle $\phi$ and commodity~$j$.
Note that there might be arcs that are resetting but not active for some commodity.

\paragraph{Dynamic equilibria.}
Since every particle wants to arrive at its destination as early as possible, it should
only use current shortest paths, which leads to the following definition.
\begin{definition}[Multi-commodity Nash flow over time] \label{def:Nash_flow:multi_commodity} 
A feasible multi-commodity flow over time $f$ is a \defemph{multi-commodity Nash flow over time} if 
\begin{equation}\label{eq:Nash_condition:multi_commodity}\tag{N}
f_{j, e}^+(\theta) > 0 \;\;\Rightarrow\;\; \theta \in \l_{j,u}(\Phi_{j,e}) \qquad \text{ for all } 
e = uv \in E, j \in \J \text{ and almost all } \theta \in [0, \infty),\end{equation}
where $\Phi_{j,e} \coloneqq \set{\phi \in \Flow | e \in E'_{j, \phi}}$ is the set of flow particles of commodity $j$ for
which arc $e$ is active.
\end{definition}

We can characterize Nash flows over time in the multi-commodity setting as follows.
\begin{restatable}{lemma}{nashflowcharacterizationmulticommdoity}\label{lem:nash_flow_characterization:multi_commodity}
Let $f$ be a feasible multi-commodity flow over time. The following statements are equivalent:
\begin{enumerate}\leftskip=5mm
  \item $f$ is a multi-commodity Nash flow over time. \label{it:nash_flow:multi_commodity}
  \item $F^+_{j, e}(\l_{j, u}(\phi)) = F_{j, e}^-(\l_{j, v}(\phi))$ for all $e = uv$, $j \in \J$ and all~$\phi \in \Flow$.
  \label{it:in_equals_out_at_l:multi_commodity}
\end{enumerate}
\end{restatable}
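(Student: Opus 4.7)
The plan is to use \Cref{lem:fifo:multi_commodity} as the bridge between the two formulations. Combined with the inequality $\l_{j,v}(\phi)\leq \T_e(\l_{j,u}(\phi))$ (immediate from the definition of $\l_{j,v}$ as a minimum over incoming arcs), it reduces (ii) to the claim that $F^-_{j,e}$ is constant on the interval $[\l_{j,v}(\phi),\T_e(\l_{j,u}(\phi))]$. For $\phi\in\Phi_{j,e}$ this interval collapses to a single point and (ii) is automatic; the real content of the equivalence therefore concerns the inactive case $\phi\notin\Phi_{j,e}$, where the inequality is strict.

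For (i)$\Rightarrow$(ii) I would fix such an inactive $\phi$ and suppose, for contradiction, that $f^-_{j,e}(\sigma)>0$ for some $\sigma\in(\l_{j,v}(\phi),\T_e(\l_{j,u}(\phi)))$. The FIFO formula \eqref{eq:FIFO:multi_commodity} then forces $f^+_{j,e}(\vartheta)>0$ at $\vartheta:=\min\{\xi:\T_e(\xi)=\sigma\}$, which exists because $\T_e$ is continuous and non-decreasing. Condition \eqref{eq:Nash_condition:multi_commodity} supplies a particle $\phi'\in\Phi_{j,e}$ with $\l_{j,u}(\phi')=\vartheta$; activeness of $e$ for $\phi'$ then yields $\l_{j,v}(\phi')=\T_e(\vartheta)=\sigma$. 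The inequality $\sigma>\l_{j,v}(\phi)$ together with monotonicity of $\l_{j,v}$ gives $\phi'>\phi$, and monotonicity of $\l_{j,u}$ then yields $\vartheta=\l_{j,u}(\phi')\geq\l_{j,u}(\phi)$; applying $\T_e$ contradicts $\sigma<\T_e(\l_{j,u}(\phi))$. Hence $f^-_{j,e}=0$ almost everywhere on the open interval, which gives the required constancy of $F^-_{j,e}$.

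For (ii)$\Rightarrow$(i) I would argue by contraposition. The complement $\Flow\setminus\Phi_{j,e}$ is open by continuity of $\l_{j,u}$, $\l_{j,v}$ and $\T_e$; on every connected component $(a,b)$ the intervals $[\l_{j,v}(\phi),\T_e(\l_{j,u}(\phi))]$ are non-degenerate and, by continuity combined with the strictness $\l_{j,v}(\phi)<\T_e(\l_{j,u}(\phi))$, neighboring intervals overlap, so their union is the single interval $[\l_{j,v}(a),\T_e(\l_{j,u}(b))]$. Combining (ii) with \Cref{lem:fifo:multi_commodity} makes $F^-_{j,e}$ constant on each individual inactive interval, hence on the whole union; one further application of \Cref{lem:fifo:multi_commodity} at the two endpoints transfers this to $F^+_{j,e}$, yielding constancy on $[\l_{j,u}(a),\l_{j,u}(b)]$. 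Because the component endpoints satisfy $a,b\in\Phi_{j,e}$, every $\theta\in[0,\infty)\setminus\l_{j,u}(\Phi_{j,e})$ (using surjectivity of $\l_{j,u}$) lies strictly inside such an image interval, and therefore $f^+_{j,e}(\theta)=0$ almost everywhere off $\l_{j,u}(\Phi_{j,e})$, which is precisely \eqref{eq:Nash_condition:multi_commodity}.

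The trickiest part is the weak monotonicity of $\l_{j,u}$, $\l_{j,v}$ and $\T_e$: each can be flat on intervals where an upstream queue is draining, so one cannot upgrade a weak inequality to a strict one for free. In the first direction this is harmless since the contradiction is built from the strict separation of $\sigma$ from the endpoints. In the second direction one has to check that the inactive subintervals really chain into a single connected interval and that, for $\theta\notin\l_{j,u}(\Phi_{j,e})$, the pre-image $\phi$ does not sit on a plateau of $\l_{j,u}$ whose boundary falls in $\Phi_{j,e}$; this uses exactly the observation $\l_{j,u}(a),\l_{j,u}(b)\in\l_{j,u}(\Phi_{j,e})$. Boundary components with $a=0$ are absorbed uniformly by the extension $q_e\equiv 0$ on $(-\infty,0)$ that makes the earliest arrival functions surjective on $\R$.
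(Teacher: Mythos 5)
Your argument is correct and rests on the same two pillars as the paper's own proof, namely \Cref{lem:fifo:multi_commodity} and the monotonicity and continuity of the cumulative flows and of the arrival-time functions, but it organizes both directions differently. For (i)$\Rightarrow$(ii) the paper stays on the particle axis: via the intermediate value theorem it picks the maximal $\xi\in[0,\phi]$ with $F^+_{j,e}(\l_{j,u}(\xi))=F^-_{j,e}(\l_{j,v}(\phi))$, shows that $e$ is inactive for all particles in $(\xi,\phi]$, and applies \eqref{eq:Nash_condition:multi_commodity} on the corresponding time interval; you instead work on the time axis and exclude commodity-$j$ outflow on $(\l_{j,v}(\phi),\T_e(\l_{j,u}(\phi)))$ by tracing a hypothetical outflow back through \eqref{eq:FIFO:multi_commodity} to an entry time that \eqref{eq:Nash_condition:multi_commodity} forces to be active, and then deriving a monotonicity contradiction. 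For (ii)$\Rightarrow$(i) the paper argues locally, producing an $\epsilon$-window around each inactive particle on which the integral of $f^+_{j,e}$ vanishes, while you chain the inactivity intervals over each connected component of $\Flow\setminus\Phi_{j,e}$; both yield the same conclusion, and your global version has the advantage of making the countability of the exceptional set explicit. The one point that needs tightening is measure-theoretic: in (i)$\Rightarrow$(ii) you argue from a single $\sigma$ with $f^-_{j,e}(\sigma)>0$, but $f^-_{j,e}$ is only determined up to null sets, and both \eqref{eq:Nash_condition:multi_commodity} and the FIFO identity hold only for almost all times. One should instead start from the positive-measure set of such $\sigma$ (which exists whenever $F^-_{j,e}$ is nonconstant on the interval) and note that the set of associated entry times $\vartheta$ cannot be null, since $\T_e$ is Lipschitz (the inflow rates are bounded, so $q_e$ has bounded derivative) and would otherwise map a null set onto the positive-measure set of exit times; this guarantees some $\vartheta$ at which both the FIFO identity and the Nash condition are available. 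With that adjustment your proof is complete.
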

The
intuitive idea of the proof is the following. Either arc $e$ is active for commodity $j$, then the equation
follows from \Cref{lem:fifo:multi_commodity} or $e$ is not active for commodity $j$, but
then the Nash condition \eqref{eq:Nash_condition:multi_commodity} states that there was no inflow of commodity $j$ between the last point in time $\theta$ when this arc was active and $\l_{j, u}(\phi)$. Hence, we have $F^+_{j, e}(\l_{j, u}(\phi)) = F^+_{j, e}(\theta) = F^-_{j, e}(T_e(\theta)) \leq F^-_{j, e}(\l_v(\phi))$, which
together with the flow conservation on arcs \eqref{eq:conservation_on_arc:multi_commodity} shows that $F^+_{j, e}(\l_{j, u}(\phi)) =
F^-_{j, e}(\l_{j, v}(\phi))$. A detailed version of the proof can be found in the appendix
on page \pageref{proof:nash_flow_chracterization:multi_commodity}.

In a multi-commodity Nash flow over time we can characterize the waiting times, and therefore the active and resetting
arcs, by the earliest arrival time functions alone, which is shown in the following lemma.

\begin{lemma} \label{lem:q_characterized_by_l:multi_commodity}
Given a multi-commodity Nash flow over time $f$ with arrival time functions $(\l_{j, v})_{j \in \J, v \in V}$,
 we have for all arcs $e = uv \in E$ and all $\theta \in \Time$ that 
 \[q_e(\theta) = \max_{j \in \J}\left( \max\Set{ \l_{j, v}(\phi_j) - \l_{j, u}(\phi_j) - \tau_e, 0}\right) 
 \qquad \text{ with } \quad \phi_j \coloneqq \min \set{\phi \in \Flow | \l_{j, u}(\phi) = \theta}.\]
\end{lemma}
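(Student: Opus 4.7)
The plan is to prove the equality by establishing both inequalities separately. For the upper bound, note that the recursive definition~\eqref{eq:bellman:multi_commodity} immediately gives $\l_{j,v}(\phi_j) \leq T_e(\l_{j,u}(\phi_j)) = \theta + \tau_e + q_e(\theta)$ for every commodity $j$, since $e = uv$ is one of the incoming arcs at $v$. Rearranging and taking the maximum with $0$ yields $\max\{\l_{j,v}(\phi_j) - \l_{j,u}(\phi_j) - \tau_e, 0\} \leq q_e(\theta)$, which is preserved when taking the maximum over $j$.

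For the reverse inequality, assume $q_e(\theta) > 0$ (otherwise the claim is trivial) and set $\vartheta \coloneqq \min\{\xi \leq \theta : T_e(\xi) = T_e(\theta)\}$, which is well defined by the monotonicity and continuity of $T_e$ and satisfies $q_e(\vartheta) > 0$ since $T_e(\vartheta) = T_e(\theta) > \vartheta + \tau_e$. The minimality of $\vartheta$ combined with \Cref{lem:technical_properties:base}~\ref{it:equal_exit_times:base} forces $\int_{\vartheta - \varepsilon}^{\vartheta} f_e^+(\zeta) \diff \zeta > 0$ for every $\varepsilon > 0$: otherwise $T_e$ would be constant on $[\vartheta - \varepsilon, \vartheta]$, contradicting that $T_e(\xi) < T_e(\vartheta)$ for $\xi < \vartheta$. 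Since there are only finitely many commodities, the pigeonhole principle yields a commodity $j$ whose individual inflow $f_{j,e}^+$ has positive integral on every left neighborhood of $\vartheta$.

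By the Nash condition~\eqref{eq:Nash_condition:multi_commodity}, almost every time $\zeta$ with $f_{j,e}^+(\zeta) > 0$ is of the form $\zeta = \l_{j,u}(\phi)$ for some particle $\phi$ with $e \in E'_{j,\phi}$, so $\l_{j,v}(\phi) = T_e(\zeta)$. Selecting such times from arbitrarily small left neighborhoods of $\vartheta$ yields a sequence $\zeta_n \to \vartheta^-$ together with particles $\phi_n^*$ satisfying $\l_{j,u}(\phi_n^*) = \zeta_n$ and $\l_{j,v}(\phi_n^*) = T_e(\zeta_n)$. Monotonicity of $\l_{j,u}$ gives $\phi_n^* \leq \phi_j$, the sequence is bounded, and passing to a convergent subsequence produces a limit $\phi^* \leq \phi_j$ with $\l_{j,u}(\phi^*) = \vartheta$ and $\l_{j,v}(\phi^*) = T_e(\vartheta) = \theta + \tau_e + q_e(\theta)$ by continuity of $\l_{j,u}$, $\l_{j,v}$, and $T_e$. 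Monotonicity of $\l_{j,v}$ then gives $\l_{j,v}(\phi_j) \geq \l_{j,v}(\phi^*)$, and combined with the upper bound this yields equality for this commodity. The main obstacle is precisely this transport step when $\vartheta < \theta$: the queue at $\theta$ was built up in the past and is currently depleting, so in general no particle of any commodity enters arc $e$ at exactly time $\theta$, and the equality must be transferred from the particle $\phi^*$ reaching $u$ at the earlier time $\vartheta$ to the desired particle $\phi_j$ reaching $u$ at time $\theta$ via the monotonicity of $\l_{j,v}$.
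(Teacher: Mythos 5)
Your proof is correct and follows essentially the same route as the paper's: the upper bound comes from the Bellman recursion, and the lower bound traces the positive queue back---via \Cref{lem:technical_properties:base}~\ref{it:equal_exit_times:base} and the Nash condition---to a commodity whose flow entered $e$ along active arcs, after which monotonicity of $\l_{j,v}$ transfers the equality to $\phi_j$. If anything, your limiting argument with $\zeta_n \to \vartheta^-$ is slightly more careful than the paper's direct selection of the ``latest active particle,'' which implicitly assumes that this supremum is attained.
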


\proof{Proof of \Cref{lem:q_characterized_by_l:multi_commodity}.}
If $q_e(\theta) = 0$ we have by \eqref{eq:bellman:multi_commodity} for all commodities $j$ that
\[\l_{j, v}(\phi_j) \leq \l_{j, u}(\phi_j) + \tau_e + q_e(\l_{j, u}(\phi_j)) = \l_{j, u}(\phi_j) + \tau_e.\] 
For $q_e(\theta) > 0$ we show that there has to be at least one commodity $j \in \J$ for which $e$ is active for
particle $\phi_j$. Let $j$ be the commodity for which $e$ was active at the latest point in time before $\theta$, i.e.,
\[j \coloneqq \argmax_{i \in \J} \l_{i, u}(\varphi_{i}) \qquad \text{ with } \quad \varphi_{i} 
\coloneqq \max\set{\xi \leq \phi_j | e \in E'_{i, \xi}}.\] 
Since no flow was sent into $e$ between $\l_{j, u}(\varphi_j)$ and $\theta = \l_{j, u}(\phi_{j, u})$ we obtain for the
total cumulative inflow that $F_e^+(\l_{j, u}(\phi_j)) - F_e^+(\l_{j, u}(\varphi_j)) = 0$. Hence, by
\Cref{lem:technical_properties:base}~\ref{it:equal_exit_times:base} we get
\[\l_{j, v}(\phi_j) \leq T_e(\l_{j, u}(\phi_j)) = T_e(\l_{j, u}(\varphi_j)) = \l_{j, v}(\varphi_j) \leq \l_{j, v}(\phi_j).\]
Thus, we have equality, which shows that $e$ is active for $\phi_j$. It follows that
\[q_e(\theta) = q_e(\l_{j, u}(\phi_j)) = \l_{j, v}(\phi_j) - \l_{j, u}(\phi_j) - \tau_e.\]
Clearly, there cannot be another commodity $j'$ with $\l_{j', v}(\phi_{j'}) - \l_{j', u}(\phi_{j'}) - \tau_e >
q_e(\theta)$ since this would contradict the definition of the earliest arrival times in
\Cref{eq:bellman:multi_commodity}.
\hfill\Halmos\endproof

\paragraph{Underlying static flows.} We define the \defemph{underlying static flow} for each commodity $j$ by
\[x_{j, e}(\phi) \coloneqq F_{j, e}^+(\l_{j,u}(\phi)) = F_{i, e}^-(\l_{i,v}(\phi)) \qquad \text{ for all } e = uv \in E.\] 
It is easy to see that the arc vector~$(x_j(\phi))_{e \in E}$ forms a static $\source_j$-$\sink_j$-flow of value $\phi$
if $\phi \in \K_j$. Furthermore, these functions are monotone and almost everywhere
differentiable and the vector of derivatives~$(x'_j(\phi))_{e \in E}$ forms a static $\source_j$-$\sink_j$-flow of value
$1$ for $\phi \in \K_j$ or of value $0$ otherwise.

\paragraph{Underlying foreign flows.} The main challenge of multi-commodity dynamic equilibria is that the stress of an
arc, and therefore the route choice of each particle, depends on flow of all commodities simultaneously. To obtain some
structural insight we define the \defemph{underlying foreign flow}~by
\[y_{j, e}(\phi) \coloneqq \sum_{i \in \J \backslash \set{j}} F_{i, e}^+(\l_{j,u}(\phi)).\]
Note that this is not a static flow in general since the cumulative inflow $F_{i, e}^+(\l_{j,u}(\phi))$ of some
commodity~$i$ into an arc $e$ that is active for commodity~$i$ but not for commodity~$j$ generally differs from the
cumulative outflow $F_{i, e}^-(\l_{j,v}(\phi))$.

Nonetheless, we have
\[y_{j, e}(\phi) = \sum_{i \in \J \backslash \set{j}} x_{i, e}(\phi^i_{j, u}) \qquad 
\text{ with }\quad \phi_{j, u}^i \coloneqq \min \l_{i, u}^{-1}(\l_{j,u}(\phi)).\]

Note that $\phi^i_{j, u}$ is the very first particle of commodity $i$ that can arrive at $u$ (when taking a shortest
path) exactly at the time when the particle $\phi$ of commodity $j$ reaches $u$. It is, therefore, a function in
dependency of $\phi$, but for sake of readability we omit the parameter in most cases.

\begin{lemma}\label{lem:derivative_of_foreign_flow:multi_commodity}
For all $j \in \J$ and $e \in E$ the underlying foreign flow $y_{j,e}(\phi)$ is almost everywhere differentiable with
\[y'_{j, e}(\phi)  = \sum_{i \in \J \setminus \set{j}} f_{i, u}^+(\l_{j, u}(\phi)) \cdot \l'_{j, u}(\phi) =
\begin{cases}
\sum_{i \in \J \setminus \set{j}} x'_{i, e}(\phi^i_{j, u}) \cdot \frac{\l'_{j,u} (\phi)}{\l'_{i, u}(\phi^i_{j, u})} 
& \text{ if } \l'_{j,u}(\phi) > 0,\\
0 & \text{ else.}
\end{cases}\]
\end{lemma}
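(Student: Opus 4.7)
The plan is to differentiate the defining expression
\[
y_{j,e}(\phi) = \sum_{i \in \J \setminus \set{j}} F_{i,e}^+(\l_{j,u}(\phi))
\]
termwise via the chain rule. Since $f$ is feasible, each inflow rate $f_{i,e}^+$ is bounded, so $F_{i,e}^+$ is Lipschitz continuous. Combined with the fact that $\l_{j,u}$ is monotone and almost everywhere differentiable (with derivative that is integrable against the upper bound on inflow rates), the composition $F_{i,e}^+ \circ \l_{j,u}$ is absolutely continuous. Hence its derivative exists for almost every $\phi$ and the chain rule yields $f_{i,e}^+(\l_{j,u}(\phi)) \cdot \l'_{j,u}(\phi)$. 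Summing over $i \in \J \setminus \set{j}$ directly gives the first equality; in particular, when $\l'_{j,u}(\phi) = 0$ all summands vanish, which takes care of the second case of the piecewise formula.

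For the second equality, I would use the identity $\l_{i,u}(\phi^i_{j,u}) = \l_{j,u}(\phi)$, which is immediate from the definition $\phi^i_{j,u} \coloneqq \min \l^{-1}_{i,u}(\l_{j,u}(\phi))$ (and which is meaningful thanks to the surjectivity convention for earliest arrival times established above). Applying the chain rule to $x_{i,e}(\psi) = F_{i,e}^+(\l_{i,u}(\psi))$, one obtains $x'_{i,e}(\psi) = f_{i,e}^+(\l_{i,u}(\psi)) \cdot \l'_{i,u}(\psi)$ almost everywhere, and therefore, whenever $\l'_{i,u}(\phi^i_{j,u}) > 0$,
\[
f_{i,e}^+(\l_{j,u}(\phi)) = f_{i,e}^+(\l_{i,u}(\phi^i_{j,u})) = \frac{x'_{i,e}(\phi^i_{j,u})}{\l'_{i,u}(\phi^i_{j,u})}.
\]
Multiplying through by $\l'_{j,u}(\phi)$ and summing over $i$ produces the claimed quotient expression on the set $\set{\phi : \l'_{j,u}(\phi) > 0}$.

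The main obstacle is the apparently ill-defined set
\[
B_i \coloneqq \Set{\phi \in \Flow | \l'_{j,u}(\phi) > 0 \text{ and } \l'_{i,u}(\phi^i_{j,u}(\phi)) = 0},
\]
on which the quotient formula would require dividing by zero. The plan is to show $B_i$ has Lebesgue measure zero, so it can be absorbed into the ``almost everywhere'' quantifier. Let $C_i \coloneqq \set{\psi : \l'_{i,u}(\psi) = 0}$. Since $\l_{i,u}$ is absolutely continuous, a standard consequence of the area formula (Lusin's N-property) gives that $\l_{i,u}(C_i)$ has Lebesgue measure zero. For $\phi \in B_i$ we have $\phi^i_{j,u}(\phi) \in C_i$ and hence $\l_{j,u}(\phi) \in \l_{i,u}(C_i)$, so $\l_{j,u}(B_i) \subseteq \l_{i,u}(C_i)$ is null. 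But by absolute continuity of $\l_{j,u}$ and the fact that $\l'_{j,u} > 0$ on $B_i$, the measure of $\l_{j,u}(B_i)$ bounds $\int_{B_i} \l'_{j,u} \, \mathrm{d}\phi$ from below, forcing $B_i$ itself to have measure zero. This subtle step is the only nontrivial part; once it is in place, the piecewise formula is established for almost every $\phi$, and a union bound over the finitely many commodities $i \in \J \setminus \set{j}$ completes the proof.
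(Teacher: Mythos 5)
Your proof is correct, and its skeleton (chain rule plus the identity $\l_{i,u}(\phi^i_{j,u}(\phi))=\l_{j,u}(\phi)$) matches the paper's; the interesting divergence is in how the potential division by zero is excluded. The paper writes $y_{j,e}=\sum_i x_{i,e}\circ \phi^i_{j,u}$, picks a particle $\phi$ at which all relevant functions are differentiable, and differentiates $\l_{i,u}\circ\phi^i_{j,u}=\l_{j,u}$ to get the factorization $\l'_{j,u}(\phi)=\l'_{i,u}(\phi^i_{j,u}(\phi))\cdot\frac{\diff}{\diff\phi}\phi^i_{j,u}(\phi)$; positivity of the left-hand side then forces $\l'_{i,u}(\phi^i_{j,u}(\phi))>0$ for free, and the quotient formula drops out by eliminating $\frac{\diff}{\diff\phi}\phi^i_{j,u}$. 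You instead differentiate $\sum_i F^+_{i,e}\circ\l_{j,u}$ directly, convert $f^+_{i,e}$ into $x'_{i,e}/\l'_{i,u}$ at the time $\l_{j,u}(\phi)$, and kill the bad set $B_i$ by the area-formula/Lusin-N argument ($\l_{i,u}(\{\l'_{i,u}=0\})$ is null, hence so is $\l_{j,u}(B_i)$, hence so is $B_i$ since $\l'_{j,u}>0$ there). Both are valid; the paper's route is more elementary and avoids the measure-theoretic machinery, while yours is arguably more honest about which null sets are being discarded — note that the paper's blanket claim ``this is given for almost all particles'' silently needs exactly your kind of argument to ensure that, for almost every $\phi$ with $\l'_{j,u}(\phi)>0$, the point $\phi^i_{j,u}(\phi)$ avoids the null set where $x_{i,e}$ fails to be differentiable (or where $x'_{i,e}\neq f^+_{i,e}(\l_{i,u})\cdot\l'_{i,u}$). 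The same remark applies to your first equality, where you should note once that $\{\phi:\l'_{j,u}(\phi)>0,\ \l_{j,u}(\phi)\in N\}$ is null for any null set $N$ of non-Lebesgue points of $f^+_{i,e}$ — but this is the identical argument you already spelled out for $B_i$, so it is a matter of bookkeeping rather than a gap.
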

An illustration of the relation between the foreign inflow rates and the derivatives of the underlying foreign flow can
be found in \Cref{fig:foreign_flow:multi_commodity}.

\begin{figure}[t]
\centering \includegraphics{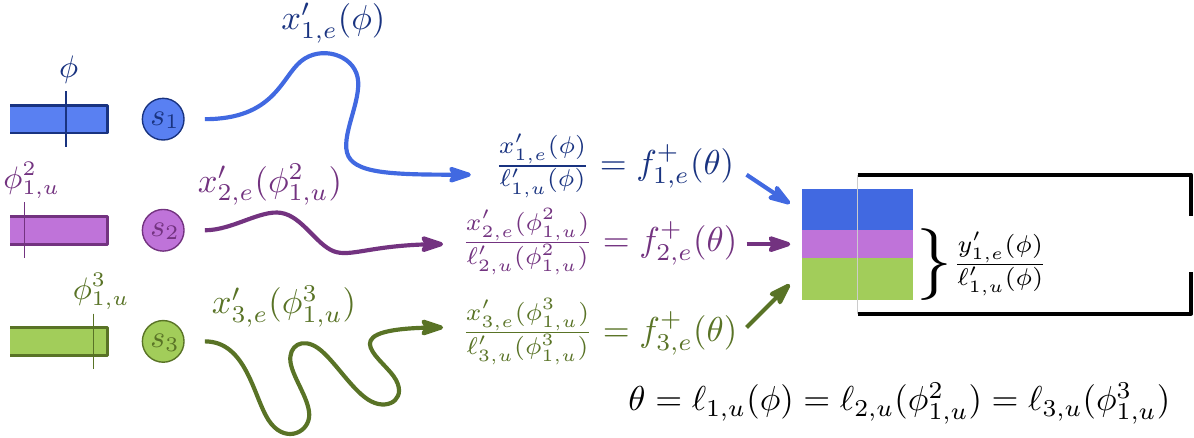} \caption{Foreign flow entering an arc. Particle $\phi$ of
commodity $1$ is entering arc $e = uv$ at time $\theta \coloneqq \l_{1, u}(\phi)$. To determine the inflow rate of the
other commodities at this point in time, we consider the particles $\phi_{1, u}^i$ that also reach node~$u$ at
time~$\theta$. The value $x'_{i, e}(\phi_{1, u}^2)$ denotes the part of the flow of commodity $i$ that will use arc~$e$.
Hence, we obtain the foreign inflow rates at time $\theta$ by dividing this value by $\l'_{i, u}(\phi_{1, u}^i)$.}
\label{fig:foreign_flow:multi_commodity}
\end{figure}

\proof{Proof of \Cref{lem:derivative_of_foreign_flow:multi_commodity}.}
By Lebesgue's theorem for the differentiability of monotone functions, $\phi^i_{j, u}(\phi)$ is almost everywhere
differentiable as it is monotone. As a concatenation and sum of almost everywhere differentiable functions so is
$y_{j,e}(\phi)$. Let $\phi$ be a particle such that the functions $\l_{j, u}(\phi)$, $\phi^i_{j. u}(\phi)$, $\l_{i,
u}(\phi^i_{j, u}(\phi))$ and $y_{j,e}(\phi)$ are differentiable for all $i\in \J\setminus \set{j}$. This is given for
almost all particles. The first equation follows immediately by the chain rule. For $\l'_{j, u}(\phi) = 0$ we have
$y'_{j, e}(\phi) = 0$. So let us suppose that $\l'_{j, u}(\phi) > 0$. We obtain
\[0 < \l'_{j, u}(\phi) = \frac{\diff}{\diff \phi} \l_{i, u} (\phi^i_{j, u}(\phi)) 
= \l'_{i, u}(\phi^i_{j, u}(\phi)) \cdot \frac{\diff}{\diff \phi} \phi^i_{j, u}(\phi),\]
and therefore, $\l'_{i, u}(\phi^i_{j, u}(\phi)) > 0$.
Again with the chain rule and the equation above it follows immediately that
\[y'_{j, e}(\phi) = \sum_{i \in \J \setminus \set{j}} x'_{i, e}(\phi^i_{j, u}(\phi)) 
\cdot \frac{\diff}{\diff \phi} \phi^i_{j, u}(\phi)  
=  \sum_{i \in \J \setminus \set{j}} x'_{i, e}(\phi^i_{j, u}(\phi)) 
\cdot \frac{\l'_{j,u} (\phi)}{\l'_{i, u}(\phi^i_{j, u}(\phi))}.\]
\hfill\Halmos\endproof

\section{Multi-commodity thin flows.} \label{sec:thin_flows:multi_commodity} 
In the considered flow over time game every particle $\phi$ of commodity $j$
does not only choose one $\source_j$-$\sink_j$-path but it can also split up even further and each part can take a different
path from $\source_j$ to $\sink_j$. Hence, a strategy of this particle is, in fact, a convex combination of such paths,
or in other words, a strategy is given by a static $\source_j$-$\sink_j$-flow of value~$1$.

It turns out that for Nash flows over time the strategies are given by the derivatives of the underlying static flows. In order to describe the structure
of these derivatives we extend the thin flow definition of \cite{koch2011nash} to the multi-commodity setting.
However, we have to include the derivatives of the foreign flow into our consideration, and since the foreign flow
heavily depends on the underlying static flow of other commodities we cannot consider only one particle (or one interval
of particles) at a time, but we have to consider the strategy of all particles simultaneously.

\begin{definition}[Multi-commodity thin flow]
For a given family of arc functions $x' = (x'_{j, e})_{j \in \J, e \in E}$ and node functions $\l' = (\l'_{j, v})_{j \in
\J, v \in V}$ we define $E'_{j, \phi}, E^*_{j, \phi} \subseteq E$ and $y'_{j, e} \colon \Flow \to [0, \infty)$ for all
$j \in \J$, $\phi \in \Flow$ and $e \in E$ as described above in dependency of the functions $\l_{j,v}(\phi) \coloneqq
\int_0^\phi \l'_{j, v}(\xi) \diff \xi$. We say that the pair~$(x', \l')$ forms a \defemph{multi-commodity thin flow} if
the following conditions are satisfied:

For all $\phi \in \Flow$ the arc vector $(x'_{j, e}(\phi))_{e \in E}$ forms a static $\source_j$-$\sink_j$-flow of value
$1$ if $\phi \in \K_j$ or of value $0$ if $\phi \notin \K_j$. In both cases we have $x'_{j, e}(\phi) = 0$ for all $e
\notin E'_{j, \phi}$ and for almost all $\phi \in \Flow$ the following equations hold:
\begin{align}
\l'_{j, \source_j}(\phi) &= \frac{1}{r_j} &&\text{ for all } j \in \J, \label{eq:l'_s:multi_commodity} \tag{TF1}\\ 
\l'_{j,v}(\phi) &= \min_{e = uv \in E'_{j, \phi}} \rho_{j, e}^\phi\left(\l'_{j,u}(\phi), x'_{j, e}(\phi), y'_{j,e}(\phi)
\right) && \text{ for all } j \in \J, v \in V \setminus \Set{\source_j},
\label{eq:l'_v_min:multi_commodity}\tag{TF2}\\ 
\l'_{j,v}(\phi) &= \rho_{j, e}^\phi\left(\l'_{j,u}(\phi), x'_{j, e}(\phi), y'_{j, e}(\phi)\right) &&\!\!\! \begin{array}{ll}
\text{ for all } j \in \J, e = uv \in E'_{j, \phi}\\
 \text{ with } x'_{j, e} > 0,
\end{array} \label{eq:l'_v_tight:multi_commodity}\tag{TF3}
\end{align}
\begin{align*}\text{ where } \qquad \rho_{j, e}^\phi(\l'_{j, u}, x'_{j, e}, y'_{j, e}) &\coloneqq \begin{cases}
\frac{x'_{j, e} + y'_{j, e}}{\capa_e} & \text{ if } e = uv \in E^*_{j, \phi},\\
\max\Set{\l'_{j, u}, \frac{x'_{j, e} + y'_{j, e}}{\capa_e}} & \text{ if } e = uv \in E'_{j, \phi} \backslash E^*_{j, \phi}. 
\end{cases}
\end{align*}
\end{definition}

The intuitive idea behind these equations is that $\frac{x'_{j, e} + y'_{j, e}}{\capa_e}$ describes the \defemph{stress value} of an arc $e$ in dependency of the particles using this arc. The stress value of an $\source_j$-$v$-path is then
determined by the highest stress value of its arcs (i.e., by the bottleneck arc along the path), as long as there are no
resetting arcs; see $\rho^\phi_{j.e}(l'_{j, u}, x'_{j, e}, y'_{j, e})$ if $e \notin E_{j, \phi}^*$. At every resetting arc along the path the values of all
previous arcs are dismissed and the stress value of the path is reset; see $\rho^\phi_{j.e}(l'_{j, u}, x'_{j, e}, y'_{j, e})$ if $e \in E_{j, \phi}^*$. This
is logical since a high stress value of preceding arcs can be compensated by decreasing the queue, as long as there is
a positive queue.

Finally, the value $\l'_{j, v}$ is the minimal stress value of all paths from $\source_j$ to $v$; see
\eqref{eq:l'_v_min:multi_commodity}. For high stress values following particles need more time to reach node $v$, hence, these
stress values coincide exactly with the slope of the earliest arrival time functions.

If $\l'_{j, v} < \rho^\phi_{j.e}(l'_{j, u}, x'_{j, e}, y'_{j, e})$, this means that $e$ leaves the current shortest paths network, and therefore, it cannot
be used by following particles, i.e., $x'_{j,e} = 0$. In other words, particles in a Nash flow over time
can only use arcs that lie on a path with minimal stress value; see \eqref{eq:l'_v_tight:multi_commodity}.

Furthermore, $\l'_{j, u} < \frac{x'_{j, e} + y'_{j, e}}{\capa_e}$ means that arc $e$ is a bottleneck, and therefore, the queue will grow.
Whenever we have $\l'_{j, u} > \frac{x'_{j, e} + y'_{j, e}}{\capa_e}$ the arc $e$ has a smaller stress value than the preceding arcs along the
$\source_j$-$v$-path. Hence, the queue will decrease if $e$ is resetting, or stay empty otherwise. For $\l'_{j, u} =
\frac{x'_{j, e} + y'_{j, e}}{\capa_e}$ the arc has the exact same stress value as the arcs before, so the queue will stay
constant.

The first main result for multi-commodity Nash flows over time states that the derivatives form a multi-commodity thin flow.

\begin{restatable}{theorem}{thmderivativesformthinflowsmulticommodity}\label{thm:derivatives_form_thin_flows:multi_commodity}
For a multi-commodity Nash flow over time $f$, the derivatives $(x'_{j, e})_{j \in J, e \in E}$ together with
$(\l'_{j, v})_{j \in \J, v \in V}$ form a multi-commodity thin flow.
\end{restatable}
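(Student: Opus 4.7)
The strategy is to verify each defining property of a multi-commodity thin flow separately, using the Nash characterization of Lemma \ref{lem:nash_flow_characterization:multi_commodity} to relate $x'$ to the flow rates, Lemma \ref{lem:technical_properties:base}\ref{it:derivative_of_q:base} to compute the derivative of the exit-time function, and Lemma \ref{lem:derivative_of_foreign_flow:multi_commodity} to identify foreign flow contributions. Throughout I work at an arbitrary $\phi$ at which all the relevant monotone or Lipschitz functions ($\ell_{j,v}$, $F^\pm_{j,e}$, $q_e$, $y_{j,e}$ and the fibres $\phi^i_{j,u}$) are differentiable; by Lebesgue's differentiation theorem this holds for almost all $\phi \in \Flow$.

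First I would establish the underlying static-flow properties. Differentiating $x_{j,e}(\phi) = F^+_{j,e}(\ell_{j,u}(\phi))$ directly gives $x'_{j,e}(\phi) = f^+_{j,e}(\ell_{j,u}(\phi))\,\ell'_{j,u}(\phi)$, while differentiating the Nash identity $F^+_{j,e}(\ell_{j,u}(\phi)) = F^-_{j,e}(\ell_{j,v}(\phi))$ from \Cref{lem:nash_flow_characterization:multi_commodity} yields $x'_{j,e}(\phi) = f^-_{j,e}(\ell_{j,v}(\phi))\,\ell'_{j,v}(\phi)$. Substituting these into the pointwise flow conservation \eqref{eq:flow_conservation:multi_commodity} at $v$ and time $\theta = \ell_{j,v}(\phi)$ and pulling out the common factor $\ell'_{j,v}(\phi)$ produces static flow conservation for $(x'_{j,e}(\phi))_e$ at every $v \notin \{\source_j,\sink_j\}$. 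At $\source_j$ the total outflow equals $r_j \cdot \ell'_{j,\source_j}(\phi)\cdot \one_{I_j}(\ell_{j,\source_j}(\phi))$, which together with \eqref{eq:l'_s:multi_commodity} (itself immediate from $\ell_{j,\source_j}(\phi) = \phi/r_j + a_j$) evaluates to $1$ on $\K_j$ and $0$ off $\K_j$. The support condition $x'_{j,e}(\phi) = 0$ for $e \notin E'_{j,\phi}$ follows from the Nash condition \eqref{eq:Nash_condition:multi_commodity}, which forces $f^+_{j,e}(\ell_{j,u}(\phi)) = 0$ at almost every such $\phi$.

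The heart of the argument is the derivation of \eqref{eq:l'_v_min:multi_commodity} and \eqref{eq:l'_v_tight:multi_commodity}. For any active arc $e = uv \in E'_{j,\phi}$, the Bellman inequality $\ell_{j,v}(\phi') \le T_e(\ell_{j,u}(\phi'))$ holds for all $\phi'$ with equality at $\phi$. A standard one-sided difference-quotient argument (applied from both sides since both functions are differentiable at $\phi$) yields $\ell'_{j,v}(\phi) = T'_e(\ell_{j,u}(\phi))\cdot \ell'_{j,u}(\phi) = (1 + q'_e(\ell_{j,u}(\phi)))\cdot \ell'_{j,u}(\phi)$. Splitting via \Cref{lem:technical_properties:base}\ref{it:derivative_of_q:base}: if $e \in E^*_{j,\phi}$ then $q_e(\ell_{j,u}(\phi)) > 0$ and this expression simplifies to $f^+_e(\ell_{j,u}(\phi))\,\ell'_{j,u}(\phi)/\capa_e$; if $e \in E'_{j,\phi}\setminus E^*_{j,\phi}$ then $q_e(\ell_{j,u}(\phi)) = 0$ and it simplifies to $\max\{\ell'_{j,u}(\phi),\,f^+_e(\ell_{j,u}(\phi))\,\ell'_{j,u}(\phi)/\capa_e\}$. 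Using $x'_{j,e}(\phi) + y'_{j,e}(\phi) = f^+_e(\ell_{j,u}(\phi))\cdot \ell'_{j,u}(\phi)$, which is the combination of the formula for $x'_{j,e}$ above with \Cref{lem:derivative_of_foreign_flow:multi_commodity}, both cases collapse to exactly $\rho^\phi_{j,e}(\ell'_{j,u}(\phi), x'_{j,e}(\phi), y'_{j,e}(\phi))$.

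Condition \eqref{eq:l'_v_tight:multi_commodity} is then immediate, since $x'_{j,e}(\phi) > 0$ forces $e \in E'_{j,\phi}$ by the support condition established earlier. For \eqref{eq:l'_v_min:multi_commodity}, the above argument shows that every active arc into $v$ produces the same value $\ell'_{j,v}(\phi)$ of $\rho^\phi_{j,e}$, while the Bellman definition \eqref{eq:bellman:multi_commodity} guarantees that at least one active arc into $v$ exists; hence the minimum over $E'_{j,\phi}$-arcs into $v$ equals $\ell'_{j,v}(\phi)$. The main technical obstacle I expect is a bookkeeping issue rather than a conceptual one: every step composes several almost-everywhere differentiable monotone functions, and one must verify that the \emph{union} of exceptional null sets (over the finitely many arcs, commodities, and the fibres $\phi^i_{j,u}$) is itself null, so that all the pointwise identities can be applied simultaneously at almost every $\phi$.
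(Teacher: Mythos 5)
Your proof is correct, and its skeleton matches the paper's: the same generic-$\phi$ setup, the same use of \Cref{lem:nash_flow_characterization:multi_commodity} to obtain $x'_{j,e}(\phi)=f^+_{j,e}(\ell_{j,u}(\phi))\,\ell'_{j,u}(\phi)=f^-_{j,e}(\ell_{j,v}(\phi))\,\ell'_{j,v}(\phi)$, the same identity $x'_{j,e}(\phi)+y'_{j,e}(\phi)=f^+_e(\ell_{j,u}(\phi))\,\ell'_{j,u}(\phi)$ via \Cref{lem:derivative_of_foreign_flow:multi_commodity}, and the same evaluation of $\tfrac{\diff}{\diff\phi}T_e(\ell_{j,u}(\phi))$ from \Cref{lem:technical_properties:base}~\ref{it:derivative_of_q:base}. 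The one genuine divergence is \eqref{eq:l'_v_tight:multi_commodity}: the paper proves it by a separate computation starting from $\ell'_{j,v}(\phi)=x'_{j,e}(\phi)/f^-_{j,e}(\ell_{j,v}(\phi))$ and invoking the proportionality condition \eqref{eq:FIFO:multi_commodity} together with \eqref{eq:totaloutflow:multi_commodity}, whereas you obtain it for free from the envelope argument --- for every active arc the nonnegative function $\phi'\mapsto T_e(\ell_{j,u}(\phi'))-\ell_{j,v}(\phi')$ attains its minimum $0$ at $\phi$, so wherever both sides are differentiable $\ell'_{j,v}(\phi)=\tfrac{\diff}{\diff\phi}T_e(\ell_{j,u}(\phi))=\rho^\phi_{j,e}(\cdot)$, which delivers \eqref{eq:l'_v_min:multi_commodity} and \eqref{eq:l'_v_tight:multi_commodity} in one stroke. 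This is consistent with \Cref{lem:diff_rule_for_min:pre} (at points where $\ell_{j,v}$ is differentiable, all active arcs must attain the same derivative), it is shorter, and it even gives \eqref{eq:l'_v_tight:multi_commodity} for all active arcs rather than only those with $x'_{j,e}(\phi)>0$; the paper's route has the merit of exercising \eqref{eq:FIFO:multi_commodity} directly, the condition that makes the per-commodity outflow rates well defined. Your closing caveat about taking the union of the finitely many exceptional null sets is the right one and is handled at the same level of rigor as in the paper.
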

\proof{Proof of \Cref{thm:derivatives_form_thin_flows:multi_commodity}.} \label{proof:derivates_form_thin_flows:multi_commodity}
Let $\phi \in \Flow$ be a particle such that for all arcs $e = uv$ and all $j \in \J$ the derivatives of $\l_{j,u}$,
$x_{j, e}$, $y_{j, e}$ and $T_e \circ \l_{j, u}$ exist. Furthermore, assume that 
\[x'_{j. e}(\phi) = f_{j, e}^+(\l_{j,u}(\phi)) \cdot \l'_{j, u}(\phi) = f_{j, e}^-(\l_{j, v}(\phi)) \cdot \l'_{j, v}(\phi)\]
and that
\eqref{eq:Nash_condition:multi_commodity} as well as the equation in
\Cref{lem:derivative_of_foreign_flow:multi_commodity} hold. This is given for almost all $\phi \in \Flow$.
By \eqref{eq:Nash_condition:multi_commodity} we have $f_{j, e}^+(\phi) = 0$, and therefore $x'_{i, e}(\phi) = 0$, for
all arcs $e \in E \setminus E'_{i, \phi}$, which shows that $(x'_{j, e})_{e \in E}$ is indeed a static flow on $G_{j,
\phi}$.

Taking the derivatives of the first equation of \eqref{eq:bellman:multi_commodity} shows immediately
\eqref{eq:l'_s:multi_commodity}.

In oder to show \eqref{eq:l'_v_min:multi_commodity} we add $f^+_{j, u}(\l_{j, u}(\phi)) \cdot
\l'_{j,u}(\phi) = x'_{j,e}(\phi)$ to the equation in \Cref{lem:derivative_of_foreign_flow:multi_commodity} and obtain
\[f_e^+(\l_{j, u}(\phi)) \cdot \l'_{j,u}(\phi) = \sum_{i \in \J} f_{i, e}^+(\l_{j, u}(\phi)) \cdot \l'_{j,u}(\phi) 
= x'_{j, e}(\phi) + y'_{j, e}(\phi).\]
Furthermore, by \Cref{lem:technical_properties:base}~\ref{it:derivative_of_q:base} we have for almost all $\theta \in
\Time$ that
\[\T'_e(\theta) = 1 + q'_e(\theta) = \begin{cases}
\max\Set{\frac{f_e^+(\theta)}{\capa_e}, 1} & \text{ if } q_e(\theta) = 0,\\
\frac{f_e^+(\theta)}{\capa_e} & \text{ else.}
\end{cases}\]
Hence,
\[\frac{\diff}{\diff \phi} \T_e(\l_{j,u}(\phi)) = \T'_e(\l_{j,u}(\phi)) \cdot \l'_{j,u}(\phi) = \begin{cases}
\max\Set{\frac{x'_{j, e}(\phi) + y'_{j, e}(\phi)}{\capa_e}, \l'_{j,u}(\phi)}& \text{ if } q_e(\l_{j, u}(\phi)) = 0,\\
\frac{x'_{j, e}(\phi) + y'_{j, e}(\phi)}{\capa_e} &\text{ else.} \end{cases} \] 
This, together with \eqref{eq:bellman:multi_commodity} and the differentiation rule for a minimum
(see \Cref{lem:diff_rule_for_min:pre}), implies \eqref{eq:l'_v_min:multi_commodity}.
 
In oder to prove \eqref{eq:l'_v_tight:multi_commodity} suppose $x'_{j, e}(\phi) = f_{j,
e}^-(\l_v(\phi)) \cdot \l'_{j,v}(\phi) > 0$, which implies $f_e^+(\l_{j, u}(\phi)) \geq f_{j,e}^+(\l_{j, u}(\phi)) > 0$.
Since $e$ is active for $j$ we have $\l_{j,v}(\phi) = T_e(\l_{j, u}(\phi))$. Hence,

\begin{align*}
\l'_{j,v}(\phi) &= \frac{x'_{j, e}(\phi)}{f_{j, e}^-(\l_{j, v}(\phi))} \\
&\hspace{-1.3mm}\stackrel{\eqref{eq:FIFO:multi_commodity}}{=}\hspace{-1.3mm} 
\frac{x'_{j, e}(\phi) \cdot f_e^+(\l_{j,u}(\phi))}{f_{j, e}^+(\l_{j,u}(\phi)) \cdot f_e^-(\l_{j, v}(\phi))} \\
&= \frac{\l'_{j, u}(\phi) \cdot f_e^+(\l_{j,u}(\phi))}{f_e^-(\l_{j, v}(\phi))}\\
&\hspace{-1.3mm}\stackrel{\eqref{eq:totaloutflow:multi_commodity}}{=}\hspace{-1.3mm} \begin{dcases}
\max \Set{\l'_{j, u}(\phi), \frac{\l'_{j, u}(\phi) \cdot f_e^+(\l_{j, u}(\phi))}{\capa_e}} 
&\text{ if } q_e(\l_{j, u}(\phi)) = 0,\\
\frac{\l'_{j, u}(\phi) \cdot f_e^+(\l_{j, u}(\phi))}{\capa_e} &\text{ else},
\end{dcases}\\
&=\begin{dcases}
\max\Set{\l'_{j, u}(\phi), \frac{x'_{j, e}(\phi) + y'_{j, e}(\phi)}{\capa_e}} 
&\text{ if } e \in E'_{j, \phi} \backslash E^*_{j, \phi},\\
\frac{x'_{j, e}(\phi) + y'_{j, e}(\phi)}{\capa_e} &\text{ if } e \in E^*_{j, \phi},
\end{dcases}\\
&= \rho_{j, e}^\phi\left(\l'_{j, u}(\phi), x'_{j, e}(\phi), y'_{j, e}(\phi)\right).
\end{align*}
Thus, \eqref{eq:l'_v_tight:multi_commodity} is fulfilled, which finishes the proof.
\hfill\Halmos\endproof

For the reverse direction we show that for a given multi-commodity thin flow $(x', \l')$ we can reconstruct the Nash
flow over time by setting
\begin{align*}f_{j, e}^+(\theta) \coloneqq \frac{x'_{j, e}(\phi)}{\l'_{j, u}(\phi)} \quad \text{ for } \theta 
= \l_{j ,u}(\phi) \qquad \text{ and } \qquad
f_{j, e}^-(\theta) \coloneqq \frac{x'_{j, e}(\phi)}{\l'_{j, v}(\phi)} \quad \text{ for } \theta = \l_{j ,v}(\phi)
\end{align*}
for all $\phi \in \Flow$ and every $e = uv \in E$. Furthermore, we set $f_{j, e}^+(\theta) = 0$ for $\theta < \l_{j,
u}(0)$ and $f_{j, e}^-(\theta) = 0$ for $\theta < \l_{j, v}(0)$.

\begin{theorem} \label{thm:thin_flows_form_Nash_flows:multi_commodity}
For every multi-commodity thin flow $(x', \l')$ the family of functions $f = (f_{j, e}^+, f_{j, e}^-)_{j \in \J, e \in
E}$ as defined above is a multi-commodity Nash flow over time with earliest arrival time functions
\[\l_{j,v}(\phi) \coloneqq \int_0^\phi \l'_{j, v}(\xi) \diff \xi \qquad 
\text{for all } j \in \J, v \in V \text{ and } \phi \in \Flow.\]
\end{theorem}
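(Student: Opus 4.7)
The plan is to verify each defining property of a multi-commodity Nash flow over time for the constructed $f$: conservation of flow, feasibility (queue dynamics and FIFO), the Bellman recursion so that the prescribed $\l_{j,v}$ really are the earliest arrival times, and the Nash condition \eqref{eq:Nash_condition:multi_commodity}. The overarching technique is to move everything to the ``$\phi$-side'' via the change of variables $\theta = \l_{j,u}(\phi)$ and reduce each dynamic condition on $f$ to a static condition on $(x'_{j,e}, \l'_{j,v})$ that is already guaranteed by \eqref{eq:l'_s:multi_commodity}--\eqref{eq:l'_v_tight:multi_commodity}.

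First, I would establish the cumulative identity
\[
F_{j,e}^+(\l_{j,u}(\phi)) \;=\; x_{j,e}(\phi) \;=\; F_{j,e}^-(\l_{j,v}(\phi))
\]
by substituting $\xi = \l_{j,u}(\psi)$ (respectively $\l_{j,v}(\psi)$) in the integrals defining $F_{j,e}^\pm$. Since $\l_{j,v}$ is absolutely continuous, this identity is well defined even where $\l'$ vanishes, where the pointwise formula for $f$ is interpreted through its cumulative form. From here, conservation at a non-terminal node $v$ reduces to the static flow balance of $(x'_{j,e})_e$ divided by $\l'_{j,v}(\phi)$ at time $\theta = \l_{j,v}(\phi)$; conservation at $\source_j$ uses \eqref{eq:l'_s:multi_commodity} to produce rate $r_j$ precisely during $I_j$; and conservation on arcs $F_{j,e}^-(\theta+\tau_e) \leq F_{j,e}^+(\theta)$ follows once $\l_{j,v}(\phi) \geq \l_{j,u}(\phi) + \tau_e$ is verified for every arc carrying flow of commodity $j$, which I can read off from \eqref{eq:l'_v_tight:multi_commodity} through the form of $\rho^\phi_{j,e}$ together with \Cref{lem:q_characterized_by_l:multi_commodity} (applied in reverse as the natural definition of $q_e$ from $\l$).

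The central step is to show that the integrated $\l_{j,v}$ are indeed the earliest arrival times of $f$, i.e.\ they satisfy \eqref{eq:bellman:multi_commodity} for the $q_e$ determined by $f$. Summing the cumulative identities over commodities and differentiating at $\theta = \l_{j,u}(\phi)$ yields
\[
f_e^+(\l_{j,u}(\phi)) \cdot \l'_{j,u}(\phi) \;=\; x'_{j,e}(\phi) + y'_{j,e}(\phi),
\]
and combining this with \Cref{lem:technical_properties:base}\ref{it:derivative_of_q:base} lets me compute $\T'_e(\l_{j,u}(\phi)) \cdot \l'_{j,u}(\phi)$ as exactly $\rho^\phi_{j,e}\bigl(\l'_{j,u}(\phi), x'_{j,e}(\phi), y'_{j,e}(\phi)\bigr)$; the minimum in \eqref{eq:l'_v_min:multi_commodity} then reproduces \eqref{eq:bellman:multi_commodity} after integration. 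The FIFO rule \eqref{eq:FIFO:multi_commodity} is automatic from the construction, since on active arcs the ratio $f_{j,e}^-/f_e^-$ at $\l_{j,v}(\phi)$ matches $f_{j,e}^+/f_e^+$ at $\l_{j,u}(\phi)$, both equalling $x'_{j,e}(\phi)/(x'_{j,e}(\phi)+y'_{j,e}(\phi))$. The outflow law \eqref{eq:totaloutflow:multi_commodity} falls out of the case distinction in $\rho$: on resetting arcs $e \in E^*_{j,\phi}$ the stress $(x'_{j,e}+y'_{j,e})/\capa_e$ forces $f_e^- = \capa_e$, while on non-resetting active arcs the term $\max\{\l'_{j,u}, (x'_{j,e}+y'_{j,e})/\capa_e\}$ forces $f_e^- = \min\{f_e^+(\,\cdot\,-\tau_e), \capa_e\}$. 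Finally, the Nash condition is immediate: if $f_{j,e}^+(\theta) > 0$ then $x'_{j,e}(\phi) > 0$ at the corresponding $\phi$, whence by \eqref{eq:l'_v_tight:multi_commodity} we have $e \in E'_{j,\phi}$, so $\theta = \l_{j,u}(\phi) \in \l_{j,u}(\Phi_{j,e})$.

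The main obstacle is the circular nature of the sets $E'_{j,\phi}$ and $E^*_{j,\phi}$: in the thin flow definition they are determined from $\l$ via the $q_e$ prescribed by \Cref{lem:q_characterized_by_l:multi_commodity}, whereas in the Nash flow definition they must coincide with the actual queues of the constructed $f$. The heart of the argument is therefore a consistency check that, for almost every time $\theta$ and every arc $e$, the commodity attaining the maximum in that lemma is exactly the one whose thin-flow stress is realised at $\theta$ under $f$. Once this is established, the remaining verifications are routine cumulative manipulations, mirroring \Cref{thm:derivatives_form_thin_flows:multi_commodity} but traversed in the opposite direction.
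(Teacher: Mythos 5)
Your proposal follows essentially the same route as the paper's proof: it verifies flow conservation, the outflow law \eqref{eq:totaloutflow:multi_commodity} and the FIFO rule \eqref{eq:FIFO:multi_commodity} by passing to the $\phi$-side via $\theta = \l_{j,u}(\phi)$ and invoking \eqref{eq:l'_s:multi_commodity}--\eqref{eq:l'_v_tight:multi_commodity}, establishes the Bellman recursion \eqref{eq:bellman:multi_commodity} by matching $\T'_e(\l_{j,u}(\phi))\cdot\l'_{j,u}(\phi)$ with $\rho^\phi_{j,e}$ using \Cref{lem:technical_properties:base}~\ref{it:derivative_of_q:base} and the minimum differentiation rule, and concludes the Nash condition from the cumulative identity via \Cref{lem:nash_flow_characterization:multi_commodity}. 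The consistency check you single out between the prescribed sets $E^*_{j,\phi}$ and the actual queues of the constructed $f$ is resolved in the paper in precisely the order you indicate --- feasibility first, then the Bellman equations, after which \Cref{lem:q_characterized_by_l:multi_commodity} identifies the two --- so no genuinely different idea is required.
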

\proof{Proof of \Cref{thm:thin_flows_form_Nash_flows:multi_commodity}.}
Clearly, \eqref{eq:flow_conservation:multi_commodity} is satisfied since flow of every commodity $j$ is conserved at
every node $v$ at every point in time $\theta = \l_{j, v}(\phi)$, i.e.,
\[ \sum_{e\in \delta^+_v} f_{j, e}^+(\theta) - \sum_{e \in \delta^-_v} f_{j, e}^-(\theta) 
= \sum_{e\in \delta^+_v} \frac{x'_{j, e}(\phi)}{\l'_{j, v}(\phi)} 
- \sum_{e \in \delta^-_v} \frac{x'_{j, e}(\phi)}{\l'_{j, v}(\phi)}  
= \begin{cases}
0 & \text{ if } v \in V\setminus \set{\source_j} \text{ or } \phi \notin \K_j, \\
r_j & \text{ if } v = \source_j \text{ and } \phi \in \K_j.
\end{cases} \]
Note that for $v = \source_j$ we have $\phi \in \K_j$ if, and only if, $\theta = \l_{j, \source_j}(\phi) \in I_j$.
 
For a given $e = uv \in E$ and $\theta \in \Time$ let $\phi_j \in \Flow$ such that $\l_{j, u}(\phi_j) = \theta$ for all
$j \in \J$. Considering the commodities $j$, where $e$ is active for $j$ and $\phi_j$, we observe that also all $\l_{j,
v}(\phi_j)$ of these commodities coincide. Hence, \eqref{eq:l'_v_tight:multi_commodity} yields
\begin{align*}f_e^-(\theta) = \sum_{j \in \J} \frac{x'_{j ,e}(\phi_j)}{\l'_{j, v}(\phi_j)} 
&= \begin{cases} \capa_e & \text{ if } e \in E^*_{j, \phi_j} \text{ for some $j$ with $e \in E'_{j, \phi_j}$}, \\
\min \Set{\sum_{j \in \J} \frac{x'_{j ,e}(\phi_j)}{\l'_{j, u}(\phi_j)}, \capa_e} 
& \text{ else,}\end{cases}\\
&= \begin{cases} \capa_e & \text{ if } q_e(\theta) > 0, \\
\min \Set{f_e^+(\theta), \capa_e} 
& \text{ else.}\end{cases}\end{align*}
This shows \eqref{eq:totaloutflow:multi_commodity}.

\Cref{eq:FIFO:multi_commodity} follows by \Cref{lem:derivative_of_foreign_flow:multi_commodity} since
\[f^-_{j, e}(\theta) = \frac{x'_{j, e}(\phi_j)}{\l'_{j, v}(\phi_j)} 
= \frac{y'_{j, e}(\phi_j) + x'_{j, e}(\phi_j)}{\l'_{j, v}(\phi_j)} \cdot \frac{x'_{j, e}(\phi_j)}{\l'_{j, u}(\phi_j)} 
\cdot \frac{\l'_{j, u}(\phi_j)}{y'_{j, e}(\phi_j) + x'_{j, e}(\phi_j)} \\
= f^-_e(\theta) \cdot \frac{f^+_{i, e}(\theta)}{f_e^+(\theta)}.\]
 
In order to show that the $\l$-functions satisfy \Cref{eq:bellman:multi_commodity} we prove that the derivatives of
$\l_{j, v}(\phi)$ and of $\min_{e = uv \in E} T_e(\l_{j, u}(\phi))$ coincide for all $\phi \in \Flow$.
\Cref{lem:technical_properties:base}~\ref{it:derivative_of_q:base} implies for almost all $\theta \in \Time$ that
\[\T'_e(\theta) = 1 + q'_e(\theta) = \begin{cases}
\max\Set{\frac{f_e^+(\theta)}{\capa_e}, 1} & \text{ if } q_e(\theta) = 0,\\
\frac{f_e^+(\theta)}{\capa_e} & \text{ else.}
\end{cases}\]
Hence, \begin{align*}\T'_e(\l_{j,u}(\phi)) \cdot \l'_{j,u}(\phi) =
\begin{cases}
\max\Set{\frac{x'_{j, e}(\phi) + y'_{j, e}(\phi)}{\capa_e}, \l'_{j,u}(\phi)}& \text{ if } q_e(\l_{j, u}(\phi)) = 0,\\
\frac{x'_{j, e}(\phi) + y'_{j, e}(\phi)}{\capa_e} &\text{ else.}
\end{cases} \end{align*}
This together with \eqref{eq:l'_v_min:multi_commodity} and the differentiation rule for a minimum
(\Cref{lem:diff_rule_for_min:pre}) implies that
\[\l'_{j, v}(\phi) = \frac{\diff}{\diff \phi} \min_{e = uv \in E} T_e(\l_{j, u}(\phi)).\]
By Lebesgue's differentiation theorem we obtain \eqref{eq:bellman:multi_commodity}. In other words, the $\l$-functions
are indeed the earliest arrival times for the constructed feasible flow over time $f$.

Finally, $f$ is a multi-commodity Nash flow over time by \Cref{lem:nash_flow_characterization:multi_commodity} since
\[F_{j, e}^+(\l_{j, u}(\phi)) = \!\int_0^\phi f^+_{j, e}(\l_{j, u}(\xi)) \cdot \l'_{j, u}(\xi) \diff \xi 
= \!\int_0^\phi x'_{j, e} (\xi) \diff \xi =\! \int_0^\phi f^-_{j, e}(\l_{j, v}(\xi)) \cdot \l'_{j, v}(\xi) \diff \xi
= F_{j, e}^-(\l_{j, v}(\phi))\]
for all $e = uv \in E$, $j \in \J$ and $\phi \in \Flow$.
\hfill\Halmos\endproof

To sum this up, \Cref{thm:derivatives_form_thin_flows:multi_commodity,thm:thin_flows_form_Nash_flows:multi_commodity}
show that multi-commodity Nash flows over time correspond one-to-one to multi-commodity thin flows.

\section{Existence of multi-commodity Nash flows over time.}
\label{sec:existence_of_multi_commodity_nash_flows:multi} The existence of dynamic equilibria in a
multi-commodity setting was first shown by Cominetti, Correa and Larr\'e in \cite{cominetti2015dynamic}, even though the
proof is not worked out in this paper. The main idea of their proof is to represent feasible multi-commodity flows over time in a
path-based formulation, i.e., as vectors of inflow functions in the $L^p$-space, and then to formulate the Nash flow
condition as an infinite-dimensional variational inequality. Using by Br\'ezis' theorem (see \Cref{thm:brezis:prelim}) guarantees the existence of a multi-commodity Nash flow over time. 
\new{In order to maintain a better structural understanding in our existence proof we stick with the arc-based representations for flows over time and thin flows. The key idea is to utilize variational inequalities, and in particular Br\'ezis' theorem, to show the existence of a
multi-commodity thin flow.}

In order to avoid degenerated cases we assume from now on that all transit times are strictly positive. Let $H > 0$ such
that $\K_j \subseteq [0, H]$ for all $j \in \J$. We consider a vector of functions $x' =
(x'_{j,e})_{j \in \J, e \in E} \in L^2([0, H])^{\J \times E}$. Recall that this is a Hilbert space with scalar product
\[\scalar{x}{y} \coloneqq \sum_{j \in \J, e \in E} \int_0^H x_{j, e}(\xi) \cdot y_{j, e}(\xi) \diff \xi.\]

\paragraph{Infinite-dimensional variational inequality and weak-strong continuous mappings.} We want to briefly recall the main definitions in order to apply Br\'ezis' theorem:
For a
set of function vectors $\X \subseteq L^2([0, H])^{\J \times E}$ and a mapping $\A \colon \X \to L^2([a, b])^{\J \times
E}$ the variational inequality is the following task.
\begin{equation} \label{eq:VI:multi_commodity} \tag{VI}
\text{ Find } x' \in \X \text{ such that }\quad \scalar{\A(x')}{z' - x'} \geq 0 \quad \text{ for all } z' \in \X.\end{equation}
Br\'ezis \cite[Theorem 24]{brezis1968} specifies conditions to guarantee the existence of a solution (see also
\cite{tang2017}).
\begin{theorem} \label{thm:brezis:prelim}
Let $\X$ be a non-empty, closed, convex and bounded subset of $L^2(D)^d$. Let $\A : \X \rightarrow L^2(D)^d$ be a
weak-strong continuous mapping. Then, the variational inequality $\VI(\X,\A)$ has a solution $x^* \in \X$.
\end{theorem}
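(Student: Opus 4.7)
The plan is to exploit the weak compactness of $\X$ together with the weak-strong continuity of $\A$ to reduce the existence of a solution to $\VI(\X, \A)$ to Ky Fan's minimax inequality. As a first step I would observe that $\X$ is weakly compact: being bounded in the Hilbert space $L^2(D)^d$, it is contained in a weakly compact ball by the Banach--Alaoglu theorem, and being convex and strongly closed it is also weakly closed, hence itself weakly compact.

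Next, I would introduce the auxiliary function $g \colon \X \times \X \to \R$ defined by $g(x, y) \coloneqq \scalar{\A(x)}{x - y}$. A point $x^* \in \X$ solves $\VI(\X, \A)$ if and only if $g(x^*, y) \leq 0$ for every $y \in \X$, since this is a direct rewriting of $\scalar{\A(x^*)}{y - x^*} \geq 0$. I would then verify the three hypotheses of Ky Fan's inequality with respect to the weak topology on $\X$: for each fixed $y$, the map $x \mapsto g(x, y)$ is weakly upper semicontinuous (in fact weakly continuous, as argued below); for each fixed $x$, the map $y \mapsto g(x, y) = \scalar{\A(x)}{x} - \scalar{\A(x)}{y}$ is affine in $y$, hence quasi-concave; and $g(x, x) = 0$ for all $x \in \X$. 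Ky Fan's theorem then delivers an $x^* \in \X$ with $g(x^*, y) \leq 0$ for all $y \in \X$, which is the desired conclusion.

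The main obstacle is the weak continuity of $x \mapsto g(x, y)$, because $g$ is a product of the two factors $\A(x)$ and $x - y$, and products of weakly convergent sequences are not continuous in general. It is precisely here that the weak-strong continuity of $\A$ is indispensable: if $x_n \rightharpoonup x^*$ weakly in $\X$, then by hypothesis $\A(x_n) \to \A(x^*)$ \emph{strongly} in $L^2(D)^d$, and the standard fact that the inner product is continuous with respect to the strong-weak topology pairing on $L^2(D)^d \times L^2(D)^d$ yields both $\scalar{\A(x_n)}{y} \to \scalar{\A(x^*)}{y}$ for fixed $y$ and $\scalar{\A(x_n)}{x_n} \to \scalar{\A(x^*)}{x^*}$.

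As an alternative, more constructive route, I would pursue a Galerkin approximation. Choose a nested sequence of finite-dimensional subspaces $V_1 \subset V_2 \subset \cdots$ of $L^2(D)^d$ with dense union, set $\X_n \coloneqq \X \cap V_n$ (arranging so that these sets are non-empty for large $n$), and apply the classical finite-dimensional Hartman--Stampacchia theorem (itself a consequence of Brouwer's fixed-point theorem) to the projected operator $P_{V_n} \circ \A$ restricted to $\X_n$, producing $x_n \in \X_n$ with $\scalar{\A(x_n)}{z - x_n} \geq 0$ for all $z \in \X_n$. Boundedness of $\X$ then yields a weakly convergent subsequence of these finite-dimensional solutions, and the weak-strong continuity of $\A$ is once again the exact hypothesis needed to pass to the limit (approximating a given $z \in \X$ by elements $z_n \in \X_n$) and recover a solution $x^* \in \X$ of $\VI(\X, \A)$.
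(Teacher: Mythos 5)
Your argument is correct, but note that the paper does not prove this statement at all: it is imported verbatim as Theorem~24 of Br\'ezis \cite{brezis1968} and used as a black box. What you have written is therefore a self-contained proof of the cited result rather than a variant of anything in the paper. Both of your routes are standard and sound. The Ky Fan route works exactly as you describe: $\X$ is weakly compact (bounded, hence contained in a weakly compact ball by reflexivity of the Hilbert space, and convex plus norm-closed, hence weakly closed by Mazur), the function $g(x,y)=\scalar{\A(x)}{x-y}$ is affine in $y$ and vanishes on the diagonal, and the weak-strong continuity of $\A$ is precisely what makes $x\mapsto g(x,y)$ weakly continuous, since a strongly convergent sequence paired against a weakly convergent (hence bounded) sequence has convergent inner products. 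Two cosmetic remarks: the semicontinuity actually required by the KKM/Ky Fan argument is \emph{lower} semicontinuity of $x\mapsto g(x,y)$ (so that the sets $\{x: g(x,y)\le 0\}$ are weakly closed), not upper semicontinuity as you first state---harmless here because you establish full continuity; and since the paper defines weak-strong continuity via sequences, the argument is cleanest when the weak topology on the bounded set $\X$ is metrizable, which holds because $L^2([0,H])^{\J\times E}$ is separable (for a non-separable $L^2(D)^d$ one would either pass to nets or invoke Eberlein--\v{S}mulian). Your Galerkin alternative is equally valid; the only point needing care, which you flag, is choosing the subspaces $V_n$ so that $\bigcup_n(\X\cap V_n)$ is dense in $\X$, e.g.\ by spanning $V_n$ with the first $n$ elements of a countable dense subset of $\X$. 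Either route delivers the theorem, and each makes transparent that weak-strong continuity of $\A$ is exactly the hypothesis that lets one trade the lack of norm compactness of $\X$ for weak compactness.
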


Hereby, we say a sequence $x'_k \in L^2(D)^d$ \defemph{converges weakly} to $x' \in L^2(D)^d$, if $\scalar{x'_k}{z'} \to
\scalar{x'}{z'}$ for all $z' \in L^2(D)^d$ (convergence in the weak-topology). For a given subset $\X \subseteq L^2(D)^d$, we call a mapping $\A\colon \X
\to L^2(D)^d$ \defemph{weak-strong continuous} at $x' \in \X$, if for every $x'_k \in \X$ that converges weakly to $x'$, we
have that $\A(x'_k)$ converges to $\A(x')$ with respect to the induced $L^2(D)^d$-norm. 
Note that the integration operator that maps $x'$ to the function $x$ defined by $x(\phi) \coloneqq \int_0^\phi x'(\xi) \diff \xi$ is a prominent example for a weak-strong continuous mapping.

\paragraph{Thin flow variational inequality.}
In order to obtain a variational inequality that solves the thin flow conditions, we start by defining
\[\X \coloneqq \Set{ (x'_{j, e})_{j \in \J, e \in E} \in L^2([0, H])^{\J \times E} | 
\begin{array}{r}(x'_{j, e}(\phi))_{e \in E} \text{ is a static $\source_j$-$\sink_j$-flow of}\\ [0.5em]
\text{value $1$ for $\phi \in \K_j$ and $0$ for $\phi \notin \K_j$.}\end{array}}.\]
Clearly, $\X$ is a non-empty, closed, convex and bounded subset of $L^2([0, H])^{\J \times E}$ since a convex combination
of two static flows with the same value is again a static flow of this value.

In order to define the mapping $\A$ we first need the following lemma.
\begin{restatable}{lemma}{lemximplieslmulticommodity} \label{lem:x'_implies_l:multi_commdity}
For every $x' = (x'_{j, e})_{j \in \J, e \in E} \in \X$ we can construct a vector $(\l_{j, v})_{j \in \J, v \in V}$ of
continuous and monotonically increasing functions such that their derivatives $(\l'_{j, v})_{j \in \J, v \in V}$ satisfy
\eqref{eq:l'_s:multi_commodity} and \eqref{eq:l'_v_min:multi_commodity} for all $\phi \in [0, H]$, where we define
\[\phi^i_{j, u}(\phi) \coloneqq \min \Set{ \varphi \geq 0 | \l_{i, u}(\varphi) = \l_{j, u}(\phi)} \quad \text{ and } \quad
y'_{j, e}(\phi) \coloneqq \sum_{i \in \J \setminus \set{j}} x'_{i, e} (\phi^i_{j, u} (\phi)) \cdot 
\frac{\l'_{j, u}(\phi)}{\l'_{i, u}(\phi^i_{j, u}(\phi))}.\] 
Furthermore, the mapping $(x'_{j, e})_{j \in \J, e \in E} \mapsto (\l_{j, u})_{j \in \J, u \in V}$ is weak-strong
continuous.
\end{restatable}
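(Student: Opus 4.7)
My strategy splits the proof into two tasks. First, I explicitly construct $\l$ from $x'$ via a time-sweep that exploits the strict positivity of the transit times $\tau_e$. Second, I establish weak-strong continuity by factoring the dependence through the cumulative arc-flows $x_{j,e}(\phi) \coloneqq \int_0^\phi x'_{j,e}(\xi) \diff \xi$, which depend weak-strong continuously on $x'$ (as noted in the excerpt).

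\textbf{Construction of $\l$.} Condition \eqref{eq:l'_s:multi_commodity} prescribes $\l_{j,\source_j}(\phi) = \phi/r_j + a_j$ for every commodity $j$. For the remaining pairs $(j, v)$ I iterate in the real time parameter $\theta$: at each $\theta$ I maintain, for every $(j, v)$, the latest particle $\phi$ with $\l_{j, v}(\phi) = \theta$. The queue waiting times $q_e(\theta')$ for $\theta' < \theta$ are recovered from the already-constructed $\l$-values via \Cref{lem:q_characterized_by_l:multi_commodity}, and together with $\tau_e > 0$ they determine the current active and resetting arc sets $E'_{j,\phi}, E^*_{j,\phi}$. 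The local system \eqref{eq:l'_v_min:multi_commodity} then reduces, at each $\phi$, to finitely many min-equations on the current active subgraph; by the static flow structure on $G$ together with $\capa_e > 0$, this system admits a unique nonnegative solution $\l'_{j, v}(\phi)$, and integrating yields the continuous, monotonically increasing $\l_{j, v}$. The strict positivity $\tau_e > 0$ is essential: it guarantees that the right-hand side of \eqref{eq:l'_v_min:multi_commodity} depends only on data at times strictly before $\theta$, so the iteration is well-defined.

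\textbf{Weak-strong continuity.} Let $x'_k \to x'$ weakly in $L^2([0, H])^{\J \times E}$. I proceed in three steps. (a) The cumulative flows $x^{(k)}_{j,e}$ converge uniformly to $x_{j, e}$ on $[0, H]$: the sequence is equicontinuous by Cauchy--Schwarz and the uniform $L^2$-bound on $x'_k$, while pointwise convergence at every $\phi$ follows by testing weak convergence against $\one_{[0, \phi]}$, so Arzel\`a--Ascoli upgrades this to uniform convergence. (b) The map $x \mapsto \l$ from $C([0, H])^{\J \times E}$ to $C([0, H])^{\J \times V}$ is continuous: the construction above depends on $x'$ only through $x$, since the queue functions and active arc sets are determined by cumulative quantities, and the local solutions of \eqref{eq:l'_v_min:multi_commodity} vary Lipschitz-continuously in the data thanks to $\capa_e > 0$; a Gronwall-type propagation along the time-sweep then transports a uniform perturbation of $x$ to a uniform perturbation of $\l$ of the same order. (c) Uniform convergence $\l(x'_k) \to \l(x')$ on the bounded interval $[0, H]$ yields $L^2$-convergence, completing the weak-strong continuity.

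\textbf{Main difficulty.} I expect the bulk of the work to lie in step (b): the time-sweep depends continuously on $x$ only after one carefully handles the ``degenerate'' parameters $\phi$ at which the active arc set $E'_{j, \phi}$ changes (two incoming arcs attaining equal stress, or a resetting arc flipping status). The min-structure of \eqref{eq:l'_v_min:multi_commodity} ensures that $\l'_{j, v}$ is continuously selected across these structural transitions, and such transitions form a set of measure zero generically, so $L^2$-stability of the integrated quantities $\l_{j, v}$ is retained.
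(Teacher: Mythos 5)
Your construction of the $\l$-functions is essentially the paper's: both arguments sweep forward in real time~$\theta$, use the strictly positive transit times to guarantee that the quantities needed to extend $\l_{j,v}$ beyond time $\theta$ (namely $\l'_{j,u}$, $x'_{j,e}$ and $y'_{j,e}$ at the heads of active arcs) are already determined at strictly earlier times, and define the extension by integrating the minimum over incoming active arcs of the stress expressions $\rho_{j,e}$. The paper decides resetting status directly via the test $\l_{j,u}(\phi)+\tau_e<\l_{j,v}(\phi)$ rather than routing through \Cref{lem:q_characterized_by_l:multi_commodity}, and there is no genuine system to solve at each $\phi$ --- the values are given by an explicit formula --- but you acknowledge the latter, so this part is fine.

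The genuine problem is step (b) of your continuity argument. You claim the map factors as $x'\mapsto x\mapsto \l$ with the second map continuous (indeed Lipschitz, via Gronwall) in the sup-norm of the cumulative flows, ``since the queue functions and active arc sets are determined by cumulative quantities.'' That factorization is well defined (the cumulatives determine $x'$ a.e.), but the second map is \emph{not} sup-norm continuous, because on a non-resetting active arc the integrand is the pointwise maximum $\rho_{j,e}(\phi)=\max\bigl\{\l'_{j,u}(\phi),\,(x'_{j,e}(\phi)+y'_{j,e}(\phi))/\capa_e\bigr\}$, and $\int\max\{f_k,g\}$ is not controlled by uniform convergence of the antiderivatives of $f_k$: perturbations of the type $x'_k=x'+\sin(k\,\cdot)$ leave the cumulatives converging uniformly while shifting $\int_a^b\max\{x'_k,c\}\,\diff\xi$ by a fixed positive amount. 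So there is no Gronwall estimate to run, and Arzel\`a--Ascoli on the cumulatives does not reach $\l$. (The same oscillation is of course also the reason the statement is about \emph{weak-strong} rather than weak-weak continuity, and it is the tersest point of the paper's own argument, which instead composes the weak-strong continuous integration operator with operations it asserts to be strongly $L^2$-continuous --- sums, minima, maxima, time-shifts --- rather than committing to a factorization through $C([0,H])$.) To repair your step (b) you would have to confront the pointwise maximum of derivative-level quantities directly instead of absorbing it into a claim about cumulative data.
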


\proof{Proof of \Cref{lem:x'_implies_l:multi_commdity}.} \label{proof:x'_implies_l:multi_commodity}
The key idea of the proof is to start at time $0$ and then extend these functions step by step for later points in time.
Note that we extend over the range of the functions and not over the domain. In each extension step we determine the
change of $\l_{j, v}$ by plugging $x'_{j, e}(\l_u(\phi))$ and $y'_{j, e}(\l_u(\phi))$ into
\eqref{eq:l'_v_min:multi_commodity}. As we assumed that the transit time of every arc is positive we have that $\l_{j,
u}(\phi) < \l_{j, v}(\phi) = \theta$ for all arcs that are active for $j$ and $\phi$. Hence, we can extend these
functions at least by the minimal transit time in every step.

For the formal proof, we initialize $\l_{j, v}(0)$ with the shortest
distance from $\source_j$ to $v$, only considering the transit times. For technical reasons we define the
$\l$-functions also for negative values by setting $\l_{j, u}(\phi) \coloneqq \l_{j, u}(0) - \frac{\phi}{\inrate_j}$ for
all $\phi < 0$. Furthermore, we assume that $x'$ is defined on $\R$ with $x'_{j, e}(\phi) = 0$ for all $\phi \notin [0,
H]$.

For the extension step suppose that there is a $\theta_0 \geq 0$ such that each earliest arrival time
function~$\l_{j, v}$ is already defined on an interval $(- \infty, \phi_{j, v}]$ with $\l_{j, v}(\phi_{j, v}) = \theta_0$ and
that these $\l$-functions satisfy the condition of the lemma on this interval. Clearly, this is given for $\theta_0 = 0$ as
\eqref{eq:l'_s:multi_commodity} and \eqref{eq:l'_v_min:multi_commodity} only have to hold for non-negative $\phi$.

We are going to extend each function $\l_{j, v}$ such that the properties hold up to some $\theta_0 + \alpha$.

For $v = \source_j$ we set
\[\l_{j, \source_j}(\phi) \coloneqq \frac{\phi}{\inrate_j}.\]
In order to extend $\l_{j, v}$ for $v \neq \source_j$ we consider all incoming arcs $e = uv \in \delta_v^-$ that are
active for $\phi_{j, u}$ according to the function values from the past that are defined already. In other words, we
define
\[\delta'_v \coloneqq \Set{e=uv \in \delta_v^- | \l_{j, u}(\phi_{j, u}) + \tau_e \leq \l_{j, v}(\phi_{j, u})}.\]

Since we consider strictly positive transit times we have $\l_{j, v}(\phi_{j, u}) \geq \l_{j, u}(\phi_{j, u}) +
\tau_e = \theta_0 + \tau_e > \theta_0$, and hence, $\phi_{j, v} < \phi_{j, u}$, for all $e = uv \in \delta'_v$.

We define for all $\phi \in [\phi_{j, v}, \phi_{j, u})$
\[\phi^i_{j, u}(\phi) \coloneqq \min \Set{ \varphi \geq 0 | \l_{i, u}(\varphi) = \l_{j, u}(\phi)} \quad \text{ and } \quad
y'_{j, e}(\phi) \coloneqq \sum_{i \in \J \setminus \set{j}} x'_{i, e} (\phi^i_{j, u} (\phi)) 
\cdot \frac{\l'_{j, u}(\phi)}{\l'_{i, u}(\phi^i_{j, u}(\phi))}.\] 

Here, $\l'_{j, u}$ and $\l'_{i, u}$ are the derivatives of the corresponding functions $\l_{j, u}$ and $\l_{i, u}$,
which are well-defined on $(- \infty, \phi_{j, u})$ and $(- \infty, \phi^i_{j, u}(\phi_{j, u}) )$, respectively, as
$\l_{i, u} (\phi^i_{j, u}(\phi_{j, u})) = \l_{j, u}(\phi_{j, u}) = \theta_0$.

To determine the earliest arrival time of $\phi$ at $v$ when using arc $e = uv$ we define
\[\rho_{j, e}(\phi) \coloneqq \begin{cases}
 \frac{x'_{j, e}(\phi) + y'_{j, e}(\phi)}{\capa_e} & \text{ if } \l_{j, u}(\phi) + \tau_e < \l_{j, v}(\phi),\\
\max\Set{\l'_{j, u}(\phi), \frac{x'_{j, e}(\phi) + y'_{j, e}(\phi)}{\capa_e}} & \text{ else. }\\
\end{cases}\]

Finally, we extend the earliest arrival time $\l_{j, v}$ for $\phi \in (\phi_{j, v}, \phi_{j, v} + \epsilon]$ by
\[\l_{j, v}(\phi) \coloneqq \l_{j, v}(\phi_{j, v}) 
+ \min_{e \in \delta'_v} \int_{\phi_{j, v}}^{\phi} \rho_{j, e}(\xi) \diff \xi.\]
If we choose $\epsilon$ to be small enough, such that $\phi_{j, v} + \epsilon \leq \phi_{j, u}$ for all $u$ with $uv \in
\delta'_v$, the right side is always well-defined. Clearly, the extended function $\l_{j, v}$ is continuous and
monotonically increasing, and by construction it satisfies \eqref{eq:l'_v_min:multi_commodity}, since an active arc has
a positive waiting time at $\l_{j, u}(\phi)$ if, and only if, $\l_{j, u}(\phi) + \tau_e < \l_{j, v}(\phi)$.

Note that all $\l'_{j, v}$ are bounded from above, as $x'_{j,e}$ is bounded, and therefore, there exists an $\alpha >
0$ independent of $\theta_0$ such that we can extend all $\l$-functions to $\theta_0 + \alpha$.
By iteratively applying this extension step we end up with $\l$-functions that are at least defined on $[0, H]$.

As the procedure only depends on the $x$-functions this construction provides a mapping $x' \mapsto \l$, which is
weak-strong continuous as the integration operator on compact intervals is weak-strong continuous in $L^2$.
Furthermore, all operations we used, such as taking sums, minima, maxima and doing time-shifting are continuous mappings
when considering the $L^2$-norm. As $x' \to \l$ is a concatenation of continuous functions with a weak-strong continuous
function it is also weak-strong continuous.
\hfill\Halmos\endproof
 
\Cref{lem:x'_implies_l:multi_commdity} shows that for a given $x' \in \X$ we obtain functions $\l_{j, v}$, which we can
plug into \Cref{lem:q_characterized_by_l:multi_commodity} in order to obtain waiting time functions $(q_e)_{e \in E}$.
Note that the mapping $x' \mapsto q$ is also weak-strong continuous. Furthermore, the $\l$-functions satisfy
\Cref{eq:bellman:multi_commodity}, as we have already shown in the second half of the proof of
\Cref{thm:thin_flows_form_Nash_flows:multi_commodity} (we do not use \eqref{eq:l'_v_tight:multi_commodity} in this part
of the proof). It is worth noting, however, that these $\l$- and $q$-functions do not belong to a feasible flow over
time, in general, as flow conservation might not hold when deriving in- and outflow rate functions in the usual way.

Finally, we can define the weak-strong continuous mapping $\A \colon \X \to L^2([0, H])^{\J \times E}$ by
\[(x'_{j, e})_{j \in \J, e \in E} \mapsto (h_{j, e})_{j \in \J, e \in E} \quad \text{ with } \quad h_{j, e}(\phi) 
\coloneqq  \l_{j, u}(\phi) + \tau_e + q_e(\l_{j, u}(\phi)) - \l_{j, v}(\phi).\]

In other words, if $x'$ corresponds to a feasible flow over time, $h_{j, e}(\phi)$ denotes the delay of particle~$\phi$
when traveling as fast as possible to $u$ first and then using arc $e$, instead of taking the fastest direct route to
$v$. In a Nash flow over time this value should always be $0$ for each arc $e$ with $x'_{j, e}(\phi) > 0$.

\begin{theorem}\label{thm:existence_of_multi_commodity_thin_flow:multi_commodity}
For every multi-commodity network with positive transit times there exists a multi-commodity thin flow, and hence, a
multi-commodity Nash flow over time.
\end{theorem}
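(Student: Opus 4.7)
The plan is to apply Br\'ezis' theorem (\Cref{thm:brezis:prelim}) to the variational inequality \eqref{eq:VI:multi_commodity}. The preparatory discussion has already verified the hypotheses: $\X$ is a non-empty, closed, convex, and bounded subset of $L^2([0, H])^{\J \times E}$, and the mapping $\A$ is weak-strong continuous since it factors through the weak-strong continuous map $x' \mapsto \l$ of \Cref{lem:x'_implies_l:multi_commdity}, the continuous map $\l \mapsto q$ induced by \Cref{lem:q_characterized_by_l:multi_commodity}, and continuous pointwise arithmetic. Br\'ezis' theorem therefore yields some $x^* \in \X$ solving \eqref{eq:VI:multi_commodity}, and \Cref{lem:x'_implies_l:multi_commdity} simultaneously supplies corresponding $\l$-functions that already satisfy \eqref{eq:l'_s:multi_commodity} and \eqref{eq:l'_v_min:multi_commodity}. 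It then suffices to verify \eqref{eq:l'_v_tight:multi_commodity} for the pair $(x^*, \l')$.

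For this I would first note that $h_{j, e}(\phi) \ge 0$ pointwise, since the Bellman recursion \eqref{eq:l'_v_min:multi_commodity} forces $\l_{j, v}(\phi) \le \l_{j, u}(\phi) + \tau_e + q_e(\l_{j, u}(\phi))$. Moreover, for every $\phi$ and every $j \in \J$ the Bellman structure produces an $\source_j$-$\sink_j$-path along which every arc is active (build the path backward from $\sink_j$ by picking at each node a predecessor attaining the minimum in \eqref{eq:l'_v_min:multi_commodity}). A measurable choice of such paths yields a comparison flow $z' \in \X$ with $\scalar{\A(x^*)}{z'} = 0$. Plugging this into \eqref{eq:VI:multi_commodity} gives $\scalar{\A(x^*)}{x^*} \le 0$, and the non-negativity of $h$ and $x^*$ then forces the pointwise complementarity $h_{j, e}(\phi) \cdot x^*_{j, e}(\phi) = 0$ for almost all $\phi$ and every $(j, e)$.

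To convert this into \eqref{eq:l'_v_tight:multi_commodity}, fix $(j, e)$ and let $S_{j, e} = \set{\phi \in [0, H] | x^*_{j, e}(\phi) > 0}$. Then $h_{j, e}$ vanishes a.e.\ on $S_{j, e}$, so at almost every Lebesgue density point of $S_{j, e}$ at which $h_{j, e}$ is differentiable we have $h'_{j, e}(\phi) = 0$. Unwinding the definition gives
\[
\l'_{j, v}(\phi) = \bigl(1 + q'_e(\l_{j, u}(\phi))\bigr) \cdot \l'_{j, u}(\phi) = \T'_e(\l_{j, u}(\phi)) \cdot \l'_{j, u}(\phi),
\]
and combining \Cref{lem:technical_properties:base}~\ref{it:derivative_of_q:base} with the identity $f_e^+(\l_{j, u}(\phi)) \cdot \l'_{j, u}(\phi) = x'_{j, e}(\phi) + y'_{j, e}(\phi)$ for the reconstructed inflow rates (exactly as in the proof of \Cref{thm:derivatives_form_thin_flows:multi_commodity}) identifies this quantity with $\rho_{j, e}^\phi(\l'_{j, u}(\phi), x'_{j, e}(\phi), y'_{j, e}(\phi))$. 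Thus \eqref{eq:l'_v_tight:multi_commodity} holds, $(x^*, \l')$ is a multi-commodity thin flow, and \Cref{thm:thin_flows_form_Nash_flows:multi_commodity} produces the desired multi-commodity Nash flow over time.

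The main obstacle is the final differentiation step: one must reconcile the two expressions for $\T'_e(\l_{j, u}(\phi)) \cdot \l'_{j, u}(\phi)$, namely the one coming from the $q$-formula of \Cref{lem:q_characterized_by_l:multi_commodity} (where the maximizing commodity can switch along the way) and the one coming from the combined inflow rate identity, while correctly handling the case split $q_e(\l_{j, u}(\phi)) > 0$ versus $q_e(\l_{j, u}(\phi)) = 0$ built into the definition of $\rho_{j, e}^\phi$. A secondary but non-trivial technicality is the measurable selection of active paths needed to construct the comparison flow $z' \in \X \subseteq L^2([0, H])^{\J \times E}$.
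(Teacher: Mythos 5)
Your proposal follows the paper's setup exactly up to the application of Br\'ezis' theorem (\Cref{thm:brezis:prelim}) to \eqref{eq:VI:multi_commodity} and the use of \Cref{lem:x'_implies_l:multi_commdity} for \eqref{eq:l'_s:multi_commodity} and \eqref{eq:l'_v_min:multi_commodity}, but you extract \eqref{eq:l'_v_tight:multi_commodity} by a genuinely different route. The paper argues by contradiction and local perturbation: assuming the tightness condition fails on a positive-measure set $\Phi$ with $x'_{j,e}\geq\epsilon$, it partitions $\Phi$ by pairs of paths $(P,Q)$, shifts $\epsilon$ units of flow from a flow-carrying path $P$ through $e$ onto an active path $Q$, and shows $\scalar{\A(x')}{z'-x'}<0$. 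You instead test the inequality once, globally, against a comparison flow $z'$ routed entirely along active paths, obtaining $\scalar{\A(x^*)}{x^*}\le 0$, which together with $h\ge 0$ and $x^*\ge 0$ yields the pointwise complementarity $h_{j,e}\cdot x^*_{j,e}=0$ a.e.; differentiating $h_{j,e}=0$ at density points of $\{x^*_{j,e}>0\}$ then gives the tightness condition. Both arguments rest on the same two ingredients — a (finitely-classed, hence measurable) selection of active $\source_j$-$\sink_j$-paths guaranteed by the Bellman recursion, and the identification of $\frac{\diff}{\diff\phi}\T_e(\l_{j,u}(\phi))$ with $\rho^\phi_{j,e}$ — and your version has the merit of making explicit the density-point differentiation step that the paper uses only implicitly (its final strict inequality $\int_\Phi h_{j,e}>0$ requires exactly that $h_{j,e}=0$ a.e. on a set forces the derivative condition a.e. there). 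Your complementarity argument also directly delivers the support condition $x'_{j,e}(\phi)=0$ for $e\notin E'_{j,\phi}$, which the paper's definition of $\X$ does not enforce. The one caveat, which you correctly flag yourself, is that at this stage of the proof there is no feasible flow over time yet, so \Cref{lem:technical_properties:base}~\ref{it:derivative_of_q:base} cannot be invoked literally; the identity $\frac{\diff}{\diff\phi}\T_e(\l_{j,u}(\phi))=\rho^\phi_{j,e}$ must be checked directly from the $q$-functions of \Cref{lem:q_characterized_by_l:multi_commodity} and the $y'$-functions of \Cref{lem:x'_implies_l:multi_commdity} — a gap the paper itself papers over by appealing to the second half of the proof of \Cref{thm:thin_flows_form_Nash_flows:multi_commodity}. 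I consider your proposal correct at the same level of rigor as the published argument.
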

\proof{Proof of \Cref{thm:existence_of_multi_commodity_thin_flow:multi_commodity}.}
Let $x'$ be a solution to the variational inequality constructed above, which exists due to \Cref{thm:brezis:prelim}. In
other words, it holds that
\[\sum_{e \in E, j \in \J} \int_0^H (\l_{j, u}(\phi) + \tau_e + q_e(\l_{j, u}(\phi)) 
- \l_{j, v}(\phi) ) \cdot (z'_{j, e} - x'_{j, e}) \diff \xi \geq 0 \qquad \text{ for } z' \in \X.\]
Let $(\l_v)_{v \in V}$ be the node labels corresponding to $x'$ according to \Cref{lem:x'_implies_l:multi_commdity} with
derivatives $(\l'_v)_{v \in V}$. We will show that $(x', \l')$ satisfies the multi-commodity thin flow conditions for
$\phi \in [0, H]$.

As \eqref{eq:l'_s:multi_commodity} and \eqref{eq:l'_v_min:multi_commodity} hold for $(\l'_v)_{v \in V}$ by
\Cref{lem:x'_implies_l:multi_commdity} it only remains to show that \eqref{eq:l'_v_tight:multi_commodity} holds for
almost all $\phi \in [0, H]$. In order to do so, suppose that there exist a commodity $j$, an arc $e = uv$ and a set
with positive measure $\Phi \subseteq [0, H]$ such that $x'_{j, e}(\phi) > 0$ and
\[\l'_{j,v}(\phi) < \rho_{j, e}^\phi\left(\l'_{j,u}(\phi), x'_{j, e}(\phi), y'_{j, e}(\phi)\right).\]
We assume that $\Phi$ is contained in a small interval $[a, b]$ and that $x'_{j, e}(\phi) \geq \epsilon$ for some
$\epsilon > 0$.

Note that for every $\phi \in \Phi$ there are two $\source_j$-$\sink_j$-paths $P_\phi, Q_\phi$, which satisfy the
following conditions. Firstly, we require that $e \in P_\phi$ and $x'_{j, e'}(\phi) > \epsilon$ for all $e' \in P_\phi$,
and secondly, for all $e' = u'v' \in Q_\phi$ we demand that $e' \in E'_{j, \phi}$ as well as
\[\l'_{j,v'}(\phi) = \rho_{j, e'}^\phi\left(\l'_{j,u'}(\phi), x'_{j, e'}(\phi), y'_{j, e'}(\phi)\right).\]
The existence of $P_\phi$ follows by the flow conservation of the static flow $x'_{j, e}(\phi)$ ($\epsilon$ can be
redefined to be small enough) and the existence of $Q_\phi$ follows by the construction of the $\l'$-functions.

It is possible to partition $\Phi$ into measurable sets such that the particles $\phi$ of each subset have the same
paths-pair $(P_\phi, Q_\phi)$. Thus, at least one of these subsets has to have a positive measure, and hence, without
loss of generality, we can assume that all particles in $\Phi$ have the same pair of paths, which we denote by $P$ and
$Q$.

We set $z' \coloneqq x'$ with the exception of $z'_{j, e'}(\phi) \coloneqq x'_{j, e'}(\phi) - \epsilon$ for all $e' \in
P$. Furthermore, let $\phi \in \Phi$ and $z'_{j, e'}(\phi) \coloneqq x'_{j, e'}(\phi) + \epsilon$ for all $e' \in Q$ and
$\phi \in \Phi$. Clearly, $z' \in K$, as the small shift of flow from $P$ to $Q$, does not violate the flow conservation
and does not change the total flow value. We obtain that
\begin{align*}
\scalar{\A (x')}{z' - x'} &=\sum_{e \in E, j \in \J} \int_0^H T_e(\l_{j, u}(\phi)) \cdot (z'_{j, e} - x'_{j, e}) \diff \xi \\
&= - \epsilon \cdot \sum_{e' = u'v' \in P} \int_{\Phi} T_{e'}(\l_{j, u'}(\phi)) - \l_{j, v'}(\phi) \diff \phi 
+ \epsilon \cdot \sum_{e' \in Q} \int_{\Phi} T_{e'}(\l_{j, u'}(\phi))- \l_{j, v'}(\phi) \diff \phi\\
&\leq - \epsilon \cdot \int_{\Phi} T_e(\l_{j, u}(\phi))- \l_{j, v}(\phi)\diff \phi 
+ \epsilon \cdot \sum_{e' \in Q} \int_{\Phi} T_{e'}(\l_{j, u'}(\phi))- \l_{j, v'}(\phi) \diff \phi\\
&= - \epsilon \cdot \int_{\Phi} T_e(\l_{j, u}(\phi))- \l_{j, v}(\phi) < 0.
\end{align*}

The first inequality follows, since $\l$ satisfies \eqref{eq:bellman:multi_commodity}, and hence, $T_{e'}(\l_{j,
u'}(\phi))- \l_{j, v'} \geq 0$ for all $e' = u'v' \in E$. The last equation holds, since $Q$ is a path of active arcs
for all particles in $\Phi$, and therefore, $T_{e'}(\l_{j, u'}(\phi))- \l_{j, v'}(\phi) = 0$ for all $e' = u'v' \in Q$
and all $\phi \in \Phi$. But this is a contradiction to the variational inequality \eqref{eq:VI:multi_commodity}. Hence,
$(x', \l')$ satisfies the thin flow conditions for almost all $\phi \in [0, H]$.

This shows the existence of a multi-commodity thin flow and with \Cref{thm:thin_flows_form_Nash_flows:multi_commodity}
it follows that there also exists a multi-commodity Nash flow over time on every multi-commodity network.
\hfill\Halmos\endproof

\new{\begin{remark}\label{remark:piecewise_constant_thin_flows}
It is worth noting that this existence proof only works for bounded inflows durations $I_j$ as the set $\X$ must be a
bounded set. Moreover, it is not possible (at least not in a straight-forward manner) to obtain a Nash flow over time
for unending inflow rates by constructing a sequence of Nash flows over time with increasing inflow durations $I_j$. The
reason for this is the following. Since flow entering at a later point in time can significantly influence the route of
flow entering earlier, the Nash flows over time in such a sequence would not necessarily be
nested. But this is an essential prerequisite to apply a transfinite induction as it was used for the single-commodity
case; see \cite{cominetti2011existence}. Hence, the existence of Nash flows over time with unending inflow remains an open problem.
\end{remark}}

\paragraph{Structure of multi-commodity Nash flows over time.} Unfortunately, the properties of the $x'$- and $\l'$-functions of a multi-commodity thin flow over time remain unknown.
We conjecture that the earliest arrival times $\l$ must be piece-wise linear, or in other words, the $\l'$
functions must be piece-wise constant. The intuition behind this is the following consideration. Whenever a current
shortest path changes for some commodity, this means that either an arc became active or a queue depleted. It seems
that every arc can only be responsible for a countable amount of events and every event will cause at most a countable
amount of jump points for the $\l'$-functions. Unfortunately, we were not able to prove this yet.

\newpage
\section{Connection to common destination or common origin settings.} \label{sec:common_destination_or_origin} Even though multi-commodity Nash flows over time exist, we do not
know how to construct them, as exact solutions to infinite-dimensional variational inequalities cannot be computed
algorithmically. In order to transfer the constructive existence proof of Koch-Skutella-model to a multi-terminal setting, we considered, in collaboration with Skutella, the
special case where all commodities have the same destination or the same origin; see \cite{sering2018multiterminal}. 
\new{In this section we want to briefly recall these special cases and show that they, indeed, provide a multi-commodity Nash flow over time as defined in \Cref{def:Nash_flow:multi_commodity}.}

\subsection{Common destination.} \label{subsec:multi_source}
In the common-destination-setting we have multiple sources but only one
sink. We consider only one commodity, i.e., only one flow, where each particle can choose the source for entering the
network; see \Cref{fig:setting:multi_source}. Sources further away from the sink might not be chosen by particles with
high priority, but their network inflow rates will be maximal from time $0$ onwards nonetheless.

\begin{figure}[b]
\centering \includegraphics{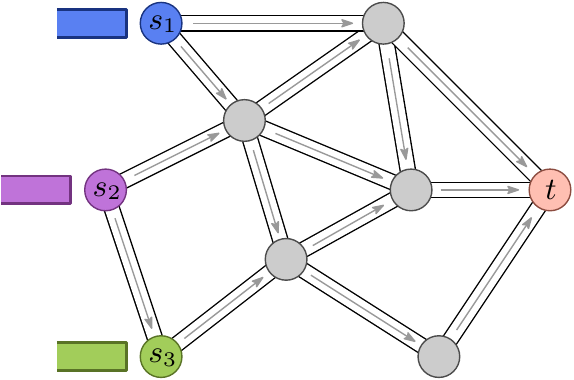} \qquad \qquad \includegraphics{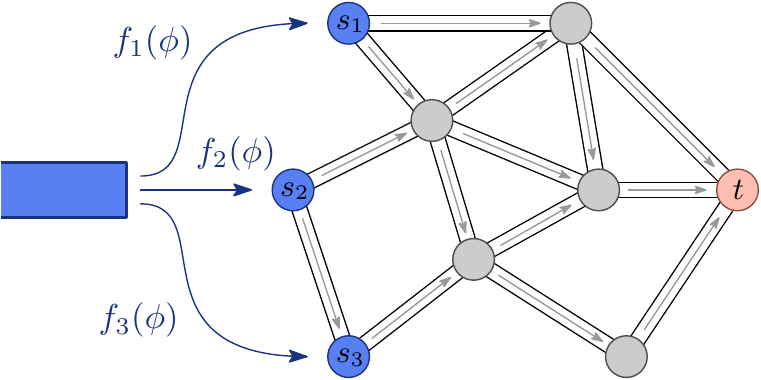}
\caption{\emph{On the left:} A multi-commodity network where each commodity has the same destination $\sink$. \emph{On
the right:} In this case we can construct Nash flows over time by reducing the problem to a single commodity problem.
The inflow distribution denotes the proportion of the flow that enters through each source.}
\label{fig:setting:multi_source}
\end{figure}

\paragraph{Evacuation scenarios.} Even though this setting seems to be artificial at first glance it has an important
and relevant application. Imagine an inhabited region with a high risk of flooding, where in the case of rising water
levels everyone tries to reach a high-altitude shelter as fast as possible. Since the road users seek for protection and
do not care at which of the shelters they end up, we can connect all nodes representing one of these safe places to a
super sink. Without regulations an evacuation now corresponds to a Nash flow over time with multiple sources but only a
single sink, as everyone starts at their home and tries to reach one of the shelters as fast as possible.

\paragraph{Networks.} To model this special case we consider a network $G = (V, E)$ with transit times $\tau_e \geq 0$,
capacities $\capa_e > 0$ and a sink node $\sink$ as before. But this time we have, in addition, a set of sources
$\Sources \coloneqq \Set{\source_j | j \in J}$ with network inflow rates $\inrate_j$ for each commodity $j$. We assume
that every source can reach the sink, that every node is reachable by at least one source and that all directed cycles
have a strictly positive total transit time.

We do not distinguish between different commodities because as soon as the particles have entered the network they all
have the same goal, namely to reach the sink as fast as possible, and therefore, their identity is interchangeable.

We use the same notation of flow rates, cumulative flows, queue sizes and waiting times as before, this time,
however, we say that flow is conserved on a node $v \in V \setminus \set{\sink}$ if
\begin{equation} \label{eq:flow_conservation:multi_source}
\sum_{e\in \delta^+_v} f_e^+(\theta) - \sum_{e \in \delta^-_v} f_e^-(\theta) = \begin{cases}
0 & \text{ if } v \in V\setminus \Sources, \\
\inrate_j & \text{ if } v = \source_j \in \Sources.
\end{cases}\end{equation}

A \defemph{flow over time} is a family of locally integrable and bounded functions $(f_e^+, f_e^-)_{e \in E}$ that
satisfies \eqref{eq:conservation_on_arc:multi_commodity} (for a single commodity only) and \eqref{eq:flow_conservation:multi_source} and it is
\defemph{feasible} if \eqref{eq:totaloutflow:multi_commodity} is fulfilled.

\paragraph{Inflow distributions.} In order to denote which particle enters through which source, we introduced a new set of functions in \cite{sering2018multiterminal}. A family of locally integrable functions $f_j\colon \Flow \to [0, 1]$, for $j \in \J$, is called
\defemph{inflow distribution} if $\sum_{j \in \J} f_j(\phi) = 1$ for almost all $\phi \in \Flow$ and if each
\defemph{cumulative source inflow} $F_j(\phi) \coloneqq \int_{0}^{\phi} f_j(\varphi) \diff \varphi$ is unbounded
for~$\phi \to \infty$. The function $f_j(\phi)$ describes the fraction of particle~$\phi$ that enters the network trough
$\source_j$. The cumulative source inflow functions have to be unbounded in order to guarantee that the inflow rates at
the sources never run dry.

\paragraph{Source arrival times.} Given a feasible flow over time $f$, the \defemph{source arrival time} functions map
each particle $\phi \in \Flow$ to the time it arrives at $\source_j$ and they are given by
\[\T_j(\phi) \coloneqq \frac{F_j(\phi)}{\inrate_j}.\]  

\paragraph{Earliest arrival times.} 
The \defemph{earliest arrival times} are now defined by
  \begin{align}  \label{eq:bellman:multi_source}
  \begin{aligned}
  \l_{\source_j}(\phi) &= \;\min\left(\{\:\T_j(\phi)\:\} \cup \Set{\T_e(\l_u(\phi)) | e = u \source_j\in E}\right)
  &&\quad \text{ for } j \in \J, \\
  \l_v(\phi) &= \!\!\min_{e = u v\in E} \T_e(\l_u(\phi))  &&\quad \text{ for } v
  \in V\backslash \Sources.
  \end{aligned}
  \end{align}
  This is well-defined since all cycles in $G$ have positive travel times by assumption.

\paragraph{Current shortest paths networks and active arcs.} As before we call an arc $e = uv$ \defemph{active}
for~$\phi$ if $\l_v(\phi) = \T_e(\l_u(\phi))$ holds and we denote the set of all active arcs for a particle~$\phi$ by
$E'_\phi$ as well as the \defemph{current shortest paths network} by $G'_\phi = (V, E'_\phi)$. Furthermore, $E^*_\phi
\coloneqq \!\set{\!e = uv \in E| q_e(\l_u(\theta)) > 0\!}$ denotes the set of \defemph{resetting arcs}.

\paragraph{Multi-source Nash flows over time.} A dynamic equilibrium now consists of a feasible flow over
time together with an inflow distribution and each particle chooses a convex combination of routes from the sources to
the sink such that it arrives there as fast as possible.

\begin{definition}[Multi-source Nash flow over time] \label{def:Nash_flow:multi_source} 
A tuple $f = ((f^+_e)_{e\in E},(f_j)_{j \in \J})$ consisting of a feasible flow over time and an inflow distribution  is a
\defemph{multi-source Nash flow over time} if the following two \defemph{Nash flow conditions} hold:
\begin{align}
\l_{\source_j}(\phi) &= \T_j(\phi)  &&\text{ for all } j \in \J \text{ and almost all } \phi \in \Flow,
\label{eq:Nash_condition_source:multi_source} \tag{msN1}\\
f^+_e(\theta) &> 0 \;\;\Rightarrow\;\; \theta \in \l_u(\Phi_e) &&\text{ for all arcs } e = uv \in E \text{ and almost all }
\theta \in [0, \infty),\tag{msN2}\label{eq:Nash_condition_active:multi_source} 
\end{align}
where $\Phi_e \coloneqq \set{\phi \in \Flow | e \in E'_\phi}$ is the set of flow particles for which arc $e$ is
active.
\end{definition}

Figuratively speaking, these two conditions mean that entering the network through a source $\source_j$ is always a
fastest way to reach $\source_j$ \eqref{eq:Nash_condition_source:multi_source} and that a Nash flow over time uses only
active arcs \eqref{eq:Nash_condition_active:multi_source}, and therefore only shortest paths, to $\sink$.

\begin{lemma}[cf. Lemma 2 in \cite{sering2018multiterminal}] \label{lem:nash_flow_characterization:multi_source}
A tuple~$f = ((f^+_e)_{e\in E}, (f_j)_{j \in \J})$ of a feasible flow over time and an inflow distribution is a Nash
flow over time if, and only if, we have
\[F_e^+(\l_u(\phi)) = F_e^-(\l_v(\phi)) \quad \text{ and } \quad F_j(\phi) = \l_{\source_j}(\phi) \cdot \inrate_j\]
for all arcs $e = uv \in E$, every $j \in \J$, and all particles $\phi \in \Flow$.
\end{lemma}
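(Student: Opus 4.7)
The proof splits into two directions. In both, the source-level identity $F_j(\phi) = \l_{\source_j}(\phi)\cdot\inrate_j$ is equivalent to \eqref{eq:Nash_condition_source:multi_source} via the defining relation $\T_j(\phi) = F_j(\phi)/\inrate_j$; I therefore focus on the arc-level identity $F_e^+(\l_u(\phi)) = F_e^-(\l_v(\phi))$.

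For the ``Nash $\Rightarrow$ equations'' direction I plan to mirror the sketch already given for \Cref{lem:nash_flow_characterization:multi_commodity}. Fixing $e = uv$ and $\phi$, I split into two cases. If $e$ is active for $\phi$, then $\l_v(\phi) = \T_e(\l_u(\phi))$ and the identity is immediate from the single-commodity FIFO relation \Cref{lem:technical_properties:base}~\ref{it:in_equals_out_at_exit_time:base}. If $e$ is inactive for $\phi$, I set $\hat\theta \coloneqq \sup\set{\theta \leq \l_u(\phi) | \theta \in \l_u(\Phi_e)}$; the Nash condition \eqref{eq:Nash_condition_active:multi_source} then forces $F_e^+$ to be constant on $[\hat\theta, \l_u(\phi)]$, and applying the active case at the witness particle reaching $u$ at time $\hat\theta$, together with monotonicity of $F_e^-$ and of $\l_v$, yields $F_e^+(\l_u(\phi)) \leq F_e^-(\l_v(\phi))$. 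The reverse inequality follows from $\l_v(\phi) \leq \T_e(\l_u(\phi))$ combined once more with \Cref{lem:technical_properties:base}~\ref{it:in_equals_out_at_exit_time:base} and the monotonicity of $F_e^-$.

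For the converse ``equations $\Rightarrow$ Nash'' direction I plan to argue by contradiction. Supposing the arc Nash condition fails, I extract an arc $e=uv$ and an open interval $(\theta_1,\theta_2) \subseteq [0,\infty)\setminus \l_u(\Phi_e)$ on which $F_e^+$ strictly grows, using that $\Phi_e$ is closed (as the zero-set of the continuous map $\phi \mapsto \T_e(\l_u(\phi)) - \l_v(\phi)$) and $\l_u$ is continuous. Picking $\phi_i$ with $\l_u(\phi_i)=\theta_i$, combining the hypothesis with \Cref{lem:technical_properties:base}~\ref{it:in_equals_out_at_exit_time:base} gives $F_e^-(\l_v(\phi)) = F_e^-(\T_e(\l_u(\phi)))$ for every $\phi\in(\phi_1,\phi_2)$, so $F_e^-$ is constant on each slab $[\l_v(\phi),\T_e(\l_u(\phi))]$. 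As $\phi$ varies continuously from $\phi_1$ to $\phi_2$, the upper endpoints $\T_e(\l_u(\phi))$ traverse $(\T_e(\theta_1),\T_e(\theta_2))$, and by the intermediate value theorem every point of this open interval lies in one of these slabs. Hence $F_e^-$ is constant on $(\T_e(\theta_1),\T_e(\theta_2))$, contradicting $F_e^-(\T_e(\theta_2)) - F_e^-(\T_e(\theta_1)) = F_e^+(\theta_2) - F_e^+(\theta_1) > 0$.

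The main obstacle will be the converse direction: I need to verify carefully that the violating set contains a nontrivial open interval of times $\theta$ and to handle subintervals on which $\l_u$ is constant (where the topology of $\l_u(\Phi_e)$ and the behavior of the sweeping map may be subtle), so that the covering-by-slabs argument genuinely forces the contradiction.
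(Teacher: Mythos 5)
Your plan is essentially sound; note first that the paper does not actually prove this lemma (it cites Lemma~2 of \cite{sering2018multiterminal}), so the in-paper benchmark is the proof of the multi-commodity analogue, \Cref{lem:nash_flow_characterization:multi_commodity}, in the appendix. Your forward direction mirrors that proof (active case via \Cref{lem:technical_properties:base}~\ref{it:in_equals_out_at_exit_time:base}, inactive case via ``no inflow since the last active time''). Your converse is genuinely different: the paper argues \emph{locally} --- at an inactive particle $\phi$ it chooses $\epsilon$ so small that $\l_v(\phi+\epsilon) < \T_e(\l_u(\phi-\epsilon))$, so the entire outflow window $[\T_e(\l_u(\phi-\epsilon)), \T_e(\l_u(\phi+\epsilon))]$ sits inside the \emph{single} slab $[\l_v(\phi+\epsilon), \T_e(\l_u(\phi+\epsilon))]$ and the inflow over $[\l_u(\phi-\epsilon), \l_u(\phi+\epsilon)]$ vanishes in one stroke --- whereas you argue globally over a maximal inactive interval by covering $(\T_e(\theta_1),\T_e(\theta_2))$ with slabs. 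Your version does work, but the phrase ``hence $F_e^-$ is constant'' is where the real content hides: a continuous monotone function covered by nondegenerate intervals of constancy is not automatically constant; you must use that each $\zeta$ is the \emph{right} endpoint of a slab of positive length (because $e$ is inactive for $\phi_\zeta$, so $\l_v(\phi_\zeta) < \zeta$), and then run an infimum/chaining argument, or note that $F_e^-$ can then only increase on the countable set of endpoints of maximal constancy intervals and has no atoms. The paper's local choice of $\epsilon$ is designed precisely to avoid this chaining; your global version is more transparent about where the contradiction lives but costs this extra lemma.

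Two concrete repairs are needed. In the forward direction, the set $\set{\theta \le \l_u(\phi) : \theta \in \l_u(\Phi_e)}$ may be empty (then $F_e^+(\l_u(\phi)) = 0$ and the inequality is trivial); more importantly, if $\hat\theta = \l_u(\phi)$ is attained only through a witness particle $\psi > \phi$ (possible where $\l_u$ is locally constant), then $\l_v(\psi) = \T_e(\l_u(\phi)) > \l_v(\phi)$ and your chain yields $F_e^+(\l_u(\phi)) \ge F_e^-(\l_v(\phi))$, the wrong inequality. Take the supremum over active times \emph{strictly} below $\l_u(\phi)$ (harmless since $F_e^+$ is continuous), which forces $\psi < \phi$ by monotonicity of $\l_u$ --- or index by particles $\xi \le \phi$ as the paper's appendix proof does. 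Finally, both directions quietly use that $\l_u(\Phi_e)$ is closed so that the supremum is attained and the complement decomposes into open intervals; closedness of $\Phi_e$ alone is not enough, and you should add that $\l_u$ is continuous, monotone and unbounded, so its image of a closed set is closed.
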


\paragraph{Underlying static flows.} 
The underlying static flow is now given by two types of functions
\[x_e(\phi) \coloneqq F_e^+(\l_u(\phi)) =  F_e^-(\l_v(\phi)) \quad \text{ and } 
\quad x_j(\phi) \coloneqq F_j(\phi) = \l_{\source_j} (\phi) \cdot \inrate_j.\]
For every $\phi$ this is a static $\Sources$-$\sink$-flow with $x_j(\phi)$ as supply at source $\source_j$ since the
integral of \eqref{eq:flow_conservation:multi_source} over $[0, \l_v(\phi)]$ yields
\begin{equation} \label{eq:x_is_flow:multi_source}
\sum_{e\in \delta^+_v} x_e(\phi) - \sum_{e \in \delta^-_v} x_e(\phi) = \begin{cases}
0 & \text{ if } v \in V \setminus (\Sources \cup \set{\sink}), \\ 
\l_{\source_j} (\phi) \cdot r_j = x_j(\phi) & \text{ if } v = \source_j \in \Sources.
\end{cases}
\end{equation}

Let $x'_e$, $x'_j$ and $\l'_v$ denote the derivative functions, then, it is possible to determine the inflow function of every arc $e = uv$ as well as the
inflow distribution from these derivatives, since
\[x'_e(\phi) = f_e^+(\l_u(\phi)) \cdot \l'_u(\phi) \quad \text{ and } \quad f_j(\phi) = 
\l'_{\source_j}(\phi) \cdot \inrate_j.\]
Consequently, a Nash flow over time is, again, completely characterized by these derivatives.
Differentiating~\eqref{eq:x_is_flow:multi_source} yields that $x'(\phi)$ also forms a static $\Sources$-$\sink$-flow,
which we consider next.

\paragraph{Multi-source thin flows with resetting.} Let $E' \subseteq E$ be a subset of arcs
such that the subgraph $G' = (V, E')$ is acyclic and every node is reachable by some source within $G'$. Note that not
every node needs to be able to reach sink~$\sink$. Additionally, we consider a subset of resetting arcs $E^* \subseteq
E'$. Moreover, let $\X\left(E', (x'_j)_{j \in \J}\right)$ be the set of all static $\Sources$-$\sink$-flows in~$G'$ with
supply~$x'_j$ at source~$\source_j$ for~$x'_j \geq 0$ and $\sum_{j \in \J} x'_j = 1$.

\begin{definition}[Multi-source thin flow with resetting] \label{def:thin_flow:multi_source}
A vector $(x'_j)_{j \in \J}$ with $x'_j \geq 0$ and $\sum_{j \in \J} x'_j = 1$, together with a static flow $(x'_e)_{e \in E}
\in K\left(E', (x'_j)_{j \in \J}\right)$ and a node labeling $(\l'_v)_{v \in V}$ is called \defemph{multi-source thin flow with
resetting} on $E^* \subseteq E'$ if
\begin{alignat}{2}
\l'_{\source_j} &= \frac{x'_j}{\inrate_j} &&\text{ for all } j \in \J, \label{eq:l'_s:multi_source} \tag{msTF1}\\ 
\l'_{\source_j} &\leq \min_{e = u\source_j \in E'} \rho_e(\l'_u, x'_e) \quad && \text{ for all } j \in \J,
\label{eq:l'_s_min:multi_source}\tag{msTF2}\\ 
\l'_v &= \min_{e = uv \in E'} \rho_e(\l'_u, x'_e) \quad && \text{ for all } v \in V \backslash \Sources,
\label{eq:l'_v_min:multi_source}\tag{msTF3}\\ 
\l'_v &= \rho_e(\l'_u, x'_e) && \text{ for all } e = uv \in E' \text{ with } x'_e > 0,
\label{eq:l'_v_tight:multi_source}\tag{msTF4}
\end{alignat}
\[\text{ where } \qquad \rho_e(\l'_u, x'_e) \coloneqq \begin{cases}
\frac{x'_e}{\capa_e} & \text{ if } e = uv \in E^*,\\
\max\Set{\l'_u, \frac{x'_e}{\capa_e}} & \text{ if } e = uv \in E'\backslash E^*.  
\end{cases}\]
\end{definition}

\new{We proved in \cite[Theorem 4]{sering2018multiterminal} that the derivatives of a multi-source Nash flow over time $f$ form a multi-source thin flow with resetting almost everywhere. Furthermore, it is possible to construct a multi-source Nash flow over time by extending it step by step with the help of multi-source thin flows with resetting \cite[Theorem 5, 7, 8]{sering2018multiterminal}.}

\paragraph{Multi-commodity Nash flows over time with common destination.} As the main contribution in this multi-source-setting we want to show that these
multi-source Nash flows over time do indeed correspond to a multi-commodity Nash flow over time where all
commodities share the same destination. To do so, consider a multi-source Nash flow over time $f = (f_e^+,
f_e^-)$ as constructed above with a multi-source thin flow $(x'(\phi), \l'(\phi))$ for each particle $\phi$. By adding a super source
$\source$ and a new arc $e_j = \source \source_j$ carrying a flow of $x'_j(\phi)$ for each $j \in \J$, we obtain an
$\source$-$\sink$-flow of value $1$ and by using the flow decomposition theorem we obtain a path-based
formulation~$(x'_P)_{P \in \Paths}$. For every $j \in \J$ let $\Paths_j$ be all $\source$-$\sink$-paths that start with
the new arc $e_j$. By assigning all flow on these paths to commodity~$j$ we obtain
\[x'_{j, e} \coloneqq \sum_{\substack{P \in \Paths_j\\\text{with } e \in P}} x_P.\]
Setting 
\[f^+_{j, e}(\theta) \coloneqq \frac{x'_{j. e}(\phi)}{\l'_u(\phi)} \quad \text{for } \theta 
= \l_u(\phi) \qquad \text{ and } \qquad
f^-_{j, e}(\theta) \coloneqq \frac{x'_{j, e}(\phi)}{\l'_v(\phi)} \quad \text{for } \theta = \l_v(\phi)\]
for all $\phi \in \Flow$ provides a multi-commodity Nash flow over time with unlimited inflow rates as we show in the
following theorem.

\begin{theorem}\label{thm:multi_source_is_multi_commodity_nash_flow:multi_source}
The family of functions $(f_{j, e}^+, f_{j, e}^-)_{j \in \J, e \in E}$ is a multi-commodity Nash flow over time in a
network with inflow rates~$\inrate_j$ and $I_j = [0, \infty)$ for all $j \in \J$.
\end{theorem}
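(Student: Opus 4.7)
}
The plan is to verify the three defining requirements for $f=(f^+_{j,e},f^-_{j,e})_{j\in\J,e\in E}$ to be a multi-commodity Nash flow over time in the sense of \Cref{def:Nash_flow:multi_commodity}: feasibility (flow conservation per commodity on nodes, the queue dynamics \eqref{eq:totaloutflow:multi_commodity}, and the per-commodity FIFO rule \eqref{eq:FIFO:multi_commodity}), and the Nash condition \eqref{eq:Nash_condition:multi_commodity}. The key bridge between the two worlds is the substitution $\phi_j \coloneqq F_j(\phi) = r_j\, \T_j(\phi)$, which converts a multi-source particle index $\phi$ into a multi-commodity particle index $\phi_j$ for commodity $j$; by the multi-source Nash condition \eqref{eq:Nash_condition_source:multi_source} this gives $\l_{j,\source_j}(\phi_j)=\phi_j/r_j=\T_j(\phi)=\l_{\source_j}(\phi)$, i.e.\ the entry times at $\source_j$ agree.

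First I would establish feasibility. Summing the construction over commodities yields $\sum_j x'_{j,e}(\phi)=x'_e(\phi)$ by the flow-decomposition, hence $\sum_j f^\pm_{j,e}(\theta)=f^\pm_e(\theta)$, so the total flow is the original multi-source Nash flow and therefore automatically satisfies the queue dynamic \eqref{eq:totaloutflow:multi_commodity}. The per-commodity FIFO condition \eqref{eq:FIFO:multi_commodity} is immediate since both ratios coincide with the single ratio fixed at the same $\phi$:
\[
\frac{f^-_{j,e}(\theta)}{f^-_e(\theta)}=\frac{x'_{j,e}(\phi)}{x'_e(\phi)}=\frac{f^+_{j,e}(\vartheta)}{f^+_e(\vartheta)} \qquad\text{for }\vartheta=\l_u(\phi),\ \theta=\T_e(\vartheta)=\l_v(\phi).
\]
Node-wise flow conservation for each commodity follows because, for every fixed $\phi$, the paths in $\Paths_j$ (which begin with $e_j=\source\source_j$ and, since the current shortest paths network of the multi-source Nash flow is a DAG, are simple) give a static $\source_j$-$\sink$-flow $(x'_{j,e}(\phi))_e$ of value $x'_j(\phi)$; dividing by $\l'_v(\phi)$ transfers this to the rate functions, and the identity $\l'_{\source_j}(\phi)=x'_j(\phi)/r_j$ from \eqref{eq:l'_s:multi_source} produces the required source inflow rate $r_j$.

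Next I would match the earliest arrival times. Let $D_j(\phi)$ be the set of nodes reachable from $\source_j$ via arcs that are active in the multi-source Nash flow at particle $\phi$. By induction along this DAG I would show
\[
\l_{j,v}(\phi_j)=\l_v(\phi) \qquad \text{for every }v\in D_j(\phi).
\]
The base case $v=\source_j$ is the identity above. For the inductive step, pick an active arc $e=uv$ with $u\in D_j(\phi)$; by induction $\l_{j,u}(\phi_j)=\l_u(\phi)$, and since $e$ is active in the multi-source sense, $\T_e(\l_u(\phi))=\l_v(\phi)$. The Bellman equation \eqref{eq:bellman:multi_commodity} for commodity $j$ yields $\l_{j,v}(\phi_j)\le \T_e(\l_{j,u}(\phi_j))=\l_v(\phi)$. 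The reverse inequality $\l_{j,v}(\phi_j)\ge\l_v(\phi)$ is the essential step: $\l_v(\phi)$ is the minimum over all sources, while $\l_{j,v}(\phi_j)$ only considers paths from $\source_j$ starting at $\l_{\source_j}(\phi)$, so the infimum is larger. Together these give the claimed equality.

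Finally, the Nash condition follows by combining the two pieces. If $f^+_{j,e}(\theta)>0$, then $\theta=\l_u(\phi)$ for some $\phi$ with $x'_{j,e}(\phi)>0$, meaning some path $P\in\Paths_j$ in the decomposition traverses $e$. Then both $u$ and $v$ lie in $D_j(\phi)$, hence $\l_{j,u}(\phi_j)=\l_u(\phi)=\theta$ and $\l_{j,v}(\phi_j)=\l_v(\phi)=\T_e(\l_u(\phi))=\T_e(\l_{j,u}(\phi_j))$, which shows that $e$ is active for commodity $j$ at $\phi_j$, so $\theta\in\l_{j,u}(\Phi_{j,e})$. Measurability of $\phi\mapsto x'_{j,e}(\phi)$ over a measurable flow decomposition needs a brief technical comment. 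The main obstacle I expect is exactly the induction step above, specifically the inequality $\l_{j,v}(\phi_j)\ge\l_v(\phi)$: one has to exploit that the multi-source Nash condition forces every $\source_i$ to lie on a fastest route to itself, so that the time saved by starting at some other source $\source_i$ cannot be strictly better than going via $\source_j$ along the particular DAG of active arcs used by commodity $j$'s decomposition.
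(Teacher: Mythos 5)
Your proposal is correct, and the overall architecture matches the paper's proof: verify feasibility (per-commodity node conservation from the path decomposition, \eqref{eq:totaloutflow:multi_commodity} because the aggregate is the original multi-source Nash flow, and \eqref{eq:FIFO:multi_commodity} from the common ratio $x'_{j,e}/x'_e$ on active arcs), then establish the Nash property. The one genuine difference is how the Nash property is closed out. The paper goes through the cumulative-flow characterization of \Cref{lem:nash_flow_characterization:multi_commodity}, computing $F^+_{j,e}(\l_u(\phi)) = \int_0^\phi x'_{j,e}(\xi)\diff\xi = F^-_{j,e}(\l_v(\phi))$ — but it states this with the \emph{multi-source} labels $\l_u,\l_v$, whereas the lemma is phrased in terms of the commodity-specific labels $\l_{j,u},\l_{j,v}$ of the constructed flow; the identification of the two is left implicit. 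Your route instead verifies \eqref{eq:Nash_condition:multi_commodity} directly from the definition, and for that you prove the identification explicitly: the reparametrization $\phi_j = F_j(\phi)$ together with \eqref{eq:Nash_condition_source:multi_source} aligns the entry times at $\source_j$, and the DAG induction over active arcs gives $\l_{j,v}(\phi_j)=\l_v(\phi)$ on the part of the network actually used by commodity $j$ (the inequality $\l_{j,v}(\phi_j)\ge\l_v(\phi)$ holding because the multi-source label minimizes over a superset of entry options, and $\le$ following along any active path carrying commodity-$j$ flow). This is more work than the paper does, but it buys you a rigorous bridge between the two label systems — precisely the step the paper's two-line computation sweeps under the rug — so I would regard your version as a welcome strengthening rather than a detour. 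The only points to still tidy up are the ones you already flag (measurable selection of the path decomposition $\phi\mapsto(x'_P(\phi))_P$, and the null sets where $\l'_u(\phi)=0$).
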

\proof{Proof of \Cref{thm:multi_source_is_multi_commodity_nash_flow:multi_source}.}
Flow conservation \eqref{eq:flow_conservation:multi_commodity} for every commodity $j \in J$ on every node $v \in V
\setminus \set{\sink}$ follows immediately since
\[\sum_{e\in \delta^+_v} f_{j, e}^+(\theta) - \sum_{e \in \delta^-_v} f_{j, e}^-(\theta) 
= \sum_{e\in \delta^+_v} \frac{x'_{j, e}}{\l'_{j, v}} - \sum_{e \in \delta^-_v} \frac{x'_{j, e}}{\l'_{j, v}} 
= \begin{cases}
0 & \text{ if } v \neq \source_i,\\
\frac{x'_j}{\l'_v} = \inrate_j & \text{ if } v = \source_j.
\end{cases}\]
  
The condition on the total outflow rate \eqref{eq:totaloutflow:multi_commodity} follows immediately, since the total flow is a feasible single-commodity flow over time. Condition~\eqref{eq:FIFO:multi_commodity} is satisfied for each arc $e = uv \in E$
since in the case of $f_e^+(\l_u(\phi)) > 0$ we have that $e$ is active (from the single-commodity perspective), i.e.,
$\l_v(\phi) = T_e(l_u(\phi))$, and therefore,
\[f^-_{j, e}(\l_v(\phi)) = \frac{x'_{j, e}(\phi)}{\l'_v(\phi)} 
= \frac{x'_e(\phi)}{\l'_v(\phi)} \cdot \frac{x'_{j, e}(\phi)}{\l'_u(\phi)} \cdot \frac{\l'_u(\phi)}{x'_e(\phi)} 
= f^-_e(\l_v(\phi)) \cdot \frac{f^+_{j, e}(\l_u(\phi))}{f_e^+(\l_u(\phi))}.\]
For $f_e^+(\l_u(\phi)) = 0$ we clearly have $x'_{i, e} = 0$, and thus, $f^-_{i, e}(\l_v(\phi)) = 0$. Note that the
function $\l_v$ is continuous and unbounded, and therefore, we can found for every $\theta \geq \l_v(0)$ a $\phi \in
\Flow$ with $\theta = \l_v(\phi)$. For $\theta < \l_v(0)$ we have $f_{j, e}^+(\theta) = f_{j, e}^-(\theta) = 0$ as no
flow has reached $e$ yet. Hence, we have a feasible multi-commodity flow over time.

The multi-commodity Nash flow condition \eqref{eq:Nash_condition:multi_commodity} follows immediately by
\Cref{lem:nash_flow_characterization:multi_commodity} and by
\[ F^+_{j, e}(\l_u(\phi)) = \int_0^{\phi} f_{j, e}^+(\l_u(\xi)) \cdot \l'_u(\xi) \diff \xi 
= \int_0^\phi x'_{j, e}(\xi) \diff \xi = \int_0^{\phi} f_{j, e}^-(\l_v(\xi)) \cdot \l'_v(\xi) \diff \xi 
= F^-_{j, e}(\l_v(\phi)),\]
which holds for all $\phi \in \Flow$.
\hfill\Halmos\endproof

\subsection{Common origin.} \label{subsec:multi_sink} In this last subsection we are going to consider the second special
case analyzed in \cite{sering2018multiterminal}, namely multiple commodities with a common origin. Suppose that each
commodity has its own sink but all flow starts at the same common source; see left the side of
\Cref{fig:setting:multi_sink}. In this scenario the commodities matter a lot, since different flow particles within the
network might want to reach different sinks. Nonetheless, it was shown that this special case, once again, can be
reduced to the single-commodity case by using a super sink construction as it is shown on the right side of
\Cref{fig:setting:multi_sink}. \new{The contribution of this paper is to simplify the notation and, as a new addition,
we show that the different commodities can be again reconstructed by using path decompositions of the thin flows. This
shows, that we indeed obtain a multi-commodity Nash flow over time with common origin.}

\begin{figure}[t]
\centering \includegraphics{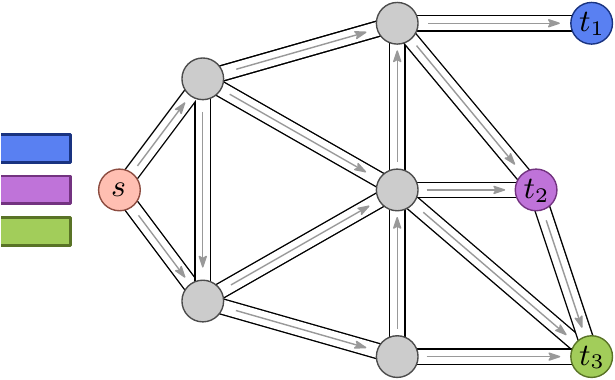} \qquad \includegraphics{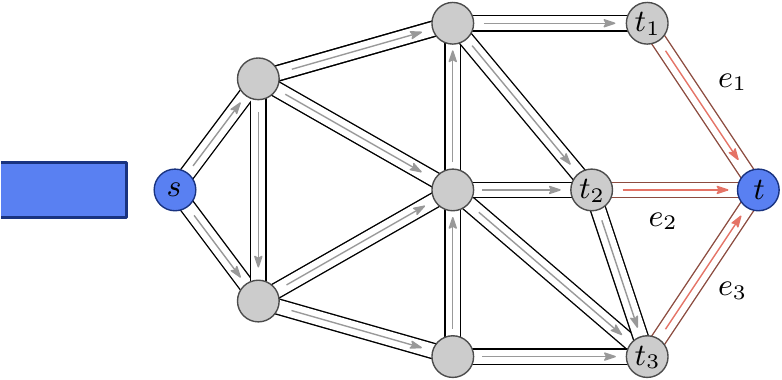}
\caption{\textit{On the left:} A multi-commodity network where all commodities share the same origin. \textit{On the
right:} Constructing a multi-commodity Nash flow over time in this setting can be reduced to a single commodity Nash
flow over time by adding a super sink $t$ and new arcs $e_j$ with very small capacities.} \label{fig:setting:multi_sink}
\end{figure}

\paragraph{Extended graphs.} The key idea for constructing a Nash flow over time in this setting is to add a super sink to the
graph. For this let~$\capa_{\min} \coloneqq \min_{e\in E} \capa_e$ be the minimal capacity of the network, $\inrate
\coloneqq \sum_{j \in \J} \inrate_j$ the total network inflow and $\sigma \coloneqq \min\set{\capa_{\min}, \inrate}$.
For all $j\in \J$ we define $\delta_j$ to be the length of a shortest $\source$-$\sink_j$-path according to the transit
times. Furthermore, let $\delta_{\max} \coloneqq \max_{j \in \J} \delta_j$ be the maximal distance from the source to a
sink. We extend $G$ by a super sink~$\sink$ and $\abs{\J}$ new arcs $e_j \coloneqq (\sink_j, \sink)$ with
\begin{equation}\label{eq:definition_new_arcs:mult_sink}
\tau_{e_j} \coloneqq \delta_{\max} - \delta_j \quad \text{ and } \quad \capa_{e_j} 
\coloneqq \frac{\inrate_j \cdot \sigma}{2\inrate}.
\end{equation}
The \defemph{extended graph} is denoted by $\exG \coloneqq (\exV, \exE)$ with $\exV \coloneqq V \cup \Set{\sink}$ and
$\exE \coloneqq E \cup \Set{e_1, \dots, e_\m}$.
    
Note that the new capacities are strictly smaller than all original capacities and that they
are proportional to the inflow rate of the respective commodity. Furthermore, the transit times are chosen in such a way that all
new arcs are in the current shortest paths network for particle $\phi = 0$. The reason for the choice of $\sigma$ is the
following.
\begin{lemma} \label{lem:bound_for_l':multi_sink}
For every single-commodity thin flow with resetting $(x',\l')$ (see \Cref{def:thin_flow:multi_source} but with $\l'_\source = \frac{1}{r}$ instead of \eqref{eq:l'_s:multi_source} and \eqref{eq:l'_s_min:multi_source}) in $\exG$ it
holds for all $v \in V \setminus \set{\sink}$ that $\l'_v \leq \frac{1}{\sigma}$.
\end{lemma}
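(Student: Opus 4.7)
The plan is to argue by contradiction. Suppose that some node $v^* \in V$ satisfies $\l'_{v^*} > 1/\sigma$, and set $\alpha \coloneqq \l'_{v^*}$. By the thin flow definition every node of $\exV$ is reachable from the source $\source$ within $(\exV, E')$, and since the super sink $\sink$ has no outgoing arc, any $\source$-$v^*$-path in this graph must stay within $V$ and hence uses only original arcs; I pick one such $E'$-path $\source = u_0, u_1, \ldots, u_k = v^*$.

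Next, I locate a ``jump'' along this path. Because $\l'_\source = 1/\inrate \leq 1/\sigma < \alpha$ (using $\sigma \leq \inrate$) and $\l'_{u_k} = \alpha$, there exists an index $i$ with $\l'_{u_i} < \alpha \leq \l'_{u_{i+1}}$. The min-condition \eqref{eq:l'_v_min:multi_source} applied to $e = u_i u_{i+1} \in E'$ yields $\alpha \leq \l'_{u_{i+1}} \leq \rho_e(\l'_{u_i}, x'_e)$. Since $\rho_e > \l'_{u_i}$, inspecting both cases of $\rho_e$ in \Cref{def:thin_flow:multi_source} forces $\rho_e = x'_e/\capa_e$ (the $\l'_u$-branch is ruled out in either case), and therefore $x'_e \geq \alpha \capa_e \geq \alpha \capmin \geq \alpha \sigma > 1$.

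The contradiction now comes from flow decomposition. By definition $(x'_e)_{e \in \exE}$ is a unit $\source$-$\sink$-flow in the extended graph $\exG$, and $\exG$ restricted to $E'$ is acyclic. Hence $x'$ decomposes into $\source$-$\sink$-paths whose values sum to $1$, which gives $x'_e \leq 1$ on every arc, contradicting the bound $x'_e > 1$ derived above.

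The main subtlety is ensuring that the $\source$-$v^*$-path avoids the low-capacity new arcs $e_j = (\sink_j, \sink)$, whose capacities $\capa_{e_j} = \inrate_j \sigma/(2\inrate)$ are strictly smaller than $\sigma$ and would spoil the estimate $\capa_e \geq \sigma$. This is automatic because each $e_j$ points into the super sink $\sink$, which has no outgoing arcs and thus cannot appear as an intermediate node on a path terminating at $v^* \in V$. Once this is observed, the unit-flow upper bound closes the argument cleanly.
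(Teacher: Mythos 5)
Your proof is correct and is essentially the contrapositive of the paper's argument: the paper inducts over the acyclic shortest paths network using $\l'_v \leq \max\{\l'_u, x'_e/\capa_e\} \leq 1/\sigma$ (which rests on the same facts you use, namely $\l'_\source \le 1/\sigma$, $x'_e \le 1$ from the unit-flow decomposition, and $\capa_e \ge \capmin \ge \sigma$ on original arcs), whereas you run the same induction step backwards to locate a ``jump'' arc and derive $x'_e > 1$. Your explicit observation that paths to $v \in V$ cannot traverse the low-capacity arcs $e_j$ is a detail the paper leaves implicit, but the approach is the same.
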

\proof{Proof of \Cref{lem:bound_for_l':multi_sink}.}
We have $\l'_s = \frac{1}{\inrate} \leq \frac{1}{\sigma}$ and $\frac{x_e'}{\capa_e} \leq \frac{1}{\capa_{\min}} \leq
\frac{1}{\sigma}$. Hence, by induction over the acyclic current shortest paths network we obtain that $\l'_v \leq
\max\set{\frac{x_e'}{\capa_e}, \l'_u} \leq \frac{1}{\sigma}$ for all active arcs $e = uv$.
\hfill\Halmos\endproof

\paragraph{Reduction from single-commodity Nash flows over time.} We obtain a multi-commodity Nash flow over time $f$
with common source by using a single-commodity Nash flow over time $\exf$ in $\exG$, which exists due to
\cite{cominetti2011existence}. To prove this we first show that, if all new arcs are active for some particle
$\phi$, then there is a static flow decomposition of the single-commodity thin flow with resetting~$x'$ with $x'_{e_j} =
\inrate_j$. This is formalized in the following lemma, where we write $x'\big|_E$ for the restriction of $x'$ to the
original graph $G$ and $\abs{\,\cdot\,}$ for the flow value of a static flow.
\begin{restatable}[cf. Lemma 11 in \cite{sering2018multiterminal}]{lemma}{lemthinflowsdecompositionmultisink} \label{lem:thin_flow_decomp:multi_sink}
  For any $E^* \subseteq E' \subseteq \exE$ with $\set{e_j|j\in \J} \subseteq E'$ consider the single-commodity thin
  flow with resetting $(x', \l')$ with network inflow rate $\inrate$. There exists a static flow decomposition~$(x'_{j,
  e})_{j \in J, e \in E}$ with $x'\big|_E = \sum_{j \in \J} x'_j$ such that each static flow~$x'_j$ conserves flow on
  all~$v \in V\setminus (\set{\source, \sink_j})$ and $\abs{x'_j} = x'_{e_j} = \frac{\inrate_j}{\inrate}$ for~$j \in
  \J$.
\end{restatable}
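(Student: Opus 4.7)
The plan is to prove the lemma in two stages. In the first stage, I would establish the quantitative equality $x'_{e_j} = r_j/r$ for every $j \in J$, exploiting the specific capacity scaling $\capa_{e_j} = r_j \sigma/(2r)$ together with the uniform bound $\l'_v \leq 1/\sigma$ from \Cref{lem:bound_for_l':multi_sink}. In the second stage, this equality reduces the problem to a routine acyclic flow decomposition of $x'|_E$.

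For the first stage, every flow particle exits $\exG$ through exactly one of the arcs $e_j$, so the total thin flow value gives $\sum_{j \in J} x'_{e_j} = 1$. By the pigeonhole principle at least one $j_0$ satisfies $x'_{e_{j_0}} > r_{j_0}/(2r)$, and the capacity choice then yields $x'_{e_{j_0}}/\capa_{e_{j_0}} > 1/\sigma \geq \l'_{t_{j_0}}$, where the last inequality is \Cref{lem:bound_for_l':multi_sink}. This makes $\rho_{e_{j_0}}(\l'_{t_{j_0}}, x'_{e_{j_0}}) = x'_{e_{j_0}}/\capa_{e_{j_0}}$ regardless of whether $e_{j_0} \in E^*$, so \eqref{eq:l'_v_tight:multi_source} forces $\l'_t = x'_{e_{j_0}}/\capa_{e_{j_0}} > 1/\sigma$. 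Now for any other $j$, \eqref{eq:l'_v_min:multi_source} gives $\l'_t \leq \rho_{e_j}(\l'_{t_j}, x'_{e_j})$; assuming $x'_{e_j} = 0$ one immediately checks $\rho_{e_j} \leq \l'_{t_j} \leq 1/\sigma$, contradicting $\l'_t > 1/\sigma$. Hence $x'_{e_j} > 0$ for every $j$, and the same size argument combined with \eqref{eq:l'_v_tight:multi_source} then forces $\l'_t = x'_{e_j}/\capa_{e_j}$ for all $j$. The common value of $x'_{e_j}/\capa_{e_j}$ makes $x'_{e_j}$ proportional to $\capa_{e_j} \propto r_j$, and the normalisation $\sum_j x'_{e_j} = 1$ pins $x'_{e_j} = r_j/r$.

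For the second stage, the restriction $x'|_E$ is a static flow in $G$ that conserves flow at every node in $V \setminus (\{s\} \cup \{t_j : j \in J\})$ (inherited from conservation of $x'$ in $\exG$, since these nodes have no $e_j$ incident), sends one unit out of $s$, and has net inflow $x'_{e_j} = r_j/r$ at each $t_j$ (because $e_j$ is the unique outgoing arc at $t_j$ in $\exG$). Its support lies inside the acyclic subgraph $(V, E' \cap E)$. Applying the standard acyclic static flow decomposition theorem writes $x'|_E$ as a sum of path flows, each from $s$ to some $t_j$. Grouping paths by their terminal node and setting $x'_{j, e}$ equal to the aggregate weight of paths ending at $t_j$ and traversing $e$ yields commodity flows $x'_j$ that are each $s$-$t_j$ flows of value $r_j/r$ and sum arc-wise to $x'|_E$.

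The main obstacle is the first stage, since the inequality $x'_{e_j}/\capa_{e_j} > \l'_{t_j}$ has to be established without knowing a priori whether $e_j$ is a resetting arc or whether $x'_{e_j}$ is positive. It is precisely the factor $1/2$ in the definition $\capa_{e_j} = r_j\sigma/(2r)$ that turns $\sum_j x'_{e_j} = 1$ into the strict bound $x'_{e_{j_0}} > r_{j_0}/(2r)$ needed to ignite the case analysis; once the single inequality $\l'_t > 1/\sigma$ is available, \Cref{lem:bound_for_l':multi_sink} propagates it to every $j$. The decomposition in the second stage and the flow-conservation bookkeeping are then essentially routine.
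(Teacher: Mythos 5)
Your proposal is correct and follows essentially the same route as the paper: the equality $x'_{e_j} = \inrate_j/\inrate$ is forced by combining \Cref{lem:bound_for_l':multi_sink} with the capacity scaling $\capa_{e_j} = \inrate_j\sigma/(2\inrate)$ and the thin flow conditions \eqref{eq:l'_v_min:multi_source}--\eqref{eq:l'_v_tight:multi_source}, and the decomposition is the standard acyclic path decomposition grouped by exit arc (equivalently, by terminal node $\sink_j$). The only cosmetic difference is that you derive the equality directly (pigeonhole $\Rightarrow \l'_{\sink} > 1/\sigma \Rightarrow$ all $x'_{e_j} > 0$ and all ratios $x'_{e_j}/\capa_{e_j}$ tight, hence proportionality), whereas the paper argues by contradiction from an excess/deficit pair of indices; both are valid.
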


\begin{restatable}[cf. Lemma 12 in \cite{sering2018multiterminal}]{lemma}{lemnewarcareactivemultisink} \label{lem:new_arc_are_active:multi_sink}
In a Nash flow over time $\exf$ in $\exG$ all new arcs $(e_j)_j \in J$ are active for all particles $\phi \in \Flow$.
\end{restatable}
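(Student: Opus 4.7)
The plan is an induction/continuity argument along the particle parameter $\phi$: first verify simultaneous activity of all new arcs at $\phi = 0$, then show that if all $e_j$ remain active up to some threshold $\phi_0$, then the thin flow structure forces them to remain active in a right-neighborhood of $\phi_0$, contradicting the choice of $\phi_0$.

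For the base case, at $\phi = 0$ no queues have formed yet in $\exf$, so the earliest arrival times agree with the pure transit-time shortest distances. In particular $\l_{\sink_j}(0) = \delta_j$ and $T_{e_j}(\l_{\sink_j}(0)) = \delta_j + \tau_{e_j} = \delta_{\max}$ by the definition of $\tau_{e_j}$ in \eqref{eq:definition_new_arcs:mult_sink}. Since this common value is attained by every $j$, we get $\l_{\sink}(0) = \delta_{\max}$ and all new arcs are active at $\phi = 0$.

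For the inductive step, set $\phi_0 \coloneqq \inf\set{\phi \in \Flow | \text{some } e_j \notin E'_\phi}$ and suppose towards a contradiction that $\phi_0 < \infty$. Since all $\l_v$ and all $T_e$ are continuous and activity is defined by equality of continuous functions, the set of particles for which $e_j$ is active is closed, so every $e_j$ is still active at $\phi_0$ itself. Consider the thin flow $(x'(\phi), \l'(\phi))$ at a particle $\phi$ slightly larger than $\phi_0$ for which all $e_j$ still belong to $E'_\phi$ (this holds immediately right of $\phi_0$ by continuity since activity at $\phi_0$ is sharp). By \Cref{lem:thin_flow_decomp:multi_sink} the static flow decomposes so that $x'_{e_j}(\phi) = \inrate_j/\inrate > 0$, and therefore
\[ \frac{x'_{e_j}(\phi)}{\capa_{e_j}} = \frac{\inrate_j/\inrate}{\inrate_j \sigma/(2\inrate)} = \frac{2}{\sigma}. \]
By \Cref{lem:bound_for_l':multi_sink} we have $\l'_{\sink_j}(\phi) \leq 1/\sigma < 2/\sigma$, so in both cases of the definition of $\rho_{e_j}$ we obtain $\rho_{e_j}(\l'_{\sink_j}(\phi), x'_{e_j}(\phi)) = 2/\sigma$. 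Since the incoming arcs of $\sink$ in $\exG$ are precisely the $e_j$, condition \eqref{eq:l'_v_min:multi_source} applied at $\sink$ yields $\l'_\sink(\phi) = 2/\sigma$, and each new arc is tight in the sense of \eqref{eq:l'_v_tight:multi_source}.

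Translating back to exit times, on a right-neighborhood of $\phi_0$ this gives
\[ \frac{d}{d\phi}\bigl(T_{e_j}(\l_{\sink_j}(\phi)) - \l_{\sink}(\phi)\bigr) = \rho_{e_j} - \l'_\sink(\phi) = 0 \]
almost everywhere, while the difference vanishes at $\phi_0$ by activity. Hence $T_{e_j}(\l_{\sink_j}(\phi)) = \l_{\sink}(\phi)$ throughout this neighborhood for every $j$, so no $e_j$ can leave $E'$ at $\phi_0$, contradicting the definition of $\phi_0$. The main obstacle in making this rigorous is the bookkeeping at the boundary $\phi_0$: one has to justify that \Cref{lem:thin_flow_decomp:multi_sink} is applicable on a right-neighborhood of $\phi_0$, i.e., that the extension-step machinery of \cite{sering2018multiterminal} indeed produces a thin flow with all $e_j \in E'$ immediately past $\phi_0$, so that the zero-derivative argument can be carried out on a positive-measure set rather than merely at the single point $\phi_0$.
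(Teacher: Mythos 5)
Your base case matches the paper's. The inductive step, however, has a genuine gap at exactly the point you flag yourself, and the fix is not mere bookkeeping. You write that all $e_j$ belong to $E'_\phi$ ``immediately right of $\phi_0$ by continuity since activity at $\phi_0$ is sharp.'' Activity is defined by an \emph{equality} of continuous functions, so the set of particles for which $e_j$ is active is closed, not open: equality at $\phi_0$ does not persist to a right-neighborhood. Worse, since $\phi_0$ is the infimum of the set of particles for which some $e_j$ is inactive, inactive particles accumulate at $\phi_0$ from the right (unless $\phi_0$ itself is inactive), so the assumption you need is precisely the negation of what the definition of $\phi_0$ provides. Your subsequent zero-derivative computation of $T_{e_j}(\l_{\sink_j}(\phi)) - \l_{\sink}(\phi)$ is therefore circular: it is only valid on the set where the arcs are already known to be active, which is what you are trying to establish.

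The paper closes this gap with a different mechanism that you are missing, and it is the reason for the factor $2$ in $\capa_{e_j} = \inrate_j\sigma/(2\inrate)$. On $[0,\phi_0)$ all new arcs \emph{are} active (by definition of $\phi_0$), so \Cref{lem:thin_flow_decomp:multi_sink} and \Cref{lem:bound_for_l':multi_sink} give $f^+_{e_j}(\l_{\sink_j}(\phi)) = x'_{e_j}/\l'_{\sink_j} \geq (\inrate_j/\inrate)\cdot\sigma > \capa_{e_j}$, i.e., the inflow rate \emph{strictly} exceeds the capacity throughout $[0,\phi_0)$. By \Cref{lem:technical_properties:base}~\ref{it:derivative_of_q:base} the queue on each $e_j$ is strictly increasing there, hence $q_{e_j}(\l_{\sink_j}(\phi_0)) > 0$. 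A strictly positive queue is an \emph{open} condition: by continuity it persists on $[\phi_0, \phi_0+\varepsilon]$, and in the single-commodity model an arc with a positive queue is active. That is what forces activity past $\phi_0$ and yields the contradiction. Your $\rho$-computations are arithmetically fine, but without the queue argument (or some other open condition that survives crossing $\phi_0$) the proof does not go through.
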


Even though the proofs of these lemmas were given in \cite{sering2018multiterminal} we included detailed versions with the new notation in the appendix.

\paragraph{Nash flows over time decomposition.} As the final step (and as an new result), we decompose the
single-commodity Nash flow over time in $\exG$ to obtain a feasible multi-commodity flow over time in the original graph
$G$. For each thin flow phase $I = [\phi_1, \phi_2)$ with thin flow $(x'_e, \l'_e)$ and thin flow decomposition
$(x'_j)_{j \in \J}$ we set
\[f^+_{j,e}(\theta) \coloneqq \frac{x'_{j,e}}{\l'_u} \quad \text{ for } \theta \in [\l_u(\phi_1), \l_u(\phi_2)) \quad 
\text{ and } \quad  f^-_{j,e}(\theta) \coloneqq \frac{x'_{j, e}}{\l'_v} \quad 
\text{ for } \theta \in \l_v(\phi_1), \l_v(\phi_2))\]
for all $j \in \J$ and every $e = uv \in E$. Note that if $\l'_u = 0$ we have $[\l_u(\phi_1), \l_u(\phi_2)) =
\emptyset$. We call the family of functions $(f_{j, e}^+, f_{j, e}^-)$ a \defemph{Nash flow over time decomposition}.

\begin{theorem} \label{thm:nash_flows:multi_sink}
The Nash flow over time decomposition $(f_{j, e}^+, f_{j_e}^-)$ is a multi-commodity Nash flow over time in the
original network with $I_j = [0, \infty)$ for all $j \in \J$.
\end{theorem}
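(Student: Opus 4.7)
The plan is to verify, in turn, the four defining properties of a multi-commodity Nash flow over time: the per-commodity flow conservation \eqref{eq:flow_conservation:multi_commodity}, the total outflow feasibility \eqref{eq:totaloutflow:multi_commodity}, the FIFO identity \eqref{eq:FIFO:multi_commodity}, and the Nash condition \eqref{eq:Nash_condition:multi_commodity}. Two preceding results do the heavy lifting: \Cref{lem:new_arc_are_active:multi_sink} guarantees that in every phase of the underlying single-commodity Nash flow $\exf$ on $\exG$ all new arcs $e_j$ are active, so that \Cref{lem:thin_flow_decomp:multi_sink} applies and yields a decomposition $(x'_{j,e})_{j\in\J,e\in E}$ with $\sum_{j\in\J}x'_{j,e}=x'_e$ on $E$, $|x'_j|=x'_{e_j}=\inrate_j/\inrate$, and flow conservation of each $x'_j$ on $V\setminus\set{\source,\sink_j}$. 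The other conceptual ingredient is the reparametrization between the per-commodity particle index $\phi^j$ of a commodity-$j$ particle and the single-commodity index $\phi$ of the matching particle in $\exG$: since both enter $\source$ at the same time iff $\phi^j/\inrate_j=\phi/\inrate$, one has $\phi^j=\phi\,\inrate_j/\inrate$ and consequently $\l_{j,v}(\phi^j)=\l_v(\phi)$ for every $v\in V$, because waiting times in $G$ depend only on the total flow, which agrees with $\exf$ restricted to $E$.

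For conservation at an intermediate node $v\in V\setminus\set{\source,\sink_j}$ I would evaluate both the outgoing rates $f^+_{j,vw}$ and the incoming rates $f^-_{j,uv}$ at a time $\theta=\l_v(\phi)$; both share the denominator $\l'_v(\phi)$, and the numerators cancel because $x'_j$ is a static flow conserving flow at $v$. At $v=\source$ the total commodity-$j$ outflow rate equals $|x'_j|/\l'_\source=(\inrate_j/\inrate)\cdot\inrate=\inrate_j$, matching the prescribed network inflow rate, while no conservation is required at $\sink_j$. Since $\sum_{j\in\J}x'_{j,e}=x'_e$ on every phase, the totals $f^\pm_e$ coincide with the restrictions of $\exf^\pm_e$ to $E$, so the Koch--Skutella flow dynamics \eqref{eq:totaloutflow:multi_commodity} are inherited directly from the feasibility of $\exf$. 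The FIFO identity \eqref{eq:FIFO:multi_commodity} then follows from the same cancellation computation that appears in the final display of the proof of \Cref{thm:multi_source_is_multi_commodity_nash_flow:multi_source}, after plugging in the phase-wise definitions of $f^\pm_{j,e}$ and using that $e$ is active in the single-commodity sense whenever $f^+_e(\l_u(\phi))>0$.

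It remains to check the Nash condition, which, via \Cref{lem:nash_flow_characterization:multi_commodity} together with the identification $\l_{j,v}(\phi^j)=\l_v(\phi)$, reduces to establishing $F^+_{j,e}(\l_u(\phi))=F^-_{j,e}(\l_v(\phi))$ for all $e=uv\in E$, $j\in\J$ and $\phi\in\Flow$. Substituting $\theta=\l_u(\psi)$ in the defining integral and using the phase-wise identity $f^+_{j,e}(\l_u(\psi))\cdot\l'_u(\psi)=x'_{j,e}(\psi)$, valid whenever $\l'_u(\psi)>0$ and with both sides vanishing otherwise, yields $F^+_{j,e}(\l_u(\phi))=\int_0^\phi x'_{j,e}(\psi)\diff\psi$; the analogous substitution at $v$ produces exactly the same integral for $F^-_{j,e}(\l_v(\phi))$. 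The main obstacle, as I see it, is precisely the rate-scaling reparametrization between the per-commodity particle indices used in \Cref{def:Nash_flow:multi_commodity} and the uniform single-commodity indices used for $\exf$; once one has justified $\l_{j,v}(\phi^j)=\l_v(\phi)$ and noted that the per-phase definitions of $f^\pm_{j,e}$ only conflict on sets of Lebesgue measure zero at phase boundaries, the entire verification proceeds in exact parallel to \Cref{thm:multi_source_is_multi_commodity_nash_flow:multi_source} and produces a multi-commodity Nash flow over time defined on all of $\Time$.
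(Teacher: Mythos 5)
Your proposal is correct and takes essentially the same route as the paper's proof: a phase-wise verification of flow conservation, the total-outflow condition \eqref{eq:totaloutflow:multi_commodity}, the FIFO identity, and finally the Nash condition via \Cref{lem:nash_flow_characterization:multi_commodity}, all driven by the thin-flow decomposition supplied by \Cref{lem:new_arc_are_active:multi_sink,lem:thin_flow_decomp:multi_sink}. The only place you go beyond the paper is in making explicit the reparametrization $\phi^j=\phi\,\inrate_j/\inrate$ and the identity $\l_{j,v}(\phi^j)=\l_v(\phi)$, a detail the paper's proof leaves implicit when it applies the characterization lemma with the single-commodity labels $\l_u,\l_v$.
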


\proof{Proof of \Cref{thm:nash_flows:multi_sink}.}
Throughout this proof $\delta^-_v$ and $\delta^+_v$ denote the incoming and outgoing arcs of $v$ within the original network
$G$ by. Since the particles are partitioned into thin flow phases we consider each thin flow phase $\K = [\phi_1,
\phi_2)$ separately. Let $(x', \l')$ be the corresponding thin flow with thin flow decomposition $(x'_j)_{j \in \J}$.
Furthermore, we denote the interval of local times of particles in $\K$ by $\K_v \coloneqq [\l_v(\phi_1),\l_v(\phi_2))$
for every node $v$.

First, we have to show that the flow over time decompositions form a feasible multi-commodity flow over time. For every
$j \in \J$ and every node $v \in V \setminus \set{\sink_j}$ and all $\theta \in \K_v$ the flow conservation condition
\eqref{eq:flow_conservation:multi_commodity} holds since
\begin{align*}
\sum_{e\in\delta^-_v} f^-_{j, e}(\theta) - \sum_{e\in\delta^+_v} f^+_{j, e}(\theta) &= \sum_{e\in\delta^-_v} 
\frac{x'_{j, e}}{\l'_v} - \sum_{e\in\delta^+_v} \frac{x'_{j,e}}{\l'_v} \\
&= \frac{1}{\l'_v} \cdot \left(
\sum_{e\in\delta^-_v}
x'_{j, e} - \sum_{e\in\delta^+_v} x'_{j,e} \right)
= \begin{cases}
0 & \text{ if } v \in V \setminus \set{\source},\\
\frac{\inrate_j}{\inrate \cdot \l'_{\source}}  = \inrate_j & \text{ if } v = \source.
\end{cases}
\end{align*}
   
Furthermore, for $x'_e > 0$ we obtain for all $\theta \in \K_u$ and $\vartheta = \T_e(\theta) \in \K_v$ that
\[f^-_{j, e}(\theta) = \frac{x'_{j,e}}{\l'_v} = \frac{x'_e}{\l'_v} \cdot
\frac{x'^j_e}{\l'_u} \cdot \frac{\l'_u}{x'_e} = f^-_e(\theta) \cdot \frac{f^+_{j, e}(\vartheta)}{f_e^+(\vartheta)}\]
and for $x'_e = 0$ we have $x'_{j,e} = 0$, which implies $f^-_{j, e}(\theta) = 0$. This shows that
\eqref{eq:FIFO:multi_commodity} is satisfied.
  
\Cref{eq:totaloutflow:multi_commodity} holds since the total flow is a feasible
flow over time, and therefore, we have a feasible multi-commodity flow over time.
  
Finally, since
\[F^+_{j, e}(\l_u(\phi)) = \int_0^{\phi} f_{j, e}^+(\l_u(\xi)) \cdot \l'_u(\xi) \diff \xi 
= \int_0^\phi x'^j_e(\xi) \diff \xi = \int_0^{\phi} f_{j, e}^-(\l_v(\xi)) \cdot \l'_v(\xi) \diff \xi 
= F^-_{j, e}(\l_v(\phi))\]
we obtain by \Cref{lem:nash_flow_characterization:multi_commodity} that $(f_{j, e}^+, f_{j_e}^-)$ is indeed a
multi-commodity Nash flow over time.
\hfill\Halmos\endproof

\section{Conclusion and further research.} \label{sec:conclusion}

Dynamic equilibria, called Nash flows over time, in a single-commodity setting were already understood quite well,
but unfortunately, they are highly unrealistic for most real-world traffic scenarios, where each traffic user has a personal origin and destination. In order to attack this drawback
we considered dynamic equilibria in the multi-commodity version of this problem.
Unfortunately, the essential property of a global FIFO principle, as it is given in the single-commodity setting, does not hold for
these scenarios anymore, and therefore, it is not possible to extend multi-commodity Nash flows over time step by step.
Instead, we have to consider all infinitesimally small players at the same time as the choice of early particles depends
not only on all the flow already in the network but may also depend on all future flow. Even though we could not use any
of the techniques of the single-commodity model, we were still able to show the existence of dynamic equilibria in this
multi-commodity setting with the help of infinite dimensional-variational inequalities. Since this was known before, the
main contribution of this paper is the structural insight into these Nash flows over time, as we could show that their derivatives
have to satisfy a set of conditions similar to the thin flow equations introduced by Koch and Skutella \cite{koch2011nash}. The major difference to the single-commodity
case is that we cannot consider each thin flow isolated anymore, but instead, we have to take into account the flow of
the other commodities (the so-called foreign flow), and therefore, we have to consider all flow from the past and the
future simultaneously. Unfortunately, this still does not give a clear instruction on how to construct Nash flows over
time with multiple commodities algorithmically, however, for the special case that all commodities share the same origin the problem of
constructing a Nash flow over time reduces to the single-commodity case. The same holds true for the other extreme case,
that every particle can start at multiple origins but they all share a common destination.

\new{
\paragraph{Open problems.}
Even though this is a important step for the Nash flow over time model, this is by no means a complete theory yet.
Therefore, we want to give an outlook on further research on this topic.}

\new{First of all, it would be very interesting to better understand the structure of the multi-commodity thin flows. As
mentioned in \Cref{remark:piecewise_constant_thin_flows} we conjecture that they always consists of piece-wise constant
functions $x'_e$ and $\l'_v$ but it would be great to have a proof for this.}

\new{Furthermore, the arrival times of a single-commodity Nash flow over time are in some sense unique (to be precise:
this is only proven for right-continuous functions; see \cite[Theorem 6]{cominetti2015dynamic}). It is an open question
whether this is also the case for multi-commodity Nash flows over times. There are several approaches possible to prove
or disprove this.  For example, there are theorems that states that under some conditions the solution space of a
variational inequality is convex. It is conceivable that this could be helpful to show the uniqueness of the arrival
time functions. In the case that they are not unique it would be sufficient to give a counter-example. So it might be worth to consider more complex examples.} 

\new{An interesting property of every routing game with selfish player is the price of anarchy, which measures the quality
of an worst equilibrium to the optimal flow over time (here in terms of the time the last particle arrives at its destination). For the single-commodity model this is conjectured to be
$\frac{e}{e-1}$ (see \cite{correa2019price}) but it is an completely open question for the multi-commodity setting.}

\new{Clearly, Nash flows over time, and in particular multi-commodity flows over time, remain an interesting but
challenging research field.} 

\newpage

\begin{APPENDIX}{Technical proofs and utility lemmas.}

\technicalpropertiesbase*
\proof{Proof of \Cref{lem:technical_properties:base}.} \phantom{a} \label{proof:technical_properties:base}
\begin{enumerate}
\item[\ref{it:q_equiv_z:base}] This follows directly by the definition of $q_e$.

\item[\ref{it:positive_queue_while_emptying:base}] By \eqref{eq:totaloutflow:multi_commodity} we have that $f_e^-(\xi) \leq \capa_e$
almost everywhere. Since $F_e^+$ is monotonically increasing, we obtain for $0 \leq \xi < q_e(\theta)$ that
\[z_e(\theta + \tau_e\! + \xi) = F_e^+(\theta + \xi) - F_e^-(\theta + \tau_e \! + \xi) 
\geq F_e^+(\theta) - F_e^-(\theta + \tau_e) - \xi \cdot \capa_e > z_e(\theta + \tau_e) - q_e(\theta) \cdot \capa_e = 0.\] 
 
\item[\ref{it:in_equals_out_at_exit_time:base}] Again by \eqref{eq:totaloutflow:multi_commodity} together with
\ref{it:positive_queue_while_emptying:base} we obtain for almost all $\xi \in [\theta + \tau_e, \theta + \tau_e +
q_e(\theta))$ that $f_e^-(\xi) = \capa_e$. Hence,
\[F_e^-(T_e(\theta)) = F_e^-(\theta + \tau_e) + q_e(\theta) \cdot \capa_e = F_e^-(\theta + \tau_e) + z_e(\theta + \tau_e) 
= F_e^+(\theta).\]

\item[\ref{it:equal_exit_times:base}] Intuitively, this holds true since whether a particle enters the queue at time
$\theta_1$ or $\theta_2$ does not influence the exit time, as long as no other flow enters the queue during the interval
$[\theta_1, \theta_2]$ and as long as the queue does not deplete during this time. Formally, this follows since
\[z_e(\xi + \tau_e) = F_e^+(\xi) - F_e^-(\xi + \tau_e) \geq F_e^+(\theta_2) - F_e^-(\theta_2 + \tau_e) 
= z_e(\theta_2 + \tau_e) > 0,\]
and therefore $f_e^-(\xi + \tau_e) = \capa_e$ for almost all $\xi \in [\theta_1, \theta_2]$. Thus,
\begin{align*}
T_e(\theta_1) &= \theta_1 + \tau_e + \frac{F_e^+(\theta_1) - F_e^-(\theta_1 + \tau_e)}{\capa_e} \\
&= \theta_1 + \tau_e +\frac{F_e^+(\theta_2) - F_e^-(\theta_2 + \tau_e) + (\theta_2 - \theta_1) \cdot \capa_e}{\capa_e} 
= T_e(\theta_2).\end{align*}

\item[\ref{it:T_monoton:base}] Consider two points in time $\theta_1 < \theta_2$. By \eqref{eq:totaloutflow:multi_commodity} we have
that $f_e^-(\xi) \leq \capa_e$ almost everywhere, and therefore $F_e^-(\theta_2) - F_e^-(\theta_1) \leq  (\theta_2 -
\theta_1) \cdot \capa_e$. Since $F_e^+$ is monotonically increasing we obtain
\begin{align*}T_e(\theta_1) &= \theta_1 + \tau_e + \frac{F_e^+(\theta_1 - \tau_e) - F_e^-(\theta_1)}{\capa_e} \\
&\leq \theta_1 + \tau_e + \frac{F_e^+(\theta_2 - \tau_e) - F_e^-(\theta_2) + (\theta_2 - \theta_1) \cdot \capa_e}{\capa_e} 
= T_e(\theta_2).\end{align*}

\item[\ref{it:q_is_diffbar:base}] Since $f_e^+$ and $f_e^-$ are locally integrable, Lebesgue's differentiation theorem yields that the integral functions $F_e^+$ and $F_e^-$ are almost everywhere
differentiable. As summation and scaling preserve this property we have that $z_e$, $q_e$ and $T_e$ are almost
everywhere differentiable as well.

\item[\ref{it:derivative_of_q:base}] For almost all $\theta \in \Time$ we have by \eqref{eq:totaloutflow:multi_commodity} that
\[z'_e(\theta + \tau_e) = f_e^+(\theta) - f_e^-(\theta + \tau_e) = \begin{cases}
f_e^+(\theta) - \capa_e & \text{ if } z_e(\theta + \tau_e) > 0,\\
\min\Set{0, f_e^+(\theta) - \capa_e} & \text{ else.}
\end{cases}\]
The claim follows immediately by using \ref{it:q_equiv_z:base}.
\end{enumerate}
\hfill\Halmos\endproof

\nashflowcharacterizationmulticommdoity*
\proof{Proof of \Cref{lem:nash_flow_characterization:multi_commodity}.} \label{proof:nash_flow_chracterization:multi_commodity}
\ref{it:nash_flow:multi_commodity}$\Rightarrow$\ref{it:in_equals_out_at_l:multi_commodity}: Let $\xi \in [0, \phi]$ be
maximal with~$F_{j, e}^+(\l_{j,u}(\xi)) = F_{j,e}^-(\l_{j, v}(\phi))$. Such particle~$\xi$ exists due to the
intermediate value theorem, together with the fact that $F_{j, e}^+ \circ \l_{j, u}$ is continuous
and with the following inequality, which follows by the monotonicity of $F_{j, e}^-$ and
\Cref{lem:fifo:multi_commodity}:
\[F_{j, e}^+(\l_{j, u}(0)) = 0 \quad \leq \quad F_{j, e}^-(\l_{j, v}(\phi))\quad \leq \quad 
F_{j, e}^-(\T_e(\l_{j, u}(\phi)))  = F_{j, e}^+(\l_{j, u}(\phi)).\] 
Note that the second inequality holds because of~$\l_{j, v} (\phi) \leq \T_{j, e}(\l_{j, u}(\phi))$. In the case of $\xi
= \phi$ we are done, so suppose~$\xi < \phi$. For all particles $\varphi \in (\xi, \phi]$ we know that~$\T_e(\l_{j,
u}(\varphi)) \not= \l_{j, v}(\phi)$, since otherwise, we had with \Cref{lem:fifo:multi_commodity} that $F_{j,
e}^+(\l_{j, u}(\varphi)) = F_{j, e}^-(\T_e (\l_{j, u}(\varphi))) = F_{j, e}^-(\l_{j, v}(\phi))$, which would contradict
the maximality of~$\xi$. Hence, $e$ is not active for particles in $(\xi, \phi]$ which implies $f^+_{j, e}(\theta) = 0$
for almost all $\theta \in \l_{j, u}((\xi, \phi]) = (\l_{j, u}(\xi),\l_{j, u}(\phi)]$ by
\eqref{eq:Nash_condition:multi_commodity}. This leads to
\[F_{j, e}^+(\l_{j, u}(\phi)) - F_{j, e}^-(\l_{j, v}(\phi)) = F_{j, e}^+(\l_{j, u}(\phi)) - F_{j, e}^+(\l_{j, u}(\xi)) = 
\int_{\l_{j, u}(\xi)}^{\l_{j, u}(\phi)} f_{j, e}^+(\vartheta) \diff \vartheta = 0,\]
which shows \ref{it:in_equals_out_at_l:multi_commodity}.

\ref{it:in_equals_out_at_l:multi_commodity}$\Rightarrow$\ref{it:nash_flow:multi_commodity}: Consider a particle $\phi$
and an arc $e = uv$ such that $e$ is not active for $\phi$ and $j$, in other words, $\l_{j, v}(\phi) < \T_e(\l_{j,
u}(\phi))$. Then, the continuity of $\l_{j, v}$ and $\T_e \circ \l_{j, u}$ implies that there exists an $\epsilon > 0$
with $\l_{j, v}(\phi + \epsilon) < \T_e(\l_{j, u}(\phi - \epsilon))$ and that $e$ is not active for all particles
in~$[\phi - \epsilon, \phi + \epsilon]$ and $j$. This, the fact that $f_{j, e}^+$ and~$f_{j, e}^-$ are non-negative and
\Cref{lem:fifo:multi_commodity} gives us
\begin{align*} 0 \leq \int_{\l_{j, u}(\phi-\epsilon)}^{\l_{j, u}(\phi+\epsilon)} f_{j, e}^+(\xi) \diff \xi
  &= \int_{\T_e(\l_{j, u}(\phi-\epsilon))}^{\T_e(\l_{j, u}(\phi+\epsilon))} f_{j, e}^-(\xi) \diff \xi\\
  &\leq \int_{\l_{j, v}(\phi + \epsilon)}^{\T_e(\l_{j, u}(\phi+\epsilon))} f_{j, e}^-(\xi) \diff \xi \\
  &= F_{j, e}^-(\T_e(\l_{j, u}(\phi+\epsilon))) - F_{j, e}^-(\l_{j, v}(\phi + \epsilon))\\
  &= F_{j, e}^+(\l_{j, u}(\phi+\epsilon)) - F_{j, e}^-(\l_{j, v}(\phi + \epsilon))\\
  &\hspace{-0.3mm}\stackrel{\text{\ref{it:in_equals_out_at_l:multi_commodity}}}{=} 0.
\end{align*}    
Hence, $f_{j, e}^+(\theta) = 0$ for almost all~$\theta \in [\l_{j, u}(\phi - \varepsilon), \l_{j, u}(\phi +
\varepsilon)]$. In other words, for almost all $\theta \in \Time$ it holds that~$\theta \notin \l_{j,u}(\Phi_{j,e})
\Rightarrow f_{j, e}^+(\theta) = 0$. This is true because for $\theta \geq \l_{j, u}(0)$ we find a particle $\phi$ with
$\l_{j, u}(\phi) = \theta$, due to the fact that $\l_{j, u}$ is surjective. This shows that $f$ is a Nash flow over
time, which finishes the proof.
\hfill\Halmos\endproof

\lemthinflowsdecompositionmultisink*
\proof{Proof of \Cref{lem:thin_flow_decomp:multi_sink}.}
  Let $\Paths$ be the set of all $\source$-$\sink$-paths in the current shortest paths network~$G' = (V, E')$. Note that
  $G'$ is always acyclic and $x'$ can therefore be described by the path vector~$(x'_P)_{P\in \Paths}$ due to the flow
  decomposition theorem. For all
  $j\in \J$ let $\Paths_j$ be the set of all $\source$-$\sink$-paths that contain~$e_j$. These sets form a partition of
  $\Paths$ since every path has to use exactly one of the new arcs. By setting $x'_j \coloneqq \sum_{P \in \Paths_j}
  x'_P\big|_E$ we obtain the desired decomposition of $x'$, because $x'_P\big|_E$ for $P\in \Paths_j$ conserves flow on
  all nodes except for the ones in~$\Set{\source, \sink_j}$ and the same is true for sums of these path flows.
  
  Since $x'_j$ sends $\abs{x'_j}$ flow units from $\source$ over $e_j$ to $\sink_j$ we have $\abs{x'_j} = x'_{e_j}$. It
  remains to show that $x'_{e_j} = \frac{\inrate_j}{\inrate}$ for all~$j \in \J$. Suppose that this is not true. Since
  $x'$ sends exactly $1 = \sum_{j \in \J} \frac{\inrate_j}{\inrate}$ flow units from $\source$ to $\sink$, there has to
  be an index $a \in \J$ with $x'_{e_a} > \frac{\inrate_a}{\inrate}$ and an index $b \in \J$ with~$x'_{e_b} <
  \frac{\inrate_b}{\inrate}$.
  
  With \Cref{lem:bound_for_l':multi_sink} it follows that
  \[\l'_{\sink_b} \leq \frac{1}{\sigma} \stackrel{\eqref{eq:definition_new_arcs:mult_sink}}{<}
  \frac{\inrate_a}{\inrate \cdot \capa_{e_a}} < \frac{x'_{e_a}}{\capa_{e_a}} \stackrel{\eqref{eq:l'_v_tight:multi_source}}{\leq}
  \l'_{\sink} \quad \text{ and } \quad
  \frac{x'_{e_b}}{\capa_{e_b}} \stackrel{\eqref{eq:definition_new_arcs:mult_sink}}{=} 
  \underbrace{\frac{x'_{e_b} \cdot \inrate}{\inrate_b}}_{< 1} \cdot
  \frac{2}{\sigma} < \underbrace{\frac{x'_{e_a}\cdot \inrate}{\inrate_a}}_{> 1} \cdot \frac{2}{\sigma}
  \stackrel{\eqref{eq:definition_new_arcs:mult_sink}}{=} \frac{x'_{e_a}}{ \capa_{e_a}}
  \stackrel{\eqref{eq:l'_v_tight:multi_source}}{\leq} \l'_{\sink}.\]
  
  But this is a contradiction, since \eqref{eq:l'_v_min:multi_source} yields that $\l'_{\sink} = \min\limits_{j \in \J}
  \rho_{e_j}(\l'_{\sink_j}, x'_{e_j})$ and the last two equations show that $\rho_{e_b}(\l'_{\sink_b}, x'_{e_b}) <
  \l'_{\sink}$. Hence, we have $x'_{e_j} = \frac{\inrate_j}{\inrate}$ for all~$j \in \J$, which finishes the proof.
\hfill\Halmos\endproof

\lemnewarcareactivemultisink*
\proof{Proof of \Cref{lem:new_arc_are_active:multi_sink}.}
  For particle $\phi = 0$ there are no queues yet, and therefore, the exit time for each arc $e$ is~$\T_e(\theta) =
  \theta + \tau_e$. Hence, $\l_{\sink_j}(0) = \delta_j$ and by construction we have for all $j\in \J$ that
  \[\l_{\sink}(0) = \l_{\sink_j}(0) + \tau_{e_j} = \T_{e_j}(\l_{\sink_j}(0)).\]
  Therefore, all arcs $e_j$ are active at the beginning and also during the first thin flow phase because by
  \Cref{lem:thin_flow_decomp:multi_sink} we have $x'_{e_j} > 0$ for the first thin flow with resetting, which implies
  that $e_j$ stays active.
  
  Suppose now for contradiction that there are particles for which not all new arcs are active. Let $\phi_0$ be the
  infimum of these particles. By the consideration above we have $\phi_0>0$ and
  \Cref{lem:bound_for_l':multi_sink,lem:thin_flow_decomp:multi_sink} imply that
  \[f^+_{e_j}(\l_{\sink_j}(\phi))=\frac{x'_{e_j}}{\l'_{\sink_j}}
   \geq x'_{e_j} \cdot \sigma = \frac{\inrate_j}{\inrate} \cdot \sigma
   \stackrel{\eqref{eq:definition_new_arcs:mult_sink}}{>} \capa_{e_j}\]
   for almost all $\phi \in [0, \phi_0)$ and all $j\in \J$. Hence,
   \Cref{lem:technical_properties:base}~\ref{it:derivative_of_q:base} together with the fact that $\l'_{t_j} > 0$ (due
   to the positive throughput of $x'$ at $t_j$) yields
  \[\frac{\diff}{\diff \phi}  q_{e_j}(\l_{\sink_j}(\phi)) = q'_{e_j}(\l_{\sink_j}(\phi)) \cdot \l'_{\sink_j}(\phi) > 0.\]
  In other words, a queue is building up within $[0,\phi_0)$, and therefore, $q_{e_j}(\l_{\sink_j}(\phi_0)) > 0$ for all
  $j \in \J$. The continuity of $q_{e_j} \circ \l_{\sink_j}$ implies that there will be positive queues for all $\phi
  \in [\phi_0, \phi_0 + \varepsilon]$ for sufficiently small $\varepsilon > 0$. Since arc with positive queues are
  always active in the single-commodity case, all new arcs are active during this interval contradicting the existence
  of $\phi_0$.
\hfill\Halmos\endproof

\begin{lemma}[Differentiation rule for a minimum]\label{lem:diff_rule_for_min:pre}
For every element $e$ of a finite set $E$ let $T_e \colon \R_{\geq 0} \rightarrow \R$ be a function that is
differentiable almost everywhere and let $\l(\theta) \coloneqq \min_{e \in E} T_e(\theta)$ for all $\theta \geq 0$. It holds
that $l$ is almost everywhere differentiable with
\begin{equation} \label{eq:diff_rule_for_min:pre}
\l'(\theta) = \min_{e \in E'_\theta} T'_e(\theta)
\end{equation}
for almost all $\theta \geq 0$ where $E'_\theta \coloneqq \set{e \in E| \l(\theta) = T_e(\theta)}$.
\end{lemma}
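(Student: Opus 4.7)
The plan is to split the argument into two steps: first show that $\l$ is differentiable almost everywhere, and second identify its derivative with the claimed formula at each point of joint differentiability. The second step will turn out to be a one-line consequence of Fermat's first-order condition, while all the real work is concentrated in the first step.

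For the first step I would exploit the additional structure that the $T_e$ carry in every invocation of this lemma inside the paper: in the proofs of \Cref{thm:derivatives_form_thin_flows:multi_commodity} and \Cref{thm:thin_flows_form_Nash_flows:multi_commodity} the relevant $T_e$ arise as compositions of exit-time functions with earliest-arrival-time functions, all of which are monotonically increasing (see \Cref{lem:technical_properties:base}). The pointwise minimum of finitely many monotone functions is again monotone, so Lebesgue's differentiation theorem for monotone functions yields that $\l$ is differentiable almost everywhere.

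For the second step, let $N$ be the union of $\{\theta \geq 0 : \l \text{ is not differentiable at } \theta\}$ with $\bigcup_{e \in E}\{\theta \geq 0 : T_e \text{ is not differentiable at } \theta\}$, which has Lebesgue measure zero by the first step together with the hypothesis and the finiteness of $E$. Fix $\theta \notin N$ and any $e \in E'_\theta$. The function $g \coloneqq T_e - \l$ satisfies $g(\tau) \geq 0$ for every $\tau \geq 0$ and $g(\theta) = 0$, so $\theta$ is a global minimum of $g$. Since $g$ is differentiable at $\theta$, Fermat's first-order condition forces $g'(\theta) = 0$, i.e.\ $T'_e(\theta) = \l'(\theta)$. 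Hence $T'_e(\theta)$ takes the common value $\l'(\theta)$ for \emph{every} $e \in E'_\theta$, and in particular
\[\l'(\theta) = \min_{e \in E'_\theta} T'_e(\theta),\]
which is exactly \eqref{eq:diff_rule_for_min:pre}.

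The main obstacle is the first step: a.e.\ differentiability of the minimum does not in general follow from a.e.\ differentiability of the individual $T_e$, so the proof tacitly depends on the monotonicity (or at least bounded variation on compact intervals) that the $T_e$ enjoy in every invocation of this lemma in the paper. Once this regularity is in place, the formula itself is immediate from Fermat's theorem as sketched above.
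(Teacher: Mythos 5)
Your Step 2 is correct and is in fact a slicker identification of the derivative than the paper's direct computation: since $T_e - \l \geq 0$ everywhere with equality at $\theta$ for each active $e$, Fermat's condition forces $T'_e(\theta) = \l'(\theta)$ for \emph{every} $e \in E'_\theta$ at any point where $\l$ and the $T_e$ are all differentiable, so the minimum in \eqref{eq:diff_rule_for_min:pre} is degenerate there. The genuine gap is in your Step 1. You assert that almost-everywhere differentiability of $\l$ does not follow from that of the $T_e$ and therefore import monotonicity from the lemma's applications. This leaves the lemma \emph{as stated} unproven, and it rests on a false premise: for a \emph{finite} family the implication does hold with no monotonicity or bounded-variation assumption. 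Fix $\theta$ outside the null set $N$ on which some $T_e$ fails to be differentiable. Each $T_e$ is then continuous at $\theta$, so the inactive functions stay strictly above the active ones on a neighbourhood of $\theta$ and $\l = \min_{e \in E'_\theta} T_e$ there. Exchanging the limit with the finite minimum (resp.\ maximum, for increments from the left) shows that the right derivative of $\l$ at $\theta$ equals $\min_{e \in E'_\theta} T'_e(\theta)$ and the left derivative equals $\max_{e \in E'_\theta} T'_e(\theta)$. These can differ only if two active arcs satisfy $T_e(\theta) = T_{e'}(\theta)$ and $T'_e(\theta) \neq T'_{e'}(\theta)$; any such $\theta$ is an isolated point of the coincidence set $\{\,T_e = T_{e'}\,\}$, so the union of these exceptional points over the finitely many pairs is countable, hence null.

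This is precisely the paper's argument (its remark that ``proper crossings'' are isolated is the countability step above), and outside $N$ and that countable set $\l$ is differentiable with \eqref{eq:diff_rule_for_min:pre}. So as written your proof only establishes a weaker lemma carrying an extra hypothesis that the statement does not make, and the ``main obstacle'' you identify is not an obstacle. If you replace your Step 1 by the localization-plus-countability argument, your Fermat step then completes a correct (and arguably cleaner) proof of the lemma in full generality.
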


\begin{proof}
Let $\phi \geq 0$ such that all $T_e$, for all $e \in E$, are differentiable, which is almost everywhere.
Since all functions $T_e$ are continuous at $\phi$ we have for sufficiently small $\epsilon > 0$ that $\l(\phi +
\xi) = \min_{e \in E'_\phi} T_e(\phi + \xi)$ for all $\xi \in [\phi, \phi + \epsilon]$. It follows that

\[\lim_{\xi \,\searrow\, 0} \frac{\l(\phi + \xi) - \l(\phi)}{\xi}
= \lim_{\xi \,\searrow\, 0}  \min_{e \in E'_\phi} \frac{T_e(\phi + \xi) - \l(\phi)}{\xi}
= \min_{e \in E'_\phi} \lim_{\xi \,\searrow\, 0} \frac{T_e(\phi + \xi) - T_e(\phi)}{\xi}
= \min_{e \in E'_\phi} T'_e(\phi).\]
Note that every point $\phi$ where all $T_e$ are differentiable, but for which the left derivative of $\l$ does not
coincide with the right derivative of $\l$, is a proper crossing of at least two $T_e$ functions. Therefore, these
points are isolated and form a null set. Hence, we have $\l'(\phi) = \min_{e \in E'_\phi} T'_e(\phi)$ for almost
all~$\phi \in \R_{\geq 0}$.
\end{proof}

\end{APPENDIX}

\section*{Acknowledgments.}
I want to thank my supervisor Martin Skutella, not only for introducing me to this exiting topic of Nash flows over time, but in particular, for giving me advices for the multi-terminal Nash flows over time and for taking the time to listen to all my new ideas and progress.


\bibliographystyle{informs2014} 
\bibliography{literature} 


\end{document}